\theoremstyle{plain}
\newtheorem{theorem}{Theorem}
\newtheorem{lemma}{Lemma}
\newtheorem{proposition}{Proposition}
\newtheorem{corollary}{Corollary}
\theoremstyle{definition}
\newtheorem{assumption}{Assumption}
\newtheorem{definition}{Definition}
\theoremstyle{remark}
\newtheorem{remark}{Remark}
\newtheorem{example}{Example}
\newcommand{\indep}{\perp \!\!\! \perp}
\begin{document}
\title{Robust Identification in Randomized Experiments with Noncompliance}
\author{
D\'{e}sir\'{e} K\'{e}dagni \thanks{Corresponding author, email address: \href{mailto: dkedagni@unc.edu}{dkedagni@unc.edu}}, \ \
Huan Wu \thanks{Email address: \href{mailto: huan.wu@unc.edu}{huan.wu@unc.edu}}, \ \ and \ 
Yi Cui \thanks{Email address: \href{mailto: yicui@unc.edu}{yicui@unc.edu}}
,\ \
\\
\textit{University of North Carolina at Chapel Hill}
}
\date{This version:\thanks{We are grateful to Felipe Goncalves, Ismael Mourifi\'e, Claudia Noack, Vira Semenova, and Kaspar Wuthrich for their helpful comments. We are also grateful to participants at the UNC-Chapel Hill Econometrics Workshop, and the 2024 Midwest Econometrics Group conference. First version: August 6, 2024.} \ \today.
}
\maketitle
\onehalfspacing

\begin{abstract}
Instrument variable (IV) methods are widely used in empirical research to identify causal effects of a policy. In the local average treatment effect (LATE) framework, the IV estimand identifies the LATE under three main assumptions: random assignment, exclusion restriction, and monotonicity. However, these assumptions are often questionable in many applications, leading some researchers to doubt the causal interpretation of the IV estimand. This paper considers a robust identification of causal parameters in a randomized experiment setting with noncompliance where the standard LATE assumptions could be violated. 
We discuss identification under two sets of weaker assumptions: random assignment and exclusion restriction (without monotonicity), and random assignment and monotonicity (without exclusion restriction). 
We derive sharp bounds on some causal parameters under these two sets of relaxed LATE assumptions. Finally, we apply our method to revisit the random information experiment conducted in \cite{bursztyn2020misperceived} and find that the standard LATE assumptions are jointly incompatible in this application. We then estimate the robust identified sets under the two sets of relaxed assumptions.


    
\end{abstract}

 \maketitle
{\footnotesize \textbf{Keywords}: Instrumental variable, monotonicity, exclusion restriction, local average controlled direct effect.  

\textbf{JEL subject classification}: C14, C31, C35, C36.}

\clearpage

\section{Introduction}
\label{sec:intro}

Researchers are often concerned about the credibility of their assumptions and the conclusions they reach \citep[see \textit{The Law of Decreasing Credibility} in][]{manski2003partial, manski2011policy}. When the stated assumptions are refuted by the data, analysts usually resort to weaker versions of the assumptions. There may exist multiple ways of relaxing a set of assumptions when they are rejected by the data. Some researchers may choose to relax the assumptions in a continuous way \citep[as in][]{Masten2021SalvagingModels}, while others may consider discrete relaxations by dropping some of the assumptions \citep[as in][]{li2024discordant}. In this paper, we consider a robust identification of causal parameters in a randomized experiment setting with noncompliance where the standard local average treatment effect (LATE) assumptions could be violated. We follow \citeauthor{li2024discordant}'s (\citeyear{li2024discordant}) approach to propose a misspecification robust bound for a vector-valued parameter of various causal parameters. 

Three main assumptions (random assignment, exclusion restriction, and monotonicity) are commonly used in the LATE framework. When these assumptions are jointly rejected, researchers could choose to relax some of the assumptions if they have some additional information on which assumptions may be causing the rejection. For example, \cite{Kedagni2023IdentifyingTypes} proposes an identification strategy that allows for violations of the random assignment assumption, \cite{deChaisemartin2017ToleratingMonotonicity}, \cite{Huber2017SharpNoncompliance}, \cite{small2017instrumental}, \cite{Noack2021SensitivityAssumption}, \cite{dahl2023never}, among others have proposed different ways of relaxing the monotonicity assumption, while \cite{cai2008bounds} and \cite{flores2013partial} allow for violations of the exclusion restriction assumption. But these relaxations may appear arbitrary in some circumstances unless the analyst has an unambiguous reason to convince the scientific community that she knows exactly which assumptions are the source of the rejection in the data. 

The current paper aims at developing a unifying approach that would provide the maximal combination of the three assumptions that are compatible with the data. We consider a randomized experiment setting where the random assignment assumption is maintained. We discuss identification under two sets of weaker assumptions: random assignment and exclusion restriction (without monotonicity), and random assignment and monotonicity (without exclusion restriction). 

We first introduce two causal parameters: the local average treatment-controlled direct effect (LATCDE), and the local average instrument-controlled direct effect (LAICDE). We define a twenty-dimensional real-valued vector parameter of interest that contains the eight LATCDEs, the eight LAICDEs, and the four types' probabilities. Second, we derive sharp bounds for the twenty parameters under respectively (i) random assignment, exclusion restriction, and monotonicity; (ii) random assignment, and exclusion restriction; and (iii) random assignment, and monotonicity. Specifically, under the random assignment and monotonicity assumptions, we derive sharp bounds on the local average treatment-controlled direct effects for the always-takers and never-takers, respectively, and the total average controlled direct effect for the compliers. Additionally, we show that the intent-to-treat effect can be expressed as a convex weighted average of these three effects. Third, we propose a misspecification robust bound for the vector parameter of interest under the three possible sets of assumptions. Finally, we apply our method to analyze the randomized information experiment conducted by \cite{bursztyn2020misperceived}. We find that the LATE assumptions are jointly rejected in this setting. However, when either the monotonicity or exclusion restriction is dropped, the remaining two assumptions cannot be rejected. We provide estimates of identified sets for our parameters of interest, which are robust to violations of monotonicity or exclusion assumptions.




The remainder of the paper is organized as follows. Section \ref{analy} presents the model, the assumptions, and the parameters of interest. In Sections \ref{sec:late:ass}, \ref{sec:ra_er:ass}, and \ref{sec:ra_mon:ass}, we derive bounds on the parameters under respectively (i) random assignment, exclusion restriction, and monotonicity; (ii) random assignment, and exclusion restriction; and (iii) random assignment, and monotonicity. Section \ref{sec:mrb} presents the misspecification robust bound, Section \ref{sec:emp} shows the empirical illustration, and finally Section \ref{sec:summary} concludes. The proofs of the main results are relagated to the appendix from the supplementary material.    

\section{Analytical framework}\label{analy}
We consider a potential outcome model with a binary treatment and a binary instrumental variable. Let $D \in \{0, 1\}$ be the observed treatment, where $D = 1$ indicates receiving the treatment and $D = 0$ not receiving it. Let $Y$ denote the observed outcome, and $Z$ denote a binary instrumental variable, which takes values in $\mathcal{Z} = \{0, 1\}$. Following the convention, we use $D_0$ and $D_1$ to represent the potential treatment status, where $D_z$ denotes the potential treatment if the instrument $Z$ is externally set to $z$. Also, $Y_{dz}$ represents the potential outcome if the treatment and instrument are externally set to $d$ and $z$, respectively. Then, we have the following potential outcome model,
\begin{equation} \label{eq:basic_model}
    \left\{\begin{array}{l}
        Y = \left(Y_{11}Z + Y_{10}(1 - Z)\right) D + \left(Y_{01}Z + Y_{00}(1 - Z)\right) (1 - D), \\
        D = D_1 Z + D_0 (1 - Z),
        \end{array}\right.
\end{equation}
where $(Y, D, Z)$ is the observed data, $\left(Y_{00}, Y_{01}, Y_{10}, Y_{11}, D_0, D_1\right)$ is a vector of latent variables. 

The vector $\left(D_0, D_1\right)$ represents the four unobserved groups commonly known in the treatment effect literature as the \textit{types} in the language of \cite{Angrist1996IdentificationVariables}. We have
$$
\begin{array}{ll}
\left(D_0=1, D_1=1\right): \text { always-takers }(a), & \left(D_0=0, D_1=0\right): \text { never-takers }(n), \\
\left(D_0=0, D_1=1\right): \text { compliers }(c), & \left(D_0=1, D_1=0\right): \text { defiers }(df) .
\end{array}
$$
Let $T$ denote the random type of an individual with support $\{a, c, n, d f\}$. Let $p_t$ denote the proportion of type $t$ in the population. Thus, we have the corresponding probabilities for these four types (strata/groups) of the population: $p_n$, $p_{df}$, $p_c$, and $p_a$.
We define our parameters of interest: 
\begin{eqnarray*}
    \theta_{zt} &\equiv& \mathbb E[Y_{1z}-Y_{0z} \mid T=t]\ \ \text{(local average instrument-controlled direct effect, LAICDE)},\\
 \delta_{dt} &\equiv& \mathbb E[Y_{d1}-Y_{d0} \mid T=t]\ \ \text{(local average treatment-controlled direct effect, LATCDE)},
\end{eqnarray*}
for $z \in\{0,1\}$, $d \in\{0,1\},$ and $t\in\{a,c,df,n\}$. We have eight $\theta$-parameters, and eight $\delta$-parameters. Combined with the four types' probabilities $p_t$, we define a $20 \times 1$ column vector $\Gamma$ of our parameters of interest,   
$$\Gamma\equiv [\theta_{0a}, \theta_{1a}, \theta_{0c}, \theta_{1c}, \theta_{0df}, \theta_{1df}, \theta_{0n}, \theta_{1n}, \delta_{0a}, \delta_{1a}, \delta_{0c}, \delta_{1c}, \delta_{0df}, \delta_{1df}, \delta_{0n}, \delta_{1n}, p_a, p_c, p_{df}, p_n]'.$$
Elements of the vector $\Gamma$ can be viewed as building block parameters for many commonly used causal parameters. For instance, we can define some commonly used parameters as functionals of $\Gamma$:
\begin{eqnarray*}
    \theta_t &\equiv& \sum_{z\in \{0,1\}} \mathbb P(Z=z) \theta_{zt} \ \ \text{(local average treatment effect for type $t$, $LATE_t$)},\\
   \theta &\equiv& \sum_{t \in \{a,c,df,n\}} p_t \theta_t \ \ \text{(average treatment effect, $ATE$)}, \\
   \delta_d &\equiv& \sum_{t \in \{a,c,df,n\}} p_t \delta_{dt}\ \ \text{(average controlled direct effect, $ACDE$)}.  
\end{eqnarray*}

In this paper, our goal is to provide a robust identified set for these parameters under the following set of assumptions considered in the original work by \cite{imbens1994identification}.
\begin{assumption}[Random assignment, RA] \label{ass:RA}
    The instrument variable $Z$ is independent of potential outcomes and potential treatments:
    \begin{equation*}
        Z \indep (Y_{11}, Y_{10}, Y_{01}, Y_{00}, D_1, D_0).
    \end{equation*}
\end{assumption}
Assumption \ref{ass:RA} requires that the instrument $Z$ be independent of all potential outcomes and potential treatments. This assumption will likely hold in randomized experiments. We may assume that it holds conditional on covariates in stratified randomized experiments. This assumption is maintained throughout the paper. 


\begin{assumption}[Exclusion restriction, ER] \label{ass:er}
    $Y_{dz}=Y_{dz'}=Y_d$ for all $d$, $z$, and $z'$.
\end{assumption}
Assumption \ref{ass:er} imposes that the instrument $Z$ does not have a direct effect on the potential outcome. It is allowed to affect the outcome $Y$ only through the treatment $D$. We allow for potential violations of this assumption in this paper. 

\begin{assumption}[Monotonicity, MON] \label{ass:mon}
    Either $D_1 \geq D_0$ or $D_0 \geq D_1$.
\end{assumption}
Assumption \ref{ass:mon}, also often referred to as \textit{uniformity} (see \citeauthor{heckman2018unordered}, \citeyear{heckman2018unordered}), rules out the existence of both compliers and defiers in the population. This assumption is questionable in many empirical settings, especially the judge IV design literature. 
To illustrate, suppose there are two judges who are randomly assigned to cases/defendants. A jury will make a decision on whether or not to convict a defendant. A researcher is interested in studying the effect of the jury's decision $D$ on an outcome $Y$ (recidivism/employment). The judges assignment variable $Z$ can be considered as an IV. There is no \textit{a priori} reason to assume monotonicity in the jury's decision with respect to the judges. Hence, all four types are likely present in this setting.  
Therefore, we allow for complete violations of this monotonicity assumption in this paper. 

\cite{Fiorini2021ScrutinizingDesigns} extensively discussed situations in which the monotonicity assumption might be questionable. 
For illustration purposes, we consider the measurement of the returns to schooling as our motivating example. 
\begin{example}[Returns to schooling]
     Suppose that the college decision is based on whether the opportunity cost $V_1$ is less than a threshold $Q_1(Z)$ and the cost of attending college $V_2$ (e.g., psychological cost of effort, tuition) is less than a different threshold $Q_2(Z)$, where $Z$ denotes an IV, say college proximity. This two-dimensional decision model is called the double hurdle model in \cite{Lee2018IdentifyingTreatments}. Let $Y$ be the outcome of interest (wage), $D$ be the treatment variable (college attendance), and $Z$ be the instrument (college proximity) so that the college choice model is given as below: 
\begin{equation}
D= \mathbbm{1}\left\{V_1 < Q_1(Z), V_2 < Q_2(Z) \right \},
\end{equation}
In this framework, the monotonicity assumption will likely fail if for example the instrument lowers $Q_1$ while at the same time it increases $Q_2$, that is, $Q_1(0) < Q_1(1)$, and $Q_2(1) < Q_2(0)$. As a consequence, we will have all four types of individuals in the population: compliers $(c)$, defiers $(df)$, never-takers $(n)$, and always-takers $(a)$ as illustrated in Figure \ref{types}.\footnote{Definition of the four types in the population:

$c=\{V_1 < Q_1(1),V_2 < Q_2(1)\}\cap\{V_1 \geq Q_1(0)\cup V_2 \geq Q_2(0)\},$

$df=\{V_1 < Q_1(0),V_2 < Q_2(0)\}\cap\{V_1 \geq Q_1(1)\cup V_2 \geq Q_2(1)\},$

$a=\{V_1 < Q_1(0),V_2 < Q_2(0)\}\cap\{V_1 < Q_1(1), V_2 < Q_2(1)\},$

$n=\{V_1 \geq Q_1(0)\cup V_2 \geq Q_2(0)\}\cap\{V_1 \geq Q_1(1)\cup V_2 \geq Q_2(1)\}$.}

\begin{figure}   
    \centering
\begin{tikzpicture}[x=0.75pt,y=0.75pt,yscale=-1,xscale=1]

\draw  (172,267.25) -- (474.5,267.25)(202.25,22.45) -- (202.25,294.45) (467.5,262.25) -- (474.5,267.25) -- (467.5,272.25) (197.25,29.45) -- (202.25,22.45) -- (207.25,29.45)  ;
\draw    (202,61) -- (427.5,60.45) ;
\draw    (427.5,60.45) -- (426.5,267.45) ;
\draw [line width=2.25]  [dash pattern={on 2.53pt off 3.02pt}]  (203,104) -- (253.5,103.45) ;
\draw [line width=2.25]  [dash pattern={on 2.53pt off 3.02pt}]  (253.5,103.45) -- (252.5,266.45) ;
\draw [line width=2.25]  [dash pattern={on 2.53pt off 3.02pt}]  (202.5,217.45) -- (364.5,217.45) ;
\draw [line width=2.25]  [dash pattern={on 2.53pt off 3.02pt}]  (364.5,217.45) -- (364.5,267.45) ;

\draw (164,28) node [anchor=north west][inner sep=0.75pt]   [align=left] {$\displaystyle V_{2}$};
\draw (486,272) node [anchor=north west][inner sep=0.75pt]   [align=left] {$\displaystyle V_{1}$};
\draw (324,141) node [anchor=north west][inner sep=0.75pt]   [align=left] {$\displaystyle n$};
\draw (217,150) node [anchor=north west][inner sep=0.75pt]   [align=left] {$\displaystyle \textcolor[rgb]{0.82,0.01,0.11}{df}$};
\draw (303,232) node [anchor=north west][inner sep=0.75pt]   [align=left] {\textcolor[rgb]{0.29,0.56,0.89}{$\displaystyle c$}};
\draw (224,232) node [anchor=north west][inner sep=0.75pt]   [align=left] {\textcolor[rgb]{0.49,0.83,0.13}{$\displaystyle a$}};
\draw (187,272) node [anchor=north west][inner sep=0.75pt]   [align=left] {$\displaystyle 0$};
\draw (181,51) node [anchor=north west][inner sep=0.75pt]   [align=left] {$\displaystyle 1$};
\draw (422,278) node [anchor=north west][inner sep=0.75pt]   [align=left] {$\displaystyle 1$};
\draw (156,96) node [anchor=north west][inner sep=0.75pt]   [align=left] {$\displaystyle Q_{2}( 0)$};
\draw (157,210) node [anchor=north west][inner sep=0.75pt]   [align=left] {$\displaystyle Q_{2}( 1)$};
\draw (228,276) node [anchor=north west][inner sep=0.75pt]   [align=left] {$\displaystyle Q_{1}( 0)$};
\draw (343,276) node [anchor=north west][inner sep=0.75pt]   [align=left] {$\displaystyle Q_{1}( 1)$};

\end{tikzpicture}
    \caption{Visualization of types (double hurdle model)}
    \label{types}
\end{figure}
\end{example}

In addition, growing up near a college could have a direct effect on someone's ability or skills necessary to complete some tasks that they would not have done otherwise. These skills would positively affect people's wages, which would make the college proximity instrument violate the exclusion restriction assumption.

We consider three different combinations (or menus) of assumptions.
\begin{eqnarray*}
    A_1&\equiv&\{RA, ER, MON\},\\
    A_2&\equiv& \{RA,ER\},\\
    A_3&\equiv& \{RA,MON\}.
\end{eqnarray*}
We denote $A$ the set of all three combinations of assumptions: $A\equiv \{A_1,A_2,A_3\}.$ We assume that we are in a randomized experiment setting where Assumption \ref{ass:RA} holds. For this reason, we do not consider the menu $\{ER,MON\}$ as an option. In general, this combination alone would not yield informative bounds without additional assumptions \citep[see][for further details]{Kedagni2023IdentifyingTypes}.

\section{Identification under random assignment, monotonicity, and exclusion restriction}\label{sec:late:ass}
Let $\mathcal{Y}$ denote the support of the outcome $Y$. We first note that under Assumption \ref{ass:er}, $Y_{dz}=Y_{d}$ for all $d$ and $z$. Under Assumptions \ref{ass:RA} and \ref{ass:er}, we have for any Borel set $A\subseteq \mathcal{Y}$, 
\begin{eqnarray}
\label{important3} 
\mathbb{P}(Y \in A, D=1  \vert  Z=1) &=& p_c \mathbb{P}(Y_1 \in A  \vert  T=c)+p_a \mathbb{P}(Y_1 \in A  \vert  T=a), \\
\label{important4}
\mathbb{P}(Y \in A, D=1  \vert  Z=0) &=& p_{df} \mathbb{P}(Y_1 \in A  \vert  T=df)+p_a \mathbb{P}(Y_1 \in A  \vert  T=a),\\
\label{important}
\mathbb{P}(Y \in A, D=0  \vert  Z=1) &=& p_{d f} \mathbb{P}(Y_0 \in A  \vert  T=d f)+p_n \mathbb{P}(Y_0 \in A  \vert  T=n), \\
\label{important2}
\mathbb{P}(Y \in A, D=0  \vert  Z=0) &=& p_c \mathbb{P}(Y_0 \in A  \vert  T=c)+p_n \mathbb{P}(Y_0 \in A  \vert  T=n).
\end{eqnarray}
\equationautorefname~\eqref{important3} shows that the distribution of $Y$ in the treatment group for individuals assigned to this group is a mixture of the distribution of $Y_1$ for the compliers and the always-takers. And \equationautorefname~\eqref{important4} shows that the distribution of $Y$ in the treatment group for individuals assigned to the control group is a mixture of the distribution of $Y_1$ for the defiers and the always-takers. A similar decomposition holds for the control group, as shown in Equations~\eqref{important}-\eqref{important2}.

Let us take the difference between Equations \eqref{important4} and \eqref{important3}. We have
\begin{eqnarray}
&&\mathbb{P}(Y \in A, D=1  \vert  Z=0)-\mathbb{P}(Y \in A, D=1  \vert  Z=1) \label{eq:maintest1}\\
&&\qquad \qquad \qquad \qquad \qquad =p_{df}\mathbb{P}(Y_1 \in A  \vert  T=df) - p_c \mathbb{P}(Y_1 \in A  \vert  T=c).\nonumber
\end{eqnarray}
Similarly, the following holds from Equations \eqref{important} and \eqref{important2}. 
\begin{eqnarray}
&& \mathbb{P}(Y \in A, D=0  \vert  Z=1)-\mathbb{P}(Y \in A, D=0  \vert  Z=0) \label{eq:maintest2}\\
&&\qquad \qquad \qquad \qquad \qquad = p_{df}\mathbb{P}(Y_0 \in A  \vert  T=df) - p_c \mathbb{P}(Y_0 \in A  \vert  T=c).\nonumber 
\end{eqnarray}

For $A=\mathcal{Y}$, Equation \eqref{eq:maintest1} implies
\begin{equation}
\mathbb{P}(D=1  \vert  Z=0)-\mathbb{P}(D=1  \vert  Z=1)=p_{df}-p_c. \label{eq:maintest3}
\end{equation}

Under Assumption \ref{ass:mon}, either $p_{df}=0$ or $p_c=0$. Without loss of generality, assume that $p_{df}=0$. Then, Equation \eqref{eq:maintest3} implies that $p_c=\mathbb{P}(D=1  \vert  Z=1)-\mathbb{P}(D=1  \vert  Z=0)$. As a result, $p_a$ and $p_n$ are also identified: $p_a=\mathbb P(D=1\vert Z=0)$, and $p_n=\mathbb P(D=0\vert Z=1)$. Under Assumption \ref{ass:er}, $\delta_{dt}=0$ for all $d$ and $t$, and $\theta_{0t}=\theta_{1t}$ for all $t$. Under Assumptions~\ref{ass:RA}-\ref{ass:mon}, $\theta_{zc}$ is identified if $\mathbb{P}(D=1  \vert  Z=1)-\mathbb{P}(D=1  \vert  Z=0)>0$ \citep{imbens1994identification}: $$\theta_{1c}=\theta_{0c}=\frac{\mathbb E[Y\vert Z=1]-\mathbb E[Y\vert Z=0]}{\mathbb E[D\vert Z=1]-\mathbb E[D\vert Z=0]},$$
and from Equations \eqref{eq:maintest1}-\eqref{eq:maintest2}, the following testable implications must hold:
\begin{eqnarray}
\mathbb{P}(Y \in A, D=1  \vert  Z=0)-\mathbb{P}(Y \in A, D=1  \vert  Z=1) &\leq& 0, \label{eq:testimpl1}\\
\mathbb{P}(Y \in A, D=0  \vert  Z=1)-\mathbb{P}(Y \in A, D=0  \vert  Z=0) &\leq & 0,\label{eq:testimpl2}
\end{eqnarray}
for all Borel set $A$ \citep{Balke1997BoundsCompliance, Heckman2005StructuralEvaluation, Kitagawa2015AValidity, Mourifie2017TestingAssumptions}. The remaining $\theta$-parameters are not identified and lie in the real line $\mathbb R$. 
The identified set for $\Gamma$ is 
\begin{eqnarray*} 
\Theta_I(A_1)=\left\{\begin{array}{l}
        \Theta_I^1(A_1),\ \text{ if }\ \mathbb{E}[D  \vert  Z=1] - \mathbb{E}[D  \vert  Z=0]> 0,\\
        \Theta_I^2(A_1),\ \text{ if }\ \mathbb{E}[D  \vert  Z=1] - \mathbb{E}[D  \vert  Z=0]< 0,\\
       \Theta_I^3(A_1),\ \text{ if }\ \mathbb{E}[D  \vert  Z=1] - \mathbb{E}[D  \vert  Z=0]= 0,
        \end{array}\right.
\end{eqnarray*}

where  
\begin{eqnarray*}
    \Theta_I^1(A_1) &=& \Bigg \{\Gamma \in \mathbb R^{20}: \theta_{0a}=\theta_{1a} \in \left[\mathbb E[Y\mid D=1, Z=0]-\sup \mathcal Y, \mathbb E[Y\mid D=1, Z=0]-\inf \mathcal Y\right],\\
   && \qquad \theta_{0c}=\theta_{1c}=\frac{\mathbb E[Y\vert Z=1]-\mathbb E[Y\vert Z=0]}{\mathbb E[D\vert Z=1]-\mathbb E[D\vert Z=0]},\\
   && \qquad \theta_{0df}=\theta_{1df} \in \left[\inf \mathcal Y-\sup \mathcal Y, \sup \mathcal Y-\inf \mathcal Y\right],\\
   && \qquad \theta_{0n}=\theta_{1n} \in \left[\inf \mathcal Y-\mathbb E[Y\mid D=0, Z=1], \sup \mathcal Y-\mathbb E[Y\mid D=0, Z=1]\right],\\ 
   && \qquad \delta_{0a}=\delta_{1a}=\delta_{0c}=\delta_{1c}=\delta_{0df}=\delta_{1df}=\delta_{0n}=\delta_{1n}=0,\\
   && \qquad p_a=\mathbb{E}[D  \vert  Z=0], p_c=\mathbb{E}[D  \vert  Z=1] - \mathbb{E}[D  \vert  Z=0], p_{df}=0, p_n=\mathbb{E}[1-D  \vert  Z=1],\\
   && \qquad \text{and inequalities \eqref{eq:testimpl1}-\eqref{eq:testimpl2} hold.} \Bigg \}.
\end{eqnarray*}

Let $\tilde{Z}\equiv 1-Z$, and let $\tilde{T}$ and $\tilde{\Gamma}$ be respectively the corresponding $T$ and $\Gamma$ defined based on the instrument $\tilde{Z}$. We have $\tilde{a}=n$, $\tilde{c}=df$, $\tilde{df}=c$, $\tilde{n}=a$, $\tilde{\theta}_{\tilde{z}\tilde{a}}=\theta_{(1-z)n}$, $\tilde{\theta}_{\tilde{z}\tilde{n}}=\theta_{(1-z)a}$, $\tilde{\theta}_{\tilde{z}\tilde{c}}=\theta_{(1-z)df}$, $\tilde{\theta}_{\tilde{z}\tilde{df}}=\theta_{(1-z)c}$, $\tilde{\delta}_{d\tilde{a}}=\delta_{dn}$, $\tilde{\delta}_{d\tilde{n}}=\delta_{da}$, $\tilde{\delta}_{d\tilde{c}}=\delta_{d df}$, and $\tilde{\delta}_{d\tilde{df}}=\delta_{dc}$. Then $\Theta_I^2(A_1)=\tilde{\Theta}_I^1(A_1),$ where $\tilde{\Theta}_I^1(A_1)$ is the corresponding $\Theta_I^1(A_1)$ defined for $\tilde{\Gamma}$ based on $(Y,D,\tilde{Z})$ and $\tilde{T}$.

If $\mathbb E[D\mid Z=1]-\mathbb E[D\mid Z=0]=0$, then $p_c-p_{df}=0,$ and under Assumption \ref{ass:mon}, $p_c=p_{df}=0$. This implies $p_a=\mathbb P(D=1\mid Z=1)=\mathbb P(D=1\mid Z=0)=\mathbb P(D=1),$ and $p_n=\mathbb P(D=0)$. Furthermore, inequalities \eqref{eq:testimpl1}-\eqref{eq:testimpl2} must hold with equality, i.e., $Z \indep (Y,D)$. Hence, 
\begin{eqnarray*}
    \Theta_I^3(A_1) &=& \Bigg \{\Gamma \in \mathbb R^{20}: \theta_{0a}=\theta_{1a} \in \left[\mathbb E[Y\mid D=1, Z=0]-\sup \mathcal Y, \mathbb E[Y\mid D=1, Z=0]-\inf \mathcal Y\right],\\
    && \qquad \theta_{0n}=\theta_{1n} \in \bigg[\inf \mathcal Y -\mathbb{E}\left[Y \mid D=0, Z=0\right], \sup \mathcal Y - \mathbb{E}\left[Y \mid D=0, Z=0\right]\bigg],\\ 
    && \qquad \theta_{0c}=\theta_{1c}, \theta_{0df}=\theta_{1df} \in \left[\inf \mathcal Y-\sup \mathcal Y, \sup \mathcal Y-\inf \mathcal Y\right],\\
   &&\qquad \delta_{0a}=\delta_{1a}=\delta_{0c}=\delta_{1c}=\delta_{0df}=\delta_{1df}=\delta_{0n}=\delta_{1n}=0,\\
   && \qquad p_a=\mathbb{E}[D],\ p_c=p_{df}=0,\  p_n=\mathbb{E}[1-D],\\
    && \qquad \text{and inequalities \eqref{eq:testimpl1}-\eqref{eq:testimpl2} hold with equality.}   \Bigg \}.
\end{eqnarray*}

\section{Identification under random assignment and exclusion restriction} \label{sec:ra_er:ass}
Building on the above equations \eqref{eq:maintest1}-\eqref{eq:maintest3}, we can derive bounds on the probability of being a defier as follows: 
\begin{align*}
& \max \left\{\max _{s \in\{0,1\}}\left\{\sup _A\{\mathbb{P}(Y \in A, D=s  \vert  Z=1-s)-\mathbb{P}(Y \in A, D=s  \vert  Z=s)\}\right\}, 0\right\} \\
& \leq p_{d f} \leq \min \{\mathbb{E}[D  \vert  Z=0], \mathbb{E}[1-D  \vert  Z=1]\} .
\end{align*}

\cite{Kedagni2020GeneralizedAssumption} and \cite{Kitagawa2021TheIndependence} show that the following restriction is a sharp testable implication of Assumptions \ref{ass:RA} and \ref{ass:er}:
\begin{eqnarray}\label{test:imp}
    \max_{d\in\{0,1\}} \int_{\mathcal Y} \sup_{z}f_{Y,D \vert Z}(y,d\vert z) d \mu(y) \leq 1.
\end{eqnarray}
In the case when the instrument $Z$ is binary, we can prove that the violation of testable implication in Equation \eqref{test:imp} is equivalent to the identified bounds for the proportion of defiers being empty. Therefore, the following proposition holds.

\begin{proposition}[Sharp bounds for $p_{df}$]\label{prop_pdf}
Under Assumptions \ref{ass:RA} and \ref{ass:er}, the identified set $\Theta_{I}(p_{df})$ for the probability of defiers is given by
\begin{equation*}
    \begin{aligned}
        \Theta_{I}(p_{df}) = & \bigg[\max _{s \in\{0,1\}}\left\{\sup _A\{\mathbb{P}(Y \in A, D=s  \vert  Z=1-s)-\mathbb{P}(Y \in A, D=s  \vert  Z=s)\}\right\}, \\
        & \qquad \qquad \qquad \min \{\mathbb{E}[D  \vert  Z=0], \mathbb{E}[1-D  \vert  Z=1]\}\bigg].
    \end{aligned}
\end{equation*}
Moreover, the identified set $\Theta_{I}(p_{df})$ is empty if and only if inequality \eqref{test:imp} is violated.

\end{proposition}

\begin{proof}
    The detailed proof is provided in Appendix \ref{app:proof_prop1}.
\end{proof}

The bounds in $\Theta_I(p_{df})$ were first derived by \cite{richardson2010analysis} for binary outcomes and recently generalized to any outcomes by \cite{Noack2021SensitivityAssumption}. This is an intermediate result for our main results in this paper.
If the lower bound of $\Theta_I(p_{df})$ is greater than zero, the standard LATE assumptions are jointly rejected (or incompatible), aligning with \cite{Balke1997BoundsCompliance}, \cite{imbens1997estimating}, \cite{Heckman2005StructuralEvaluation}, \cite{Kitagawa2015AValidity}, and \cite{Mourifie2017TestingAssumptions}. 
If instead inequalities \eqref{eq:testimpl1}-\eqref{eq:testimpl2} hold, then the testable implications developed for the LATE assumptions hold, and inequality~\eqref{test:imp} becomes redundant, since the LATE testable implications are sharp \citep{Kitagawa2015AValidity, Mourifie2017TestingAssumptions}. In such a context, the lower bound of $\Theta_I(p_{df})$ is 0, as the supremum in the lower bound is achieved when $A=\emptyset$. \cite{Huber2017SharpNoncompliance} derive bounds on $p_{df}$ under a weaker version of Assumption \ref{ass:RA}, $\mathbb E[Y_{dz}\mid T, Z]=E[Y_{dz}\mid T]$ and $Z \indep T$, which implies the \cite{Manski1990NonparametricEffects} mean independence assumption between the potential outcomes and the instrument, $\mathbb E[Y_{dz} \mid Z]=\mathbb E[Y_{dz}]$. This latter mean independence assumption together with the exclusion restriction assumption~\ref{ass:er} imply similar conditions to \eqref{test:imp}: $\sup_z \mathbb E[YD+(1-D)\inf \mathcal Y \mid Z=z] \leq \inf_z \mathbb E[YD+(1-D)\sup \mathcal Y \mid Z=z]$ and $\sup_z \mathbb E[Y(1-D)+D\inf \mathcal Y\mid Z=z] \leq \inf_z \mathbb E[Y(1-D)+D\sup \mathcal Y\mid Z=z]$. These conditions could be rejected if the support $\mathcal Y$ is bounded. For example, when the outcome is binary (i.e., $\mathcal Y=\{0,1\}$), the testable implication for mean independence and Assumption \ref{ass:er} is identical to inequality~\eqref{test:imp}.  

\cite{Kitagawa2021TheIndependence} focuses on the identified region of marginal distributions of potential outcomes, $Y_1$ and $Y_0$, under Assumptions \ref{ass:RA} and \ref{ass:er}, in the same context of a binary treatment and a binary instrument. This paper has a result that the identified region of marginal distributions of potential outcomes is non-empty if and only if inequality \eqref{test:imp} holds. We focus on the identified set of type proportions and local average treatment effects here and also prove in Proposition \ref{prop_pdf} that the identified set of the defier proportion is non-empty if and only if inequality \eqref{test:imp} is satisfied. Both of these results indicate that the non-emptyness of two different parameters (marginal potential outcome distributions and defiers probability) under Assumptions \ref{ass:RA} and \ref{ass:er} leads to the same testable  inequality~\eqref{test:imp}.

\cite{kwon2024testing} consider a more general framework with a binary instrument and a multivalued discrete treatment. They derive bounds on the proportion of always-takers. Their bounds on the fraction of always-takers can be used to derive bounds on the defiers probability in our framework, since we can express the defiers probability as a function of the always-takers probability.

While our ultimate goal is to derive the identified set for $\Gamma$ under the bundle of assumptions $A_2$, we are first to going derive bounds on the LATE for compliers and defiers, respectively, as these are causal parameters that receive a particular attention in the causal inference literature. Before we move on, we introduce some additional notation. Denote $\mu_{d t} \equiv \mathbb{E}\left[Y_d  \vert  T=t\right], d \in\{0,1\}$ and $t \in$ $\{a, c, d f, n\}$. Under Assumptions~\ref{ass:RA} and~\ref{ass:er}, We have $p_a=\mathbb{E}[D  \vert  Z=0]-p_{d f}$, and $p_n= \mathbb{E}[1-D  \vert  Z=1]-p_{d f}.$ Lastly, we have $p_c=1-p_{df}-p_n-p_a=\mathbb{E}[D  \vert  Z=1] - \mathbb{E}[D  \vert  Z=0] +p_{df}$, where $p_{df} \in \Theta_I(p_{df}).$

To proceed, we are going to derive sharp bounds for $\mathbb P(Y_1 \in A \vert T=a)$ using Equations~\eqref{important3}-\eqref{important4}. 
For simplicity, suppose $p_{df}$ is an interior point of $\Theta_I(p_{df})$ such that $0 < p_{df} < \min\{\mathbb E[D\vert Z=0], \mathbb E[1-D \vert Z=1]\}$.
Equation \eqref{important3} implies 
\begin{eqnarray*}
    \mathbb P(Y_1 \in A \vert T=a) &=& \frac{\mathbb P(Y\in A, D=1 \vert Z=1)- p_c \mathbb P(Y_1 \in A \vert T=c)}{p_a}.
\end{eqnarray*}
Since $\mathbb P(Y_1 \in A \vert T=c) \in [0,1],$ and $\mathbb P(Y_1 \in A \vert T=a) \in [0,1]$, we have
\begin{eqnarray*}
   \max\left\{\frac{\mathbb P(Y\in A, D=1 \vert Z=1)- p_c}{p_a},0\right\} \leq \mathbb P(Y_1 \in A \vert T=a) \leq \min\left\{\frac{\mathbb P(Y\in A, D=1 \vert Z=1)}{p_a},1\right\}.
   \end{eqnarray*}

Similarly, Equation \eqref{important4} implies
\begin{eqnarray*}
   \max\left\{\frac{\mathbb P(Y\in A, D=1 \vert Z=0)- p_{df}}{p_a},0\right\} \leq \mathbb P(Y_1 \in A \vert T=a) \leq \min\left\{\frac{\mathbb P(Y\in A, D=1 \vert Z=0)}{p_a},1\right\},   
   \end{eqnarray*}
   
   Hence, Equations \eqref{important3}-\eqref{important4} together imply 
   \begin{eqnarray}
   && \max\left\{\frac{\mathbb P(Y\in A, D=1 \vert Z=1)- p_{c}}{p_a}, \frac{\mathbb P(Y\in A, D=1 \vert Z=0)- p_{df}}{p_a},0\right\}\nonumber\\
   && \qquad \qquad \leq \mathbb P(Y_1 \in A \vert T=a) \leq \label{eq:func}\\
   && \min\left\{\frac{\mathbb P(Y\in A, D=1 \vert Z=1)}{p_a}, \frac{\mathbb P(Y\in A, D=1 \vert Z=0)}{p_a},1\right\}. \nonumber  
   \end{eqnarray}

A similar reasoning holds for \eqref{important}-\eqref{important2} and helps partially identify $\mathbb P(Y_0 \in A \vert T=n)$. The identified sets for $F_{dt}\equiv F_{Y_d\vert T=t}$, $d\in\{0,1\}$, $t\in \{c,df\}$ are given in Proposition \ref{prop:distbound}.

\begin{proposition}\label{prop:distbound}
For a given $p_{df}$ interior point of $\Theta_I(p_{df})$, pointwise sharp bounds for the distributions $F_{dt}$ are given below:
\begin{eqnarray*}
    F_{1a}^{LB}(y)&\equiv& \max\left\{F_{1a}^{LB_1}(y),F_{1a}^{LB_0}(y)\right\} \leq F_{1a}(y) \leq \min\left\{F_{1a}^{UB_1}(y),F_{1a}^{UB_0}(y)\right\}\equiv F_{1a}^{UB}(y),\\
    F_{0n}^{LB}(y) &\equiv& \max\left\{F_{0n}^{LB_1}(y),F_{0n}^{LB_0}(y)\right\} \leq F_{0n}(y) \leq \min\left\{F_{0n}^{UB_1}(y),F_{0n}^{UB_0}(y)\right\}\equiv F_{0n}^{UB}(y),\\
    F_{1c}(y) &=& \frac{\mathbb P(Y \leq y, D=1 \vert Z=1)-p_a F_{1a}(y)}{p_c },\ F_{0c}(y) = \frac{\mathbb P(Y \leq y, D=0 \vert Z=0)-p_n F_{0n}(y)}{p_c },\\
    F_{1df}(y) &=& \frac{\mathbb P(Y \leq y, D=1 \vert Z=0)-p_a F_{1a}(y)}{p_{df} },\ F_{0df}(y) = \frac{\mathbb P(Y \leq y, D=0 \vert Z=1)-p_n F_{0n}(y)}{p_{df}},
\end{eqnarray*}
where $F_{1a}^{\ell}(y)$, $F_{0n}^{\ell}(y)$, $\ell \in \{LB_0, LB_1, UB_0, UB_1\}$ are defined in Appendix \ref{app:proof_prop2_thm1}, 
and $p_a=\mathbb{E}[D  \vert  Z=0]-p_{d f}$, $p_n= \mathbb{E}[1-D  \vert  Z=1]-p_{d f},$ $p_c=1-p_{df}-p_n-p_a=\mathbb{E}[D  \vert  Z=1] - \mathbb{E}[D  \vert  Z=0] +p_{df}$.

\end{proposition}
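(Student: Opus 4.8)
The plan is to prove each CDF bound in two steps — \emph{necessity} (every admissible law lies in the stated interval) and \emph{sufficiency/sharpness} (every value of the interval is attained by some admissible law) — and to read off the identities for $F_{1c},F_{0c},F_{1df},F_{0df}$ by direct substitution. First I would fix notation by writing $G_{dz}(y)\equiv\mathbb P(Y\le y,D=d\mid Z=z)$ and specializing \eqref{important3}--\eqref{important2} to $A=(-\infty,y]$, which yields the four mixture identities $G_{11}=p_cF_{1c}+p_aF_{1a}$, $G_{10}=p_{df}F_{1df}+p_aF_{1a}$, $G_{01}=p_{df}F_{0df}+p_nF_{0n}$, and $G_{00}=p_cF_{0c}+p_nF_{0n}$. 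The key structural observation is that the problem decouples: the first two equations constrain only $\{F_{1a},F_{1c},F_{1df}\}$ (the ``$Y_1$-block''), the last two constrain only $\{F_{0n},F_{0c},F_{0df}\}$ (the ``$Y_0$-block''), while $F_{1n}$ and $F_{0a}$ are left entirely free. Since $p_{df}$ is interior to $\Theta_I(p_{df})$, one checks that $p_a,p_c,p_{df}>0$, so all divisions below are legitimate. I would carry out the argument in full for the $Y_1$-block and obtain the $F_{0n}$ bounds from the symmetric relabeling $a\leftrightarrow n$, $D\leftrightarrow 1-D$ (the control-side analogue of the $\tilde Z=1-Z$ device used to pass from $\Theta_I^1(A_1)$ to $\Theta_I^2(A_1)$).

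For necessity I would invert the two $Y_1$-equations separately. Solving $G_{11}=p_cF_{1c}+p_aF_{1a}$ for $F_{1a}$ and imposing $F_{1c}(y)\in[0,1]$ produces the pair $F_{1a}^{LB_1},F_{1a}^{UB_1}$; solving $G_{10}=p_{df}F_{1df}+p_aF_{1a}$ and imposing $F_{1df}(y)\in[0,1]$ produces $F_{1a}^{LB_0},F_{1a}^{UB_0}$. Because both equations must hold for the \emph{same} $F_{1a}$, the two intervals intersect, giving $F_{1a}^{LB}=\max\{F_{1a}^{LB_1},F_{1a}^{LB_0}\}\le F_{1a}\le\min\{F_{1a}^{UB_1},F_{1a}^{UB_0}\}=F_{1a}^{UB}$. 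Re-solving the same two equations for the remaining components then yields exactly the displayed formulas for $F_{1c}$ and $F_{1df}$ (and, by the symmetric argument, for $F_{0c}$ and $F_{0df}$); the correct total masses $p_c$ and $p_{df}$ are automatic from $p_a=\mathbb E[D\mid Z=0]-p_{df}$, $p_c=\mathbb E[D\mid Z=1]-\mathbb E[D\mid Z=0]+p_{df}$ and the limits of $G_{11},G_{10}$.

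The substantive step is sharpness. Fixing $y$ and a target $v\in[F_{1a}^{LB}(y),F_{1a}^{UB}(y)]$, I would build a candidate always-taker law, \emph{define} $F_{1c},F_{1df}$ through the two identities, assign $F_{1n},F_{0a}$ arbitrarily, fill in the $Y_0$-block symmetrically, and verify that the resulting type-conditional laws are genuine CDFs reproducing the observed distribution of $(Y,D,Z)$; RA and ER then hold because the same laws are used across $z$. Nonnegativity and the unit-mass bound on $F_{1c},F_{1df}$ translate precisely into the pointwise envelope $[F_{1a}^{LB},F_{1a}^{UB}]$, so the only delicate requirement is \emph{monotonicity} of $G_{11}-p_aF_{1a}$ and $G_{10}-p_aF_{1a}$. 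This is a statement about increments rather than levels: it demands that the measure $p_a\,dF_{1a}$ be dominated by \emph{both} $dG_{11}$ and $dG_{10}$, i.e. $p_a\,dF_{1a}\le dG_{11}\wedge dG_{10}$. I would therefore construct $F_{1a}$ not as an arbitrary curve through $(y,v)$ but as the distribution function of a measure sitting underneath $dG_{11}\wedge dG_{10}$, of total mass $p_a$, carrying mass $p_a v$ on $(-\infty,y]$; a greedy ``water-filling'' allocation delivers such a measure, and this is where interiority of $p_{df}$ (equivalently, strictness of \eqref{test:imp}) is used, to guarantee the feasible region has enough slack for the allocation to succeed for \emph{every} value of the interval.

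I expect this last construction to be the main obstacle, and to consume the bulk of the appendix argument: the necessity direction and the algebraic identities are routine, but verifying that the induced complier and defier distributions are valid CDFs is a constraint on the increments of $F_{1a}$, not merely on its level, which is exactly why a pointwise-sharp statement requires more than a one-line inversion. The crux will be showing that the water-filling measure with the prescribed mass on $(-\infty,y]$ exists for each admissible target, together with the bookkeeping that interiority of $p_{df}$ supplies the required slack and that the boundary cases of the envelope are attained; the symmetric treatment of $F_{0n}$ then follows verbatim.
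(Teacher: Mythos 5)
Your necessity argument and the substitution formulas for $F_{1c}$, $F_{0c}$, $F_{1df}$, $F_{0df}$ coincide with the paper's: invert each of \eqref{important3}--\eqref{important4} (resp.\ \eqref{important}--\eqref{important2}) under the requirement that the other mixture component lie in $[0,1]$, and intersect the two resulting intervals. Where you depart from the paper is the sharpness step, and that is where your plan fails. The paper never constructs, for each $y$ and each target $v$ in the displayed interval, an admissible law attaining $v$; it characterizes the \emph{functional} sharp set $\Theta_I(F_{1a})$ through increment constraints of the form $p_a\left(F_{1a}(y')-F_{1a}(y)\right)\le\min\left\{G_{11}(y')-G_{11}(y),\,G_{10}(y')-G_{10}(y)\right\}$ for all $y<y'$ (in your notation $G_{dz}(y)\equiv\mathbb P(Y\le y,D=d\mid Z=z)$), and then simply displays the pointwise envelope obtained by sending $y\to-\infty$ or $y'\to\infty$. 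You correctly observe that admissibility forces $p_a\,dF_{1a}\le dG_{11}\wedge dG_{10}$ as measures, but your claim that a water-filling measure dominated by $dG_{11}\wedge dG_{10}$ can carry mass $p_a v$ on $(-\infty,y]$ for \emph{every} $v\le F_{1a}^{UB}(y)$ is not true: such a measure places at most $\int_{-\infty}^{y}\min\{f_{Y,D\mid Z}(u,1\mid 1),f_{Y,D\mid Z}(u,1\mid 0)\}\,d\mu(u)$ on $(-\infty,y]$, and this is \emph{strictly} smaller than $\min\{G_{11}(y),G_{10}(y)\}$ whenever the two joint densities cross below $y$. Interiority of $p_{df}$ supplies slack only in the total mass, $p_a<\int\min\{f_{Y,D\mid Z}(u,1\mid 1),f_{Y,D\mid Z}(u,1\mid 0)\}\,d\mu(u)$, not in the mass available to the left of a given $y$, so it cannot repair this.

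Concretely, carried to completion your construction would establish attainability of the tighter envelope $\min\left\{\int_{-\infty}^{y}\min\{f_{Y,D\mid Z}(u,1\mid 1),f_{Y,D\mid Z}(u,1\mid 0)\}\,d\mu(u),\,p_a\right\}/p_a$ rather than of the displayed $F_{1a}^{UB}(y)$ (and symmetrically for the lower endpoint and for $F_{0n}$), so the final sentence of your sharpness step is doing work it cannot do. To prove the proposition as the paper proves it, you should instead establish sharpness of the increment-constrained functional set (each mixture equation is a Horowitz--Manski problem, coupled only through the common $F_{1a}$) and present the displayed inequalities as valid pointwise consequences of that set, as the appendix does; the explicit admissible laws the paper builds in the proof of Proposition \ref{prop_pdf} indeed take $dF_{1a}$ proportional to $\min\{f_{Y,D\mid Z}(\cdot,1\mid 1),f_{Y,D\mid Z}(\cdot,1\mid 0)\}\,d\mu$, which is consistent with the constraint you identified and inconsistent with reaching the displayed endpoints at a crossing point. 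Your write-up has, in effect, uncovered that the strong reading of ``every value of the interval is attained'' requires more than either your allocation or the paper's appendix delivers; the fix is to weaken that claim, not to assert that the water-filling succeeds.
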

While Proposition \ref{prop:distbound} shows \textit{pointwise} sharp bounds for the distributions, we characterize their \textit{functional} sharp bounds (in the terminology of \cite{mourifie2020sharp}) in Appendix \ref{app:proof_prop2_thm1}. 
In the following, we explain the intuition behind the reason why the bounds for $F_{1a}(y)$ are not functionally sharp. Suppose that we are interested in bounding $F_{1a}(y')-F_{1a}(y)$ where $y < y'$. One can take the difference of the bounds in Proposition \ref{prop:distbound} and obtain
$$F_{1a}^{LB}(y')-F_{1a}^{UB}(y) \leq F_{1a}(y')-F_{1a}(y) \leq F_{1a}^{UB}(y')-F_{1a}^{LB}(y).$$
But this approach will not lead to sharp bounds for $F_{1a}(y')-F_{1a}(y)$. Instead, we are going to replace the Borel set $A$ by $(y,y']$ in Equation \eqref{eq:func}. Doing this will yield tighter bounds than the pointwise difference of the bounds in Proposition \ref{prop:distbound}. Pointwise sharp bounds for $F_{1a}(y)$ yield sharp bounds on location parameters such as the mean and the quantiles, but they will not deliver sharp bounds on spread parameters like the interquantile range. Functional sharp bounds for $F_{1a}(y)$ yield sharp bounds for interquantile range parameters.

From the results in Proposition \ref{prop:distbound}, we derive bounds on the mean potential outcomes for types. Let $\mu_F$ denote the expected value of a given cumulative distribution function (cdf) $F$, and $\mu_{dt}\equiv \mathbb E[Y_d \vert T=t]$. Corollary \ref{cor:mubounds} provides sharp bounds on $\mu_{dt}$.    
\begin{corollary}\label{cor:mubounds}
 For a given $p_{df}$ interior point of $\Theta_I(p_{df})$, sharp bounds on $\mu_{1a}$ and $\mu_{0n}$ are given by:
    \begin{eqnarray*}
      && \mu_{F_{1a}^{UB}}(p_a) \leq \mu_{1a}(p_a) \leq \mu_{F_{1a}^{LB}}(p_a),\\
      && \mu_{F_{0n}^{UB}}(p_n) \leq \mu_{0n}(p_n) \leq \mu_{F_{0n}^{LB}}(p_n),
    \end{eqnarray*}
where $p_a=\mathbb{E}[D  \vert  Z=0]-p_{d f}$, $p_n= \mathbb{E}[1-D  \vert  Z=1]-p_{d f}.$ 
\end{corollary}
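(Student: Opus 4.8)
The plan is to deduce the mean bounds from the distributional bounds of Proposition~\ref{prop:distbound} by exploiting the order-reversing relationship between the pointwise order on cdfs and the expectation functional, and then to settle sharpness by verifying that the extremal envelopes are attained. I would start from the integration-by-parts identity on the bounded support $\mathcal Y$,
\begin{equation*}
\mu_F=\int_{\mathcal Y} y\,dF(y)=\sup\mathcal Y-\int_{\inf\mathcal Y}^{\sup\mathcal Y}F(y)\,dy,
\end{equation*}
which exhibits $\mu_F$ as a strictly decreasing affine functional of $\int F$; hence $F\mapsto\mu_F$ is antitone for the pointwise order (first-order stochastic dominance): $F\le G$ everywhere implies $\mu_F\ge\mu_G$. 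Feeding the pointwise sharp envelopes $F_{1a}^{LB}\le F_{1a}\le F_{1a}^{UB}$ from Proposition~\ref{prop:distbound} into this identity and integrating gives $\mu_{F_{1a}^{UB}}\le\mu_{1a}\le\mu_{F_{1a}^{LB}}$, and likewise $F_{0n}^{LB}\le F_{0n}\le F_{0n}^{UB}$ yields the never-taker inequalities. This shows that the stated interval contains $\mu_{1a}$ (resp.\ $\mu_{0n}$) under Assumptions~\ref{ass:RA}--\ref{ass:er}, which is the routine half of the claim.

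The substantive part is sharpness. Because $\mu_F$ is strictly decreasing in $\int F$ and $F_{1a}\le F_{1a}^{UB}$ pointwise, the lower endpoint $\mu_{F_{1a}^{UB}}$ can be reached only by a feasible cdf that agrees $\mu$-a.e.\ with $F_{1a}^{UB}$, and symmetrically the upper endpoint requires attaining $F_{1a}^{LB}$. I would therefore invoke the functional sharp characterization of Appendix~\ref{app:proof_prop2_thm1} to certify that $F_{1a}^{UB}=\min\{F_{1a}^{UB_1},F_{1a}^{UB_0}\}$ and $F_{1a}^{LB}$ are consistent with the mixture identities~\eqref{important3}--\eqref{important4}, i.e.\ that setting $F_{1a}$ equal to an envelope keeps the induced residuals
\begin{equation*}
F_{1c}(y)=\frac{\mathbb P(Y\le y,D=1\mid Z=1)-p_aF_{1a}(y)}{p_c},\qquad
F_{1df}(y)=\frac{\mathbb P(Y\le y,D=1\mid Z=0)-p_aF_{1a}(y)}{p_{df}}
\end{equation*}
nondecreasing and $[0,1]$-valued, so that a valid joint law of $(Y_0,Y_1,T)$ realizing the envelope exists.

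I expect the monotonicity of these residuals to be the main obstacle, since it is exactly the point at which pointwise sharpness need not upgrade to functional sharpness. The envelope alternates between its two branches, and on the region where $F_{1a}^{UB}=F_{1a}^{UB_0}$ the residual $F_{1c}$ is governed by the increment of $\mathbb P(Y\le y,D=1\mid Z=1)-\mathbb P(Y\le y,D=1\mid Z=0)$, whose sign is not pinned down under $A_2$, because the LATE implication~\eqref{eq:testimpl1} need not hold and only the weaker~\eqref{test:imp} is maintained. The appendix construction must reconcile these residual-monotonicity requirements with~\eqref{test:imp} and the interiority of $p_{df}$; once it certifies that $F_{1a}^{UB}$ and $F_{1a}^{LB}$ are themselves admissible, the two endpoints are attained and the bounds are sharp. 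The never-taker statement follows verbatim from the analogous treatment of~\eqref{important}--\eqref{important2} (with~\eqref{eq:testimpl2} in place of~\eqref{eq:testimpl1}), or by the relabeling $\tilde Z=1-Z$ introduced earlier in the paper.
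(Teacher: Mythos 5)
Your validity argument is correct and follows the same route as the paper: take the pointwise envelopes from Proposition~\ref{prop:distbound} and apply the antitonicity of $F\mapsto\mu_F$ with respect to the pointwise order on cdfs to get $\mu_{F_{1a}^{UB}}\le\mu_{1a}\le\mu_{F_{1a}^{LB}}$ and the analogous never-taker inequalities. Your integration-by-parts identity actually makes this half more explicit than the paper's own treatment, which simply writes down the mean bounds immediately after the distributional ones.

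The problem is sharpness, which you have diagnosed exactly but not closed. Attaining the endpoint $\mu_{F_{1a}^{UB}}$ requires a single feasible cdf agreeing $\mu$-a.e.\ with the envelope $\min\{F_{1a}^{UB_1},F_{1a}^{UB_0}\}$, which in turn requires the residuals $F_{1c}$ and $F_{1df}$ induced through \eqref{important3}--\eqref{important4} to be nondecreasing and $[0,1]$-valued; as you observe, where the envelope switches branches the increment of the min can exceed the increment of either branch, and under $A_2$ alone (where \eqref{eq:testimpl1}--\eqref{eq:testimpl2} may fail) nothing forces those residual increments to stay nonnegative. You delegate this verification to ``the appendix construction,'' but Appendix~\ref{app:proof_prop2_thm1} contains no such verification: it derives the functional identified set $\Theta_I(F_{1a})$, states the pointwise envelopes, and then asserts the mean bounds are sharp without checking that $F_{1a}^{UB}$ and $F_{1a}^{LB}$ themselves satisfy the increment constraints of $\Theta_I(F_{1a})$ or are realized by an admissible joint law of $(Y_0,Y_1,T,Z)$. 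So your proof is incomplete at precisely the step the paper also skips. To finish it you would need either to exhibit a feasible DGP whose $F_{Y_1\mid T=a}$ coincides with each envelope (or approximates it arbitrarily well in mean), or to weaken the claim to validity of the interval rather than attainment of its endpoints. Identifying the branch-switching of the envelope as the potential failure point is the most substantive contribution of your write-up; it should be resolved rather than deferred.
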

The sharp bounds in Corollary \ref{cor:mubounds} can be difficult to compute in practice. Tractable valid outer sets are then proposed:
\begin{eqnarray*}
    \max\left\{\mu_{F_{1a}^{UB_1}}, \mu_{F_{1a}^{UB_0}}\right\} &\leq& \mu_{1a} \leq \min\left\{\mu_{F_{1a}^{LB_1}}, \mu_{F_{1a}^{LB_0}}\right\}\\
    \max\left\{\mu_{F_{0n}^{UB_1}}, \mu_{F_{0n}^{UB_0}}\right\} &\leq& \mu_{0n} \leq \min\left\{\mu_{F_{0n}^{LB_1}}, \mu_{F_{0n}^{LB_0}}\right\}.
\end{eqnarray*}

\begin{remark}
    $\max\left\{\mu_{F_{1a}^{UB_1}}, \mu_{F_{1a}^{UB_0}}\right\}=\mu_{F_{1a}^{UB}}$ if $F_{1a}^{UB_1}$ first-order stochastically dominates $F_{1a}^{UB_0}$ or vice versa. Similarly, $\min\left\{\mu_{F_{1a}^{LB_1}}, \mu_{F_{1a}^{LB_0}}\right\}=\mu_{F_{1a}^{LB}}$ if $F_{1a}^{LB_1}$ first-order stochastically dominates $F_{1a}^{LB_0}$ or vice versa. However, in general, 
    $\max\left\{\mu_{F_{1a}^{UB_1}}, \mu_{F_{1a}^{UB_0}}\right\} \leq \mu_{F_{1a}^{UB}}$, and $\min\left\{\mu_{F_{1a}^{LB_1}}, \mu_{F_{1a}^{LB_0}}\right\}\geq \mu_{F_{1a}^{LB}}$. 
\end{remark}

The bounds in Corollary \ref{cor:mubounds} provides sharp bounds on $\mu_{1a}(p_a)$ and $\mu_{0n}(p_n)$ for discrete, continuous, or mixed outcomes. Closed-form expressions for the bounds can be obtained for continuous outcomes with strictly increasing cdfs. These are called \citeauthor{Lee2009TrainingEffects}'s (\citeyear{Lee2009TrainingEffects}) bounds. Lemma \ref{lem:lee2009} displays the bounds. 
\begin{lemma}\label{lem:lee2009}
    The following holds for continuous outcomes with strictly increasing cdfs.
    \begin{eqnarray*}
    \mu_{F_{1a}^{LB_z}} &=& \mathbb E\left[Y\vert D=1, Z=z, Y> F^{-1}_{Y\vert D=1, Z=z}\left(1-\frac{p_a}{\mathbb E[D\vert Z=z]}\right)\right],\\
\mu_{F_{1a}^{UB_z}} &=& \mathbb E\left[Y\vert D=1, Z=z, Y< F^{-1}_{Y\vert D=1, Z=z}\left(\frac{p_a}{\mathbb E[D\vert Z=z]}\right)\right],\\
\mu_{F_{0n}^{LB_z}} &=& \mathbb E\left[Y\vert D=0, Z=z, Y> F^{-1}_{Y\vert D=0, Z=z}\left(1-\frac{p_{n}}{\mathbb E[1-D\vert Z=z]}\right)\right],\\
\mu_{F_{0n}^{UB_z}} &=& \mathbb E\left[Y\vert D=0, Z=z, Y< F^{-1}_{Y\vert D=0, Z=z}\left(\frac{p_{n}}{\mathbb E[1-D\vert Z=z]}\right)\right].
\end{eqnarray*}
\end{lemma}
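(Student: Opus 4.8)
The plan is to specialize the pointwise distributional bounds that underlie Proposition \ref{prop:distbound} to the half-lines $A=(-\infty,y]$, resolve the $\max$/$\min$ operators using strict monotonicity and continuity of the observed conditional cdfs, and then recognize each extremal cdf as a renormalized one-sided truncation of an observed conditional law, whose mean is exactly a truncated conditional expectation. This is precisely the structure of \citeauthor{Lee2009TrainingEffects}'s (\citeyear{Lee2009TrainingEffects}) trimming bounds, so the content of the lemma is to verify that the bounds derived from Equations \eqref{important3}--\eqref{important2} reduce to that form.

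First I would take $A=(-\infty,y]$ in the bounds obtained right after Equation \eqref{important4}, giving
$$F_{1a}^{UB_z}(y)=\min\!\left\{\frac{\mathbb{P}(Y\le y,D=1\mid Z=z)}{p_a},\,1\right\},\qquad F_{1a}^{LB_z}(y)=\max\!\left\{\frac{\mathbb{P}(Y\le y,D=1\mid Z=z)-c_z}{p_a},\,0\right\},$$
where $c_1=p_c$ and $c_0=p_{df}$. Writing $\mathbb{P}(Y\le y,D=1\mid Z=z)=\mathbb{E}[D\mid Z=z]\,F_{Y\mid D=1,Z=z}(y)$ turns each bound into an affine, nondecreasing transformation of the observed conditional cdf. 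For $F_{1a}^{UB_z}$, the scaling factor $\mathbb{E}[D\mid Z=z]/p_a\ge 1$ (since $\mathbb{E}[D\mid Z=1]=p_a+p_c$ and $\mathbb{E}[D\mid Z=0]=p_a+p_{df}$ under random assignment), so the transform is strictly increasing and first reaches the value $1$ at $y^{\ast}=F^{-1}_{Y\mid D=1,Z=z}\!\big(p_a/\mathbb{E}[D\mid Z=z]\big)$, the inverse being well-defined by strict monotonicity. Hence $F_{1a}^{UB_z}$ equals the scaled cdf for $y<y^{\ast}$ and is capped at $1$ for $y\ge y^{\ast}$, with no atom at $y^{\ast}$ by continuity; the renormalizing constant $\mathbb{E}[D\mid Z=z]/p_a$ equals $1/F_{Y\mid D=1,Z=z}(y^{\ast})$, i.e. the reciprocal of the retained mass, so $F_{1a}^{UB_z}$ is exactly the law of $Y$ given $\{D=1,Z=z,\,Y<y^{\ast}\}$, and its mean is the displayed truncated conditional expectation.

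For $F_{1a}^{LB_z}$ the argument is symmetric: using $\mathbb{E}[D\mid Z=z]-c_z=p_a$, the floor-at-zero region ends at the quantile $F^{-1}_{Y\mid D=1,Z=z}\!\big(1-p_a/\mathbb{E}[D\mid Z=z]\big)$, the retained mass is the upper tail of probability $p_a/\mathbb{E}[D\mid Z=z]$, and the mean is the corresponding upper-tail truncated conditional expectation. The two $F_{0n}^{\,\cdot}$ formulas follow by the identical computation applied to Equations \eqref{important}--\eqref{important2}, replacing $\mathbb{E}[D\mid Z=z]$ by $\mathbb{E}[1-D\mid Z=z]$ (which equals $p_c+p_n$ at $z=0$ and $p_{df}+p_n$ at $z=1$) and $p_a$ by $p_n$.

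The main obstacle is Steps 2--3: one must verify that the capped/floored cdf is genuinely a renormalized one-sided truncation of the observed conditional law rather than merely a pointwise bound. This requires matching the normalizing constant $\mathbb{E}[D\mid Z=z]/p_a$ to the reciprocal retained mass $1/F_{Y\mid D=1,Z=z}(y^{\ast})$, and invoking strict monotonicity together with continuity to guarantee that the quantile threshold is unique and that no probability mass is lost or piled up at the kink. Once these facts are in place, computing the truncated mean is routine.
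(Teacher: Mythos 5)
Your proposal is correct and follows essentially the same route as the paper's proof: both specialize the distributional bounds to half-lines, use strict monotonicity and continuity to locate the quantile threshold where the $\max$/$\min$ caps bind, identify the extremal cdf as a renormalized one-sided truncation of the observed conditional law (with normalizing constant $\mathbb{E}[D\mid Z=z]/p_a$, resp.\ $\mathbb{E}[1-D\mid Z=z]/p_n$, matching the reciprocal retained mass), and read off the mean as the Lee-type trimmed conditional expectation. The only cosmetic difference is that the paper carries out the identification at the density level (computing $f_{1a}^{LB_z}(y)=f_{Y,D\vert Z}(y,1\vert z)/p_a$ on the support and applying Bayes' rule), whereas you argue directly at the cdf level; the substance is identical.
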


Let $\mathring{\Theta}_I(p_{df})$ denote the interior of $\Theta_I(p_{df})$. The following theorem proposes sharp bounds on the LATEs for compliers $\theta_{0c}=\theta_{1c}$ and defiers $\theta_{0df}=\theta_{1df}$.  
\begin{theorem}\label{thm1}
Suppose inequality \eqref{test:imp} holds, and $p_{df}$ is an interior point of $\Theta_I(p_{df})$. Under Assumptions \ref{ass:RA} and \ref{ass:er}, sharp bounds for $\mu_{1 c}$ and $\mu_{1 d f}$ are as follows:
\begin{eqnarray*}
\begin{aligned}
& \mu_{1 c}(p_a) = \frac{\mathbb{E}[Y D  \vert  Z=1]-p_a \mu_{1 a}(p_a)}{\mathbb{E}[D  \vert  Z=1]-p_a},\ \ \ \mu_{1 d f}(p_a) = \frac{\mathbb{E}[Y D  \vert  Z=0]-p_a \mu_{1 a}(p_a)}{\mathbb{E}[D  \vert  Z=0]-p_a},
\end{aligned}
\end{eqnarray*}
where $p_a=\mathbb{E}[D  \vert  Z=0]-p_{d f}$, $p_{df} \in \mathring{\Theta}_I(p_{df}),$ and $\mu_{1a}(p_a) \in \left[\mu_{F_{1a}^{UB}}(p_a),\mu_{F_{1a}^{LB}}(p_a)\right]$. 

Similarly, sharp bounds for $\mu_{0 c}$ and $\mu_{0 d f}$ are given by:
\begin{eqnarray*}
\begin{aligned}
& \mu_{0 c}(p_n) = \frac{\mathbb{E}[Y(1-D)  \vert  Z=0]-p_n \mu_{0 n}(p_n)}{\mathbb{E}[1-D  \vert  Z=0]-p_n},\ \ \ \mu_{0 d f}(p_n) = \frac{\mathbb{E}[Y(1-D)  \vert  Z=1]-p_n \mu_{0 n}(p_n)}{\mathbb{E}[1-D  \vert  Z=1]-p_n}, 
\end{aligned}
\end{eqnarray*}
where $p_n=$ $\mathbb{E}[1-D  \vert  Z=1]-p_{d f},$ $p_{df} \in \mathring{\Theta}_I(p_{df})$, $\mu_{0n}(p_n)\in \left[\mu_{F_{0n}^{UB}}(p_{n}),\mu_{F_{0n}^{LB}}(p_{n})\right].$ 

Sharp bounds for $\theta_{0c}$, $\theta_{1c}$, $\theta_{0df}$, $\theta_{1df}$ are given by:
\begin{eqnarray*}
    \theta_{0c}(p_c)=\theta_{1c}(p_c)&=& \mu_{1c}(p_c)-\mu_{0c}(p_c),\ \ \ \theta_{0df}(p_{df})=\theta_{1df}(p_{df})= \mu_{1df}(p_{df})-\mu_{0df}(p_{df}),
\end{eqnarray*}
where $p_c=1-p_{df}-p_n-p_a=\mathbb{E}[D  \vert  Z=1] - \mathbb{E}[D  \vert  Z=0] +p_{df}$, and $p_{df} \in \mathring{\Theta}_I(p_{df})$. 
\end{theorem}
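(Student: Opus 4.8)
The plan is to convert the distributional mixture identities \eqref{important3}--\eqref{important2} into moment equations and then solve them for the type-specific means, exploiting that under Assumption~\ref{ass:er} the observed outcome in the cell $\{D=d\}$ coincides with $Y_d$. Concretely, I would integrate each of \eqref{important3}--\eqref{important2} against $y$ over $\mathcal Y$ (equivalently, multiply the CDF identities in Proposition~\ref{prop:distbound} by $y$ and integrate), which under Assumptions~\ref{ass:RA} and \ref{ass:er} yields the four moment identities
\begin{align*}
\mathbb E[YD\mid Z=1] &= p_c\,\mu_{1c}+p_a\,\mu_{1a}, & \mathbb E[YD\mid Z=0] &= p_{df}\,\mu_{1df}+p_a\,\mu_{1a},\\
\mathbb E[Y(1-D)\mid Z=0] &= p_c\,\mu_{0c}+p_n\,\mu_{0n}, & \mathbb E[Y(1-D)\mid Z=1] &= p_{df}\,\mu_{0df}+p_n\,\mu_{0n}.
\end{align*}

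Second, I would solve each identity for the complier/defier mean and substitute the probability relations recorded just before the theorem, namely $p_c=\mathbb E[D\mid Z=1]-p_a$, $p_{df}=\mathbb E[D\mid Z=0]-p_a$, $p_c=\mathbb E[1-D\mid Z=0]-p_n$, and $p_{df}=\mathbb E[1-D\mid Z=1]-p_n$, which turn the denominators into the forms stated in the theorem. This already produces the four displayed expressions for $\mu_{1c},\mu_{1df},\mu_{0c},\mu_{0df}$, each of which is an affine function of the single free quantity $\mu_{1a}$ (on the $Y_1$ side) or $\mu_{0n}$ (on the $Y_0$ side). Because $p_{df}$ is an interior point, all of $p_a,p_c,p_{df},p_n$ are strictly positive, so these affine maps are well defined; moreover the coefficient of $\mu_{1a}$ in $\mu_{1c}$ is $-p_a/p_c<0$ (and similarly for the others), so each map is a strictly monotone bijection between intervals.

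Third, I would argue sharpness. Corollary~\ref{cor:mubounds} already gives the sharp interval $\left[\mu_{F_{1a}^{UB}},\mu_{F_{1a}^{LB}}\right]$ for $\mu_{1a}$ and $\left[\mu_{F_{0n}^{UB}},\mu_{F_{0n}^{LB}}\right]$ for $\mu_{0n}$, meaning every value in each interval is realized by some admissible model. Since $\mu_{1c},\mu_{1df}$ are pinned down by $\mu_{1a}$ through the above bijections (and likewise $\mu_{0c},\mu_{0df}$ by $\mu_{0n}$), the identified set for each is exactly the image of the corresponding interval, which establishes their sharpness. For the LATE parameters I would invoke Assumption~\ref{ass:er} to write $\theta_{zc}=\mathbb E[Y_{1z}-Y_{0z}\mid T=c]=\mu_{1c}-\mu_{0c}$ for both $z$, so that $\theta_{0c}=\theta_{1c}=\mu_{1c}-\mu_{0c}$, and analogously $\theta_{0df}=\theta_{1df}=\mu_{1df}-\mu_{0df}$.

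The step I expect to be the main obstacle is the sharpness of the $\theta$ bounds, which requires that $\mu_{1c}$ and $\mu_{0c}$ (equivalently $\mu_{1a}$ and $\mu_{0n}$) may be varied \emph{independently} over their respective intervals. The key observation is that the $Y_1$-marginals of the types are constrained only by the two observed cells $\{D=1\}$, while the $Y_0$-marginals are constrained only by the two cells $\{D=0\}$, and these cells are disjoint; the unobserved cross-restrictions (e.g.\ the $Y_0$-distribution of always-takers or the $Y_1$-distribution of never-takers) are left completely free under Assumptions~\ref{ass:RA}--\ref{ass:er}. Hence for any fixed interior $p_{df}$ I can independently select admissible distributions $F_{1a}$ and $F_{0n}$ attaining any prescribed pair $(\mu_{1a},\mu_{0n})$ in the product of the two intervals and complete them to a full latent joint law of $(Y_0,Y_1,T)$ consistent with the data and with random assignment and exclusion. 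Since $\mu_{1c}-\mu_{0c}$ is affine and separately monotone in these two free means, the set of attainable differences is the interval obtained by combining the endpoints, which delivers sharpness; taking the union over $p_{df}\in\mathring{\Theta}_I(p_{df})$ then gives the full identified set.
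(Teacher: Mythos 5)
Your proposal is correct and follows essentially the same route as the paper's proof in Appendix B.2: both start from the mixture identities \eqref{important3}--\eqref{important2}, express $\mu_{1c},\mu_{1df}$ (resp. $\mu_{0c},\mu_{0df}$) as affine, strictly monotone functions of the single free mean $\mu_{1a}$ (resp. $\mu_{0n}$), and inherit sharpness from the sharp intervals for $\mu_{1a}$ and $\mu_{0n}$ established via Proposition~\ref{prop:distbound} and Corollary~\ref{cor:mubounds}. The one place you go beyond the paper is in explicitly justifying that $\mu_{1a}$ and $\mu_{0n}$ can be varied independently (since the $\{D=1\}$ and $\{D=0\}$ cells impose disjoint constraints and the cross-marginals are unrestricted), a step the paper's proof leaves implicit but which is needed for sharpness of the $\theta$ differences.
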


\begin{proof}
    See detailed proofs in Appendix \ref{app:proof_prop2_thm1}.
\end{proof}
Theorem \ref{thm1} provides sharp bounds on the LATEs for compliers and defiers under random assignment and exclusion restriction, without requiring any kind of monotonicity assumption. The bounds differ from those in \cite{Huber2017SharpNoncompliance} in two ways. First, they are derived under a stronger assumption (random assignment instead of mean independence assumption). Second, as we previously discuss, these bounds take into account the testable implication \eqref{test:imp}, while the \cite{Huber2017SharpNoncompliance} bounds do not take into account the mean independence version of this inequality. As explained in \cite{li2024discordant}, these kinds of outer bounds could be misleading. 

Proposition \ref{prop:distbound} and Theorem \ref{thm1} differ from the results in \cite{Kitagawa2021TheIndependence}. While \cite{Kitagawa2021TheIndependence} studies the identified sets of potential outcome distributions and derives sharp bounds for the average treatment effect, our focus is on the identified set of the vector parameter $\Gamma$, as it is a building block for many commonly used parameters, including the average and local average treatment effects.
The proposed approach in Theorem \ref{thm1} also differs from other existing papers \citep{deChaisemartin2017ToleratingMonotonicity, Noack2021SensitivityAssumption, dahl2023never} as it does not assume any weaker version of the monotonicity assumption \ref{ass:mon}.

The identified set for $\Gamma$ is 
\begin{eqnarray*}
\Theta_I(A_2)=\left\{\begin{array}{l}
        \Theta_I^1(A_2),\ \text{ if }\ p_{df} \in \Theta_I(p_{df}) \cap (0,\min\{\mathbb{E}[D  \vert  Z=0], \mathbb{E}[1-D  \vert  Z=1]\}),\\
       \Theta_I^2(A_2)=\Theta_I(A_1),\ \text{ if }\ p_{df}=0\in \Theta_I(p_{df}),\\
       \Theta_I^3(A_2),\ \text{ if }\ p_{df}=\min\{\mathbb{E}[D  \vert  Z=0], \mathbb{E}[1-D  \vert  Z=1]\},\\
       \emptyset,\ \text{ if }\ p_{df} \not\in \Theta_I(p_{df}),        \end{array}\right.
\end{eqnarray*}
where
\begin{eqnarray*}
    \Theta_I^1(A_2) &=& \Bigg \{\Gamma \in \mathbb R^{20}: \theta_{0a}(p_a)=\theta_{1a}(p_a) = \mu_{1a}(p_a) - \mu_{0a}, \theta_{0n}(p_n) = \theta_{1n}(p_n) = \mu_{1n} - \mu_{0n}(p_n),\\
   && \qquad \theta_{0c}(p_{c})=\theta_{1c}(p_{c})=\frac{\mathbb{E}[Y D  \vert  Z=1]-p_a \mu_{1 a}(p_a)}{\mathbb{E}[D  \vert  Z=1]-p_a}-\frac{\mathbb{E}[Y(1-D)  \vert  Z=0]-p_n \mu_{0 n}(p_n)}{\mathbb{E}[1-D  \vert  Z=0]-p_n},\\   
   &&\qquad \theta_{0df}(p_{df})=\theta_{1df}(p_{df}) = \frac{\mathbb{E}[Y D  \vert  Z=0]-p_a \mu_{1 a}(p_a)}{\mathbb{E}[D  \vert  Z=0]-p_a}- \frac{\mathbb{E}[Y(1-D)  \vert  Z=1]-p_n \mu_{0 n}(p_n)}{\mathbb{E}[1-D  \vert  Z=1]-p_n},\\   
   && \qquad \delta_{0a}=\delta_{1a}=\delta_{0c}=\delta_{1c}=\delta_{0df}=\delta_{1df}=\delta_{0n}=\delta_{1n}=0,\\
   && \qquad p_a=\mathbb{E}[D  \vert  Z=0]-p_{d f}, p_c=\mathbb{E}[D  \vert  Z=1] - \mathbb{E}[D  \vert  Z=0] +p_{df},\\
   && \qquad p_{df}\in \Theta_I(p_{df}) \cap (0,\min\{\mathbb{E}[D  \vert  Z=0], \mathbb{E}[1-D  \vert  Z=1]\}), p_n=\mathbb{E}[1-D  \vert  Z=1]-p_{d f},\\
    && \qquad \mu_{1a}(p_a) \in \left[\mu_{F_{1a}^{UB}}(p_a),\mu_{F_{1a}^{LB}}(p_a)\right], 
    \mu_{0n}(p_n)\in \left[\mu_{F_{0n}^{UB}}(p_{n}),\mu_{F_{0n}^{LB}}(p_{n})\right],  \\
    && \qquad \mu_{1n}, \mu_{0a} \in \left[\inf \mathcal{Y}, \sup \mathcal{Y}\right]
   \Bigg \}, 
\end{eqnarray*}
which arises when $p_{df}>0$ is in the interior of the identified set for the proportion of defiers, $\Theta_I(p_{df})$. When $p_{df}$ is equal to 0, the identified set for $\Gamma$ coincides with $\Theta_I(A_1)$, which corresponds to the setting where all assumptions in $A_1$ (RA, ER, MON) hold. When $p_{df}$ is equal to the upper bound of $\Theta_I(p_{df})$, the identified set for $\Gamma$ under $A_2$ is $\Theta_I^3(A_2)$, which is formally defined in Appendix \ref{app:proof_prop2_thm1}.

\subsection{Numerical Illustration}\label{simu}
In this section, we illustrate how informative our proposed bounds can be in the context of a double hurdle model for a given data generating process. This is an example where our bounds correctly identify the sign of the LATEs for compliers and defiers separately while the IV estimand yields the wrong sign. Consider the following data-generating process\footnote{Note that this a version of the double hurdle model considered in the introduction, since we can equivalently write $D$ as follows:\\ $D = \mathbbm{1}\{V_1\leq 2 Z, -V_2< -Z\}=\mathbbm{1}\{\Phi(V_1)\leq \Phi(2 Z), \Phi(-V_2) < \Phi(-Z)\}=\mathbbm{1}\{\tilde{V}_1\leq Q_1(Z), \tilde{V}_2 < Q_2(Z)\}$, where $\tilde{V_1}\equiv \Phi(V_1)$, $\tilde{V}_2\equiv \Phi(-V_2)$, $Q_1(Z)\equiv \Phi(2Z)$, $Q_2(Z) \equiv \Phi(-Z)$.}
\begin{equation*}
    \left\{\begin{array}{lll}
Y&= &\beta D+U \\
D&= &\mathbbm{1}\{V_1\leq 2 Z, V_2> Z\} \\
Z&= &\mathbbm{1}\{\epsilon>0\}
\end{array}\right.
\end{equation*}
where $\beta=5 \Phi(2V_1+V_2), U=\frac{1}{2} (V_1+V_2), (V_1, V_2, \varepsilon)^{\prime} \sim N(0, I)$, and $\Phi(\cdot)$ is the standard normal cdf.

The potential treatments are then defined as
\begin{equation*}
    \left\{\begin{array}{lll}
D_0&= &\mathbbm{1}\{V_1 \leq 0, V_2>0\}, \\
D_1&= &\mathbbm{1}\{V_1 \leq 2, V_2>1\}. 
\end{array}\right.
\end{equation*}

In Table \ref{dgp3}, we can compute the following some important parameters from this DGP.
\begin{table}
	\centering
	\begin{tabular}{lll}
		\toprule
		Parameters & Formulas   & Value                \\ \midrule
		$p_{df}$          & $P(D_0=1, D_1=0)$      &     0.1708           \\
		$p_a$          & $P(D_0=1, D_1=1)$    &  0.0793  \\
		$p_c$          & $P(D_0=0, D_1=0)$ &  0.0756          \\
		$p_n$          & $P(D_0=0, D_1=0)$   & 0.6743\\
  	$LATE_c$     & $E[Y_1 \vert T=c] - E[Y_0 \vert T=c] $   &   4.9253 \\
		$LATE_{df}$      & $E[Y_1 \vert T=df] - E[Y_0 \vert T=df] $  &  1.2317   \\
              &&\\
& $E[D \vert Z=0]-E[D \vert Z=1]$ & 0.0955 \\
$IV$ estimand & $\frac{E[Y \vert Z=1]-E[Y \vert Z=0]}{E[D \vert Z=1]-E[D \vert Z=0]}$ &  -1.6874 \\
 \bottomrule 
	\end{tabular}
\caption{Numerical results (sample size $n = 10,000,000$)}
\label{dgp3}
\end{table}
In this DGP, the true causal effects for compliers and defiers are $L A T E_c \equiv$ $\mathbb{E}\left[Y_1-Y_0  \vert  T=c\right]=4.93$ and $L A T E_{d f} \equiv \mathbb{E}\left[Y_1-Y_0  \vert  T=d f\right]=1.23$. Our lower bounds for compliers and defiers from Figure~\ref{f11-1}  are strictly larger than zero, showing our bound can clearly identify the positive sign and reasonable regions of LATE for two subgroups of people. However, when we add the IV estimand to Figure~\ref{f11-1}, we can see that the IV estimand is worse than our bounds under this DGP in two dimensions: magnitude and sign. The IV estimand is lower than the lower bounds. Also, compared to the positive true LATE for defiers ($LATE_{df}= 1.23$), the sign of the IV estimand ($LATE_{IV}=  -1.69$) is negative, which is misleading to policymakers. Indeed, when there exist both compliers and defiers in the population, the IV estimand is a weighted average of the LATEs for compliers and defiers, with negative weights.\footnote{IV estimand $=\frac{p_c LATE_c -p_{df} LATE_{df}}{p_c-p_{df}}$.} These negative weights lead to sign reversal of the IV estimand in this example.

\begin{figure}
	\centering
	\includegraphics[scale=0.7]{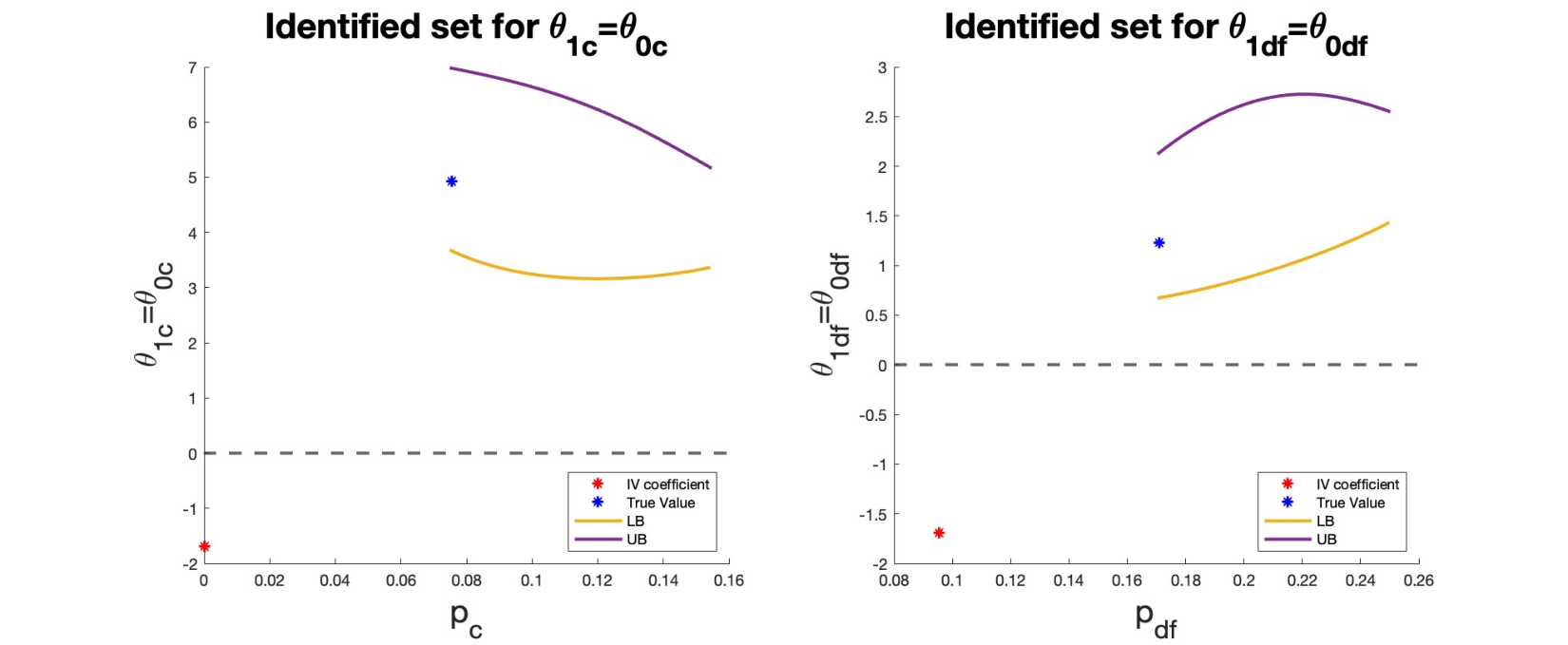}
	\caption{Estimated bounds for LAICDEs, true values for LAICDEs, and IV estimand}
	\label{f11-1}
\end{figure}

\section{Identification under random assignment and monotonicity}\label{sec:ra_mon:ass}

In this section, the potential outcome $Y_{dz}$ is allowed to vary with $z$ in such a way that we may have $\theta_{0t}\neq \theta_{1t}$ and $\delta_{dt}\neq 0$. This section demonstrates how researchers can use a randomly assigned instrument that may violate the exclusion assumption \ref{ass:er}. We maintain the monotonicity assumption \ref{ass:mon}. To proceed, as in the previous section, we write the identified probabilities in each of the observed four subgroups $\{(D=d,Z=z): z, d \in \{0,1\}\}$ as a mixture of the potential outcome distributions for the unobserved types. More precisely, under Assumption \ref{ass:RA}, Equations \eqref{important3}-\eqref{important2} become
\begin{eqnarray}
\label{eq:noer1} 
\mathbb{P}(Y \in A, D=1  \vert  Z=1) &=& p_c \mathbb{P}(Y_{11} \in A  \vert  T=c)+p_a \mathbb{P}(Y_{11} \in A  \vert  T=a), \\
\label{eq:noer2}
\mathbb{P}(Y \in A, D=1  \vert  Z=0) &=& p_{df} \mathbb{P}(Y_{10} \in A  \vert  T=df)+p_a \mathbb{P}(Y_{10} \in A  \vert  T=a),\\
\label{eq:noer3}
\mathbb{P}(Y \in A, D=0  \vert  Z=1) &=& p_{d f} \mathbb{P}(Y_{01} \in A  \vert  T=d f)+p_n \mathbb{P}(Y_{01} \in A  \vert  T=n), \\
\label{eq:noer4}
\mathbb{P}(Y \in A, D=0  \vert  Z=0) &=& p_c \mathbb{P}(Y_{00} \in A  \vert  T=c)+p_n \mathbb{P}(Y_{00} \in A  \vert  T=n).
\end{eqnarray}
Suppose first that $\mathbb{E}[D  \vert  Z=1] - \mathbb{E}[D  \vert  Z=0]> 0$. Then, under Assumption \ref{ass:mon}, there are no defiers, i.e., $p_{df}=0$. As a result, $p_c=\mathbb{E}[D  \vert  Z=1] - \mathbb{E}[D  \vert  Z=0]$, $p_a=\mathbb E[D\vert Z=0]$, and $p_n=\mathbb E[1-D\vert Z=1]$.
Equation \eqref{eq:noer1} implies 
\begin{eqnarray*}
    \mathbb P(Y_{11} \in A \vert T=c) &=& \frac{\mathbb P(Y\in A, D=1 \vert Z=1)- p_a \mathbb P(Y_{11} \in A \vert T=a)}{p_c}.
\end{eqnarray*}
Since $\mathbb P(Y_{11} \in A \vert T=c) \in [0,1],$ and $\mathbb P(Y_{11} \in A \vert T=a) \in [0,1]$, we have
\begin{eqnarray*}
   \max\left\{\frac{\mathbb P(Y\in A, D=1 \vert Z=1)- p_c}{p_a},0\right\} \leq \mathbb P(Y_{11} \in A \vert T=a) \leq \min\left\{\frac{\mathbb P(Y\in A, D=1 \vert Z=1)}{p_a},1\right\}.  
\end{eqnarray*}
Equation \eqref{eq:noer2} implies $\mathbb P(Y_{10}\in A \vert T=a)=\mathbb P(Y \in A \vert D=1, Z=0)$, while Equation \eqref{eq:noer3} implies $\mathbb P(Y_{01}\in A \vert T=n)=\mathbb P(Y \in A \vert D=0, Z=1)$. Finally, Equation \eqref{eq:noer4} implies
\begin{eqnarray*}
   \max\left\{\frac{\mathbb P(Y\in A, D=0 \vert Z=0)- p_c}{p_n},0\right\} \leq \mathbb P(Y_{00} \in A \vert T=n) \leq \min\left\{\frac{\mathbb P(Y\in A, D=0 \vert Z=0)}{p_n},1\right\},  
\end{eqnarray*}
and 
\begin{eqnarray*}
    \mathbb P(Y_{00} \in A \vert T=c) &=& \frac{\mathbb P(Y\in A, D=0 \vert Z=0)- p_n \mathbb P(Y_{00} \in A \vert T=n)}{p_c}.
\end{eqnarray*}
The above bounding approach builds on \cite{Horowitz1995IdentificationData}. We can then suitably take the expectations of the bounds and obtain the so-called \cite{Lee2009TrainingEffects} bounds. Let $\mu_{dzt}\equiv \mathbb E[Y_{dz} \vert T=t]$, $F_{dzt}\equiv F_{Y_{dz}\vert T=t}$. The following proposition \ref{prop:noer} derives sharp bounds on the distributions $F_{dzt}$.

\begin{proposition}\label{prop:noer}
Suppose Assumptions \ref{ass:RA} and \ref{ass:mon} hold, and $\mathbb{E}[D  \vert  Z=1] - \mathbb{E}[D  \vert  Z=0]> 0$. Then,
\begin{eqnarray*}
    && p_c=\mathbb{E}[D  \vert  Z=1] - \mathbb{E}[D  \vert  Z=0],\ p_a=\mathbb E[D\vert Z=0],\ p_n=\mathbb E[1-D\vert Z=1], p_{df}=0,\\
    && F_{11a}^{LB}(y)\equiv \max\left\{\frac{\mathbb P(Y\leq y, D=1 \vert Z=1)- p_c}{p_a},0\right\} \leq F_{11a}(y)\\
    && \qquad \qquad \qquad \qquad \qquad \qquad \qquad \qquad \qquad \qquad \qquad \leq \min\left\{\frac{\mathbb P(Y\leq y, D=1 \vert Z=1)}{p_a},1\right\} \equiv F_{11a}^{UB}(y),\\
    &&F_{00n}^{LB}(y) \equiv \max\left\{\frac{\mathbb P(Y\leq y, D=0 \vert Z=0)- p_c}{p_n},0\right\} \leq F_{00n}(y)\\
    && \qquad \qquad \qquad \qquad \qquad \qquad \qquad \qquad \qquad \qquad \qquad \leq \min\left\{\frac{\mathbb P(Y\leq y, D=0 \vert Z=0)}{p_n},1\right\} \equiv F_{00n}^{UB}(y),\\
    && F_{11c}(y) = \frac{\mathbb P(Y\leq y, D=1 \vert Z=1)- p_a F_{11a}(y)}{p_c},\\
    && F_{00c}(y) = \frac{\mathbb P(Y\leq y, D=0 \vert Z=0)- p_n F_{00n}(y)}{p_c},\\
    && F_{10a(y)}=\mathbb P(Y \leq y \vert D=1, Z=0),\ F_{01n(y)}=\mathbb P(Y \leq y \vert D=0, Z=1).
\end{eqnarray*}
These above bounds are pointwise sharp.   
\end{proposition}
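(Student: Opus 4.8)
The plan is to handle each of the four observed sub-distributions in \eqref{eq:noer1}--\eqref{eq:noer4} in turn, first establishing the displayed expressions and then arguing pointwise sharpness by an explicit construction. First I would record the preliminary consequences of Assumptions \ref{ass:RA} and \ref{ass:mon} under $\mathbb{E}[D \vert Z=1]-\mathbb{E}[D \vert Z=0]>0$: monotonicity rules out defiers, so $p_{df}=0$, and taking $A=\mathcal{Y}$ in \eqref{eq:noer1}--\eqref{eq:noer4} then pins down $p_c=\mathbb{E}[D \vert Z=1]-\mathbb{E}[D \vert Z=0]$, $p_a=\mathbb{E}[D \vert Z=0]$, and $p_n=\mathbb{E}[1-D \vert Z=1]$. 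With $p_{df}=0$, equations \eqref{eq:noer2} and \eqref{eq:noer3} collapse to single-type identities, so dividing by $p_a$ and $p_n$ respectively point-identifies $F_{10a}(y)=\mathbb{P}(Y\leq y \vert D=1,Z=0)$ and $F_{01n}(y)=\mathbb{P}(Y\leq y \vert D=0,Z=1)$. For $F_{11a}$ I would solve \eqref{eq:noer1} for $\mathbb{P}(Y_{11}\leq y \vert T=c)$, impose that both $\mathbb{P}(Y_{11}\leq y \vert T=a)$ and $\mathbb{P}(Y_{11}\leq y \vert T=c)$ lie in $[0,1]$, and read off the \cite{Horowitz1995IdentificationData}-type envelope $F_{11a}^{LB}\leq F_{11a}\leq F_{11a}^{UB}$; the residual identity for $F_{11c}$ is then immediate. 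The symmetric manipulation of \eqref{eq:noer4} yields the $F_{00n}$ bounds and the $F_{00c}$ residual.

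The validity (outer) direction of the bounds is routine from the above; the substance is sharpness. Fix a point $y_0$ and a target value $v\in[F_{11a}^{LB}(y_0),F_{11a}^{UB}(y_0)]$, and I would exhibit an admissible latent law achieving $F_{11a}(y_0)=v$. Let $\nu$ denote the observed sub-probability measure $\mathbb{P}(Y\in\cdot,\,D=1 \vert Z=1)$, which has total mass $p_a+p_c$ and splits at $y_0$ into masses $m_-=\nu((-\infty,y_0])$ and $m_+=(p_a+p_c)-m_-$. I would thin $\nu$ pointwise, assigning to type $a$ the fraction $p_a v/m_-$ of its mass below $y_0$ and the fraction $p_a(1-v)/m_+$ of its mass above $y_0$, with the complementary fractions going to type $c$. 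This produces two nonnegative measures of total mass $p_a$ and $p_c$ whose normalizations are valid c.d.f.s $F_{11a},F_{11c}$ satisfying \eqref{eq:noer1}, with $F_{11a}(y_0)=v$ by construction; the legitimacy constraints $0\leq p_a v\leq m_-$ and $0\leq p_a(1-v)\leq m_+$ are exactly the requirement $v\in[F_{11a}^{LB}(y_0),F_{11a}^{UB}(y_0)]$.

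It remains to extend this to a full joint law of $(Y_{11},Y_{10},Y_{01},Y_{00},D_0,D_1)$ honoring Assumptions \ref{ass:RA} and \ref{ass:mon} and reproducing all four observed panels. I would keep $p_{df}=0$ so that $D_1\geq D_0$ and MON holds, fix $F_{10a}$ and $F_{01n}$ at their point-identified values, run the identical thinning on $\mathbb{P}(Y\in\cdot,\,D=0 \vert Z=0)$ to realize any prescribed $F_{00n}(y_0)$ with its residual $F_{00c}$, and assign the never-observed potential outcomes (for example $Y_{01},Y_{00}$ within $T=a$ and $Y_{10},Y_{01}$ within $T=c$) arbitrary marginals coupled in any convenient way, since no observed restriction constrains them; independence from $Z$ is imposed by fiat, which is consistent precisely because $Z$ is randomized. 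Expectations of these sharp distributional bounds then deliver the \cite{Lee2009TrainingEffects} bounds referenced in the text.

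The main obstacle is the bookkeeping of this construction rather than any single hard inequality: I must verify that the thinned residual $F_{11c}$ is genuinely a c.d.f.\ (monotone and right-continuous, which is automatic here because it is the normalization of a nonnegative measure), handle the degenerate cases $m_-=0$ or $m_+=0$ (where the bound interval collapses and the target is forced), and confirm that the assembled latent law returns each observed sub-distribution through \eqref{eq:noer1}--\eqref{eq:noer4}. The case $\mathbb{E}[D \vert Z=1]-\mathbb{E}[D \vert Z=0]<0$ requires no separate argument, following by the relabeling $Z\mapsto 1-Z$ via the $\tilde{Z}$ device used earlier in the paper.
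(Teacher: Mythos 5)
Your proposal is correct and follows essentially the same route as the paper: monotonicity kills the defiers, the type probabilities follow by taking $A=\mathcal{Y}$, equations \eqref{eq:noer2}--\eqref{eq:noer3} collapse to point-identify $F_{10a}$ and $F_{01n}$, and the $F_{11a}$, $F_{00n}$ envelopes come from solving the mixture identities and imposing that both conditional probabilities lie in $[0,1]$, exactly as in the paper's Horowitz--Manski-style argument with $A=(-\infty,y]$. Where you go beyond the paper is on sharpness: the paper disposes of it in one sentence ("one can easily check that the bounds are well-defined cdfs"), whereas your explicit thinning of the observed sub-measure $\mathbb{P}(Y\in\cdot,D=1\vert Z=1)$ at the point $y_0$ — with the feasibility constraints $0\leq p_a v\leq m_-$ and $0\leq p_a(1-v)\leq m_+$ shown to coincide exactly with $v\in[F_{11a}^{LB}(y_0),F_{11a}^{UB}(y_0)]$ — supplies the attainability construction that the paper leaves implicit, and your assembly of the full latent law (arbitrary marginals for the unconstrained potential outcomes, independence from $Z$ by fiat) is the right way to make that one-liner rigorous.
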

Proposition \ref{prop:noer} provides important results that help derive sharp bounds on the local average treatment-controlled direct effects for the always-takers and never-takers. A direct implication of this proposition is Corollary \ref{cor:noer1} in Appendix \ref{propC}
, which derives sharp bounds on the parameters $\mu_{dzt}$. Building on the results in Corollary \ref{cor:noer1}, Corollary \ref{cor:noer2} below provides sharp bounds on the parameters $\delta_{1a}$, $\delta_{0n}$, and $\theta_{1c}+\delta_{0c}=\theta_{0c}+\delta_{1c}$. To the best of our knowledge, this is the first paper to formally derive sharp bounds on these parameters under the random assignment and monotonicity assumptions. Note however that $\delta_{1a}$ and $\delta_{0n}$ coincide with the local net average treatment effect (LNATE) for the always-takers and never-takers, respectively, defined in \cite{flores2013partial}. In general, $LNATE_t\equiv \mathbb E[Y_{D_0 1}-Y_{D_0 0}\vert T=t]$ is different from our LATCDE parameter $\delta_{dt}$.

\begin{corollary}\label{cor:noer2}
   Suppose Assumptions \ref{ass:RA} and \ref{ass:mon} hold, and $\mathbb{E}[D  \vert  Z=1] - \mathbb{E}[D  \vert  Z=0]> 0$. Then,
\begin{eqnarray*}
    && p_c=\mathbb{E}[D  \vert  Z=1] - \mathbb{E}[D  \vert  Z=0],\ p_a=\mathbb E[D\vert Z=0],\ p_n=\mathbb E[1-D\vert Z=1], p_{df}=0,\\
    && \mu_{F_{11a}^{UB}}-\mathbb E[Y \vert D=1, Z=0] \leq \delta_{1a} \leq \mu_{F_{11a}^{LB}}-\mathbb E[Y \vert D=1, Z=0],\\
    &&\mathbb E[Y \vert D=0, Z=1]-\mu_{F_{00n}^{LB}} \leq \delta_{0n} \leq \mathbb E[Y \vert D=0, Z=1]-\mu_{F_{00n}^{UB}},\\
    && \delta_{0c}+\theta_{1c}=\delta_{1c}+\theta_{0c} = \frac{\mathbb E[Y D \vert Z=1]- p_a \mu_{11a}}{p_c}-\frac{\mathbb E[Y (1-D) \vert Z=0]- p_n \mu_{00n}}{p_c},\\
    && \mu_{F_{11a}^{UB}} \leq \mu_{11a} \leq \mu_{F_{11a}^{LB}},\ \mu_{F_{00n}^{UB}} \leq \mu_{00n} \leq \mu_{F_{00n}^{LB}}.
\end{eqnarray*}   
These bounds are sharp.  

\end{corollary}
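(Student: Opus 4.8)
The plan is to reduce the whole statement to the point identification of two conditional means combined with the sharp distributional bounds already established in Proposition \ref{prop:noer}. Since $\mathbb{E}[D|Z=1]-\mathbb{E}[D|Z=0]>0$, the monotonicity Assumption \ref{ass:mon} forces $p_{df}=0$, so $p_a=\mathbb{E}[D|Z=0]$, $p_c=\mathbb{E}[D|Z=1]-\mathbb{E}[D|Z=0]>0$, and $p_n=\mathbb{E}[1-D|Z=1]$ are identified. Setting $p_{df}=0$ in Equations \eqref{eq:noer2} and \eqref{eq:noer3} collapses each of those mixtures to a single type, yielding the point identification $\mu_{10a}=\mathbb{E}[Y|D=1,Z=0]$ and $\mu_{01n}=\mathbb{E}[Y|D=0,Z=1]$.

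Next I would obtain the bounds on the controlled direct effects by simple differencing. Because $\delta_{1a}=\mu_{11a}-\mu_{10a}$ with $\mu_{10a}$ point identified and $\mu_{11a}$ sharply bounded by $\mu_{F_{11a}^{UB}}\le\mu_{11a}\le\mu_{F_{11a}^{LB}}$ (Corollary \ref{cor:noer1}, which integrates the distributional bounds of Proposition \ref{prop:noer}), subtracting the constant $\mathbb{E}[Y|D=1,Z=0]$ shifts the interval and gives the stated bounds. For $\delta_{0n}=\mu_{01n}-\mu_{00n}$ the same reasoning applies, except that $\mu_{00n}$ enters with a negative sign, so the endpoints swap: $\mu_{F_{00n}^{LB}}$ produces the lower bound and $\mu_{F_{00n}^{UB}}$ the upper bound. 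For the complier term I would first verify the algebraic identity $\theta_{1c}+\delta_{0c}=\theta_{0c}+\delta_{1c}=\mu_{11c}-\mu_{00c}$ by expanding each parameter through its definition, noting that the intermediate means $\mu_{01c}$ and $\mu_{10c}$ cancel in the two orderings, and then substitute the residual expressions $\mu_{11c}=(\mathbb{E}[YD|Z=1]-p_a\mu_{11a})/p_c$ and $\mu_{00c}=(\mathbb{E}[Y(1-D)|Z=0]-p_n\mu_{00n})/p_c$ obtained from Equations \eqref{eq:noer1} and \eqref{eq:noer4}.

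The remaining work is sharpness. All three targets are affine functions of the pair $(\mu_{11a},\mu_{00n})$, so their joint identified set is the image of the identified set for that pair. The key observation is that $\mu_{11a}$ is constrained only through the $\{Z=1,D=1\}$ cell and $\mu_{00n}$ only through the $\{Z=0,D=0\}$ cell; since the exclusion restriction has been dropped, $Y_{11}$ and $Y_{00}$ are a priori unrelated, so no constraint ties the two quantities together and they range independently over their respective sharp intervals. To certify attainability, I would invoke the distributional sharpness of Proposition \ref{prop:noer}: for any admissible $(F_{11a},F_{00n})$ one takes $F_{11c},F_{00c}$ to be the implied residual mixtures, sets $F_{10a},F_{01n}$ to their identified values, and assigns the remaining data-unconstrained potential-outcome marginals (such as $Y_{01}|a$, $Y_{11}|n$, and the complier cross-margins) arbitrarily, coupling within each type however is convenient.

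The hard part will be this last construction: one must exhibit a genuine joint law of $(Y_{00},Y_{01},Y_{10},Y_{11},T)$ that satisfies Assumption \ref{ass:RA}, reproduces the observed distribution of $(Y,D,Z)$, and realizes the chosen $(\mu_{11a},\mu_{00n})$. The delicate points are checking that the residual complier distributions $F_{11c}$ and $F_{00c}$ remain valid cumulative distribution functions, including at the interval endpoints where the bounds bind, and confirming that the apparent independence of $\mu_{11a}$ and $\mu_{00n}$ is not secretly broken by the mixture constraints. Away from the endpoints this is immediate; the boundary cases are where the care is needed. Once such a coupling is shown to exist, sharpness of all three bounds follows, since every point of the affine image is attained and no value outside it is compatible with the data.
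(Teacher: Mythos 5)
Your proposal is correct and follows essentially the same route as the paper: express $\delta_{1a}$, $\delta_{0n}$, and $\theta_{1c}+\delta_{0c}=\theta_{0c}+\delta_{1c}=\mu_{11c}-\mu_{00c}$ as affine functions of the building blocks $\mu_{dzt}$, then import the point identification of $\mu_{10a},\mu_{01n}$ and the sharp bounds on $\mu_{11a},\mu_{00n}$ from Proposition \ref{prop:noer} and Corollary \ref{cor:noer1}. In fact you are more explicit than the paper's one-line argument, particularly on the joint attainability of $(\mu_{11a},\mu_{00n})$ across the two disjoint data cells, which the paper leaves implicit in its sharpness claim.
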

Proposition \ref{prop:noer} and Corollaries \ref{cor:noer2}-\ref{cor:noer1} are intermediate results that are key to deriving the identified set for the vector of parameters $\Gamma$. 
The identified set for $\Gamma$ is 
\begin{eqnarray*} 
\Theta_I(A_3)=\left\{\begin{array}{l}
        \Theta_I^1(A_3),\ \text{ if }\ \mathbb{E}[D  \vert  Z=1] - \mathbb{E}[D  \vert  Z=0]> 0,\\
        \Theta_I^2(A_3),\ \text{ if }\ \mathbb{E}[D  \vert  Z=1] - \mathbb{E}[D  \vert  Z=0]< 0,\\
       \Theta_I^3(A_3),\ \text{ if }\ \mathbb{E}[D  \vert  Z=1] - \mathbb{E}[D  \vert  Z=0]= 0,
        \end{array}\right.
\end{eqnarray*} 
where
\begin{eqnarray*}
    \Theta_I^1(A_3) &=& \Bigg \{\Gamma \in \mathbb R^{20}: \theta_{zt}=\mu_{1zt}-\mu_{0zt},\ t\in \{a,c,df,n\},\ z\in \{0,1\},\\ 
    && \qquad \delta_{dt} = \mu_{d1t}-\mu_{d0t},\ t\in \{a,c,df,n\},\ d\in\{0,1\},\\
   && \qquad \mu_{10a}=\mathbb E[Y\vert D=1, Z=0],\ \mu_{01n}=\mathbb E[Y\vert D=0,Z=1],\\
   && \qquad \mu_{F_{11a}^{UB}} \leq \mu_{11a} \leq \mu_{F_{11a}^{LB}},\ \mu_{F_{00n}^{UB}} \leq \mu_{00n} \leq \mu_{F_{00n}^{LB}},\\
   && \qquad \mu_{11c} = \frac{\mathbb E[Y D \vert Z=1]- p_a \mu_{11a}}{p_c},\ \mu_{00c} = \frac{\mathbb E[Y (1-D) \vert Z=0]- p_n \mu_{00n}}{p_c},\\
   && \qquad \mu_{10n}, \mu_{11n}, \mu_{01a}, \mu_{00a}, \mu_{10c}, \mu_{01c}, \mu_{11df}, \mu_{10df}, \mu_{01df}, \mu_{00df} \in [\inf \mathcal Y, \sup \mathcal Y],\\
   && \qquad p_a=\mathbb{E}[D  \vert  Z=0],\ p_c=\mathbb{E}[D  \vert  Z=1] - \mathbb{E}[D  \vert  Z=0],\ p_{df}=0,\  p_n=\mathbb{E}[1-D  \vert  Z=1]  
   \Bigg \},
\end{eqnarray*}
$\Theta_I^2(A_3)=\tilde{\Theta}_I^1(A_3),$ where $\tilde{\Theta}_I^1(A_3)$ is the corresponding $\Theta_I^1(A_3)$ defined for $\tilde{\Gamma}$ based on $(Y,D,\tilde{Z})$ and $\tilde{T}$, 
and
\begin{eqnarray*}
    \Theta_I^3(A_3) &=& \Bigg \{\Gamma \in \mathbb R^{20}: \theta_{zt}=\mu_{1zt}-\mu_{0zt},\ t\in \{a,c,df,n\},\ z\in \{0,1\},\\ 
    && \qquad \delta_{dt} = \mu_{d1t}-\mu_{d0t},\ t\in \{a,c,df,n\},\ d\in\{0,1\},\\ 
   && \qquad \mu_{10a}=\mathbb E[Y\vert D=1, Z=0],\ \mu_{01n}=\mathbb E[Y\vert D=0,Z=1],\\
   &&\qquad \mu_{11a} = \mathbb{E}[Y \mid D = 1, Z = 1],\ \mu_{00n} = \mathbb{E}[Y \mid D = 0, Z = 0],\\
   && \qquad \mu_{10n}, \mu_{11n}, \mu_{01a}, \mu_{00a}, \mu_{10c}, \mu_{01c}, \mu_{11c}, \mu_{00c}, \mu_{11df}, \mu_{10df}, \mu_{01df}, \mu_{00df} \in [\inf \mathcal Y, \sup \mathcal Y],\\
   && \qquad p_a=\mathbb{E}[D],\ p_c=p_{df}=0,\  p_n=\mathbb{E}[1-D]  
   \Bigg \}.
\end{eqnarray*}

\begin{remark}
We can express $\mathbb{E}[Y \mid Z = 1]$ as 
\begin{equation*}
    \begin{aligned}
        \mathbb{E}[Y \mid Z = 1] =& \mathbb{E}[Y_{11} D_1 + Y_{01} (1 - D_1) \mid Z = 1] \\
        =& \mathbb{E}[Y_{11} D_1 + Y_{01} (1 - D_1)] \\
        =& \mathbb{E}[Y_{11} \mid T = a] p_a + \mathbb{E}[Y_{11} \mid T = c] p_c + \mathbb{E}[Y_{01} \mid T = n] p_n,
    \end{aligned}
\end{equation*}
where the second equality holds under Assumption \ref{ass:RA}, and we can apply the law of iterated expectation to obtain the third equality. Using the similar reasoning, we can also write $\mathbb{E}\left[Y \mid Z = 0\right]$ as
\begin{equation*}
    \begin{aligned}
        \mathbb{E}[Y \mid Z = 0] =& \mathbb{E}[Y_{10} D_0 + Y_{00} (1 - D_0) \mid Z = 0] \\
        =& \mathbb{E}[Y_{10} D_0 + Y_{00} (1 - D_0)] \\
        =& \mathbb{E}[Y_{10} \mid T = a] p_a + \mathbb{E}[Y_{00} \mid T = c] p_c + \mathbb{E}[Y_{00} \mid T = n] p_n.
    \end{aligned}
\end{equation*}



As is customary, define the intent-to-treat (ITT) parameter as $\operatorname{ITT} \equiv \mathbb{E}[Y \mid Z = 1] - \mathbb{E}[Y \mid Z = 0]$. Then 
\begin{equation*}
    \begin{aligned}
        \operatorname{ITT} =& \mathbb{E}[Y \mid Z = 1] - \mathbb{E}[Y \mid Z = 0] \\
        =& \mathbb{E}[Y_{11} - Y_{10} \mid T = a] p_a + \mathbb{E}[Y_{01} - Y_{00} \mid T = n] p_n + \mathbb{E}[Y_{11} - Y_{00} \mid T = c] p_c \\
        =& \delta_{1a} p_a + \delta_{0n} p_n + \left[ \delta_{1c}+\theta_{0c} \right] p_c.
    \end{aligned}
\end{equation*}
Since $p_a, p_n, p_c \geq 0$, and $p_a + p_n + p_c = 1$ under Assumption \ref{ass:mon}, we have shown that $\operatorname{ITT}$ can be expressed as a convex combination of $\delta_{1a}$, $\delta_{0n}$, and $\delta_{1c}+\theta_{0c}$.
\end{remark}

\begin{remark}(Testable implication)
    In Corollary \ref{cor:noer1}, we point identify $\mathbb{E}[Y_{10} \mid T = a]$ and $\mathbb{E}[Y_{01} \mid T = n]$, and partially identify $\mathbb{E}[Y_{11} \mid T = a]$ and $\mathbb{E}[Y_{00} \mid T = n]$. Those identification results can be exploited as a testable implication for the exclusion restriction. Since the exclusion restriction states that $Y_{11} = Y_{10}$ and $Y_{01} = Y_{00}$, it implies that $\mathbb{E}[Y_{11} \mid T = a] = \mathbb{E}[Y_{10} \mid T = a]$ and $\mathbb{E}[Y_{01} \mid T = n] = \mathbb{E}[Y_{00} \mid T = n]$. Therefore, if the point identified $\mathbb{E}[Y_{10} \mid T = a]$ is not in the identified set of $\mathbb{E}[Y_{11} \mid T = a]$, or the point identified $\mathbb{E}[Y_{01} \mid T = n]$ is not contained in the identified set of $\mathbb{E}[Y_{00} \mid T = n]$, we have enough evidence to reject the exclusion restriction assumption \ref{ass:er}. However, this testable implication is not sharp for testing the exclusion restriction assumption \ref{ass:er}. \citet[Proposition 3.2 (ii)]{Kitagawa2021TheIndependence} showed that inequality~\eqref{test:imp} is the sharp testable implication for the random assignment and exclusion restriction assumptions \ref{ass:RA} and \ref{ass:er} together. This result is confirmed in the sharpness proof of Proposition \ref{prop_pdf} in this paper. In a randomized experiment setting like the one considered in this paper, the testable inequality~\eqref{test:imp} can be interpreted as a testable implication for only the exclusion restriction assumption. 
\end{remark}

\begin{remark}
Notice that unlike $\Theta_I(A_1)$ and $\Theta_I(A_2)$ which could be empty, $\Theta_I(A_3)$ is never empty. This implies that Assumptions \ref{ass:RA} and \ref{ass:mon} do not have any testable implication. 
\end{remark}

\section{Misspecification robust bounds}\label{sec:mrb}
In this section, we follow \cite{li2024discordant} to derive a misspecification robust bound for the model and the various assumptions considered in this paper. This is a discrete way of salvaging the model when it is refuted by the data. This relaxation is different from the continuous relaxation approach introduced in \cite{Masten2021SalvagingModels}. We find this discrete relaxation easier to interpret than the continuous one. Before we proceed to show the misspecification robust bound for the collection of assumptions $A$, we discuss when each of the assumptions $A_1$, $A_2$, and $A_3$ is data-consistent. According to \cite{li2024discordant}, an assumption $A_k$ is data-consistent if its identified set is nonempty. Therefore, $A_1$ is data-consistent if $\Theta_I^1(A_1)$ is nonempty when $\mathbb E[D\mid Z=1]-\mathbb E[D\mid Z=0]>0$, $\Theta_I^2(A_1)$ is nonempty when $\mathbb E[D\mid Z=1]-\mathbb E[D\mid Z=0]<0$, and $\Theta_I^3(A_1)$ is nonempty when $\mathbb E[D\mid Z=1]-\mathbb E[D\mid Z=0]=0$. Assumption $A_2$ is data-consistent if inequality \eqref{test:imp} holds. Assumption $A_3$ is always data-consistent as its identified set is never empty. 
\begin{definition}[\citeauthor{li2024discordant}, \citeyear{li2024discordant}]
An assumption $A_k\in A$ is a minimum data-consistent relaxation of $A$ if $\Theta_I(A_k)\neq \emptyset$ and $\Theta_I(A_k\cup A_{k'})=\emptyset$ for any $A_{k'} \neq A_k: A_{k'} \in A$. 
\end{definition}
The misspecification robust bound $\Theta_I^*(A)$ is defined as the union of the identified sets of \textit{all} minimum data-consistent relaxations of $A$. Note here that $A_2 \cup A_3=A_1$. Then $\Theta_I(A)=\Theta_I(A_1\cup A_2\cup A_3)=\Theta_I(A_1).$ If $\Theta(A)\neq \emptyset$, there exists a unique minimum data-consistent relaxation by definition, it is $A_1$. But, if $\Theta_I(A_1)=\emptyset$ and $\Theta_I(A_2)\neq \emptyset$, we have two minimum data-consistent relaxations, $A_2$ and $A_3$. Finally, if $\Theta(A_2)=\emptyset$ (i.e., inequality~\eqref{test:imp} is violated), then $A_3$ will be the only minimum data-consistent relaxation. Hence, we have
\begin{eqnarray*}
\Theta_I^{*}(A)&=&\left\{\begin{array}{l}
       \Theta_I(A_1)\ \ \text{ if } \Theta_I(A_1)\neq \emptyset,\\
\Theta_I(A_{2})\cup\Theta_I(A_{3})\ \ \text{ if } \Theta_I(A_1) = \emptyset \text{ and } \Theta_I(A_2) \neq \emptyset,\\
\Theta_I(A_3)\ \ \text{ if } \Theta_I(A_2) = \emptyset
        \end{array}\right.\\ \\
&=&\left\{\begin{array}{l}
\Theta_I(A_1)\ \ \text{ if } \Theta_I(A_1)\neq \emptyset,\\
\Theta_I(A_{2})\cup \Theta_I(A_{3})\ \ \text{ if } \Theta_I(A_1) = \emptyset.
\end{array}\right.
\end{eqnarray*}
In the next section, we illustrate the theoretical results with a real-world application. 
\section{Empirical illustration}\label{sec:emp}


In this section, we illustrate our methodology by revisiting \cite{bursztyn2020misperceived}, which is also re-examined by \cite{kwon2024testing} in a mechanism framework. \cite{bursztyn2020misperceived} study how misperceived social norms affect women's labor force participation in Saudi Arabia. They find that while many men privately support female employment, they incorrectly believe that their peers do not share this view. To study the effects of correcting these misperceptions, the authors design a randomized experiment in which men are randomly informed that a majority of their peers actually support female employment. The results show that women whose husbands receive the information are significantly more likely to register for a job-search service. In addition, the study finds persistent effects on women's employment behavior, as women from treated households are more likely to be employed several months after the intervention.

To align this study with our framework, we define the women long-term labor supply outcome\footnote{\cite{bursztyn2020misperceived} construct an index by pooling six measures of women's long-term labor supply outcomes, which we use as the outcome in our analysis.} as the outcome $Y$, the decision to sign up for a job-search service as the binary treatment variable $D$, and the information regarding social norms as a randomized experiment $Z$. Given that the information experiments in this study are randomly assigned, we maintain Assumption \ref{ass:RA} throughout the analysis. The dataset applied here comes from the online appendix of \cite{bursztyn2020misperceived}. Our sample is restricted to observations with non-missing data on women's long-term labor supply outcomes, resulting in a final sample size of 381 observations. Some summary statistics are provided in Table~\ref{cardsum}. We observe that households assigned to the treatment group ($Z=1$) are more likely to enroll in job-search services compared to those in the control group ($Z=0$). Additionally, treated households, on average, have better long-term female labor force outcomes than the control households.

The results presented in this section are entirely based on estimation. We provide a brief discussion of the inference methods and present the corresponding inference results in Appendix \ref{esti_inf}.

\begin{table}
	\centering
 \caption{Summary Statistics} 
\begin{tabular}{lccc}
\toprule
& Total & $Z = 1$ & $Z = 0$ \\
\midrule
Observations & 381 & 190 & 191 \\
$D$  (job-search service) & $0.294\ (0.445)$ & $0.340\ (0.475)$ & $0.247\ (0.433)$ \\
$Z$  (exposure to information) & $0.501\ (0.500)$ & $1.000\ (0.000)$ & $0.000\ (0.000)$ \\
$Y$  (women labor force index) & $0.006\ (0.499)$ & $0.130\ (0.571)$ & $-0.119\ (0.374)$ \\
\bottomrule 
Average (standard deviation).
\end{tabular}
\label{cardsum}
\end{table}

\subsection{Identified set under $A_1$}
We estimate the difference in probabilities, $\mathbb{P}(D=1 \mid Z=1)-\mathbb{P}(D=1 \mid Z=0)$, and obtain a result of $\hat{\mathbb{P}}(D=1 \mid Z=1)-\hat{\mathbb{P}}(D=1 \mid Z=0) = 0.0929 > 0$. Given this positive estimate, we want to test the inequalities \eqref{eq:testimpl1}-\eqref{eq:testimpl2}, which are the sharp testable implications of the joint assumptions in $A_1 = \{RA, ER, MON\}$. We apply the testing method proposed in \cite{Mourifie2017TestingAssumptions} and reject the validity of inequalities \eqref{eq:testimpl1}-\eqref{eq:testimpl2} at the 1\% level. This rejection implies that the identified set $\Theta_I(A_1)$ is empty, suggesting that the assumptions in $A_1$ are not consistent with the dataset used in \cite{bursztyn2020misperceived}.

The application in \cite{kwon2024testing} differs from ours. 
First, we use a continuous outcome variable constructed in \cite{bursztyn2020misperceived}, which is an index summarizing six measures of female labor force participation, including an indicator for whether the wife applies for jobs outside the home. In contrast, \cite{kwon2024testing} use only this binary indicator as their outcome variable.
Second, in our  notation, the primary objective of \cite{kwon2024testing} is to test the validity of the exclusion restriction  assumption \ref{ass:er}. For their implementation, they assume monotonicity (Assumption \ref{ass:mon}) and find that the set of assumptions $A_1$ is rejected. So, the rejection of $A_1$ implies that the random assignment $Z$ violates the exclusion restriction if monotonicity holds. They also estimate the average direct effects of the random assignment on the outcome of always-takers and never-takers. Finally, they perform a sensitivity analysis by gradually relaxing the monotonicity assumption, assuming a certain proportion of defiers to exist.
In contrast, our method constructs misspecification robust bounds, as discussed in Section \ref{sec:mrb}, by applying the minimum data-consistent relaxation of the LATE assumptions. Unlike \cite{kwon2024testing}, we do not impose any prior restrictions on the proportion of defiers. Instead, we discretely relax the LATE assumptions in a manner that ensures they remain consistent with the observed data.

\subsection{Identified set under $A_2$}

Next, we relax Assumption \ref{ass:mon} and consider identification under the assumption set $A_2 = \{RA, ER\}$. Based on Proposition \ref{prop_pdf}, we estimate the identified set for the proportion of defiers as
\begin{equation*}
    \widehat{\Theta}_I(p_{df}) = [0.1038, 0.2474].
\end{equation*}
Since this identified set is nonempty, Proposition \ref{prop_pdf} implies that we fail to reject the inequality \eqref{test:imp}. Therefore, the combination of assumptions in $A_2$ cannot be rejected. Table~\ref{tab:type_prop_no_mon} reports the estimated identified sets for the proportions of types under the assumption set~$A_2$.

\begin{table}
    \centering
    \caption{Estimated proportion of each type under $A_2$}
      \begin{tabular}{cr}
      \toprule
      \multicolumn{1}{c}{Parameter} & \multicolumn{1}{c}{Estimated identified set} \\
      \midrule
      $p_a$ & [0, 0.1436] \\
      $p_c$ & [0.1967, 0.3403] \\
      $p_n$ & [0.4123, 0.5559] \\
       $p_{df}$    & [0.1038, 0.2474]\\
      Observations & 381 \\
      \bottomrule
      \end{tabular}%
    \label{tab:type_prop_no_mon}%
\end{table}%

Building on the analysis in Section \ref{sec:ra_er:ass}, we derive informative identified sets for the local average instrument-controlled direct effects for compliers, $\theta_{1c} = \theta_{0c}$, and for defiers, $\theta_{1df} = \theta_{0df}$. Figure \ref{fig:est_id_sets} presents the estimated bounds for these identified sets.

\begin{figure}
	\centering
	\includegraphics[scale=0.7]{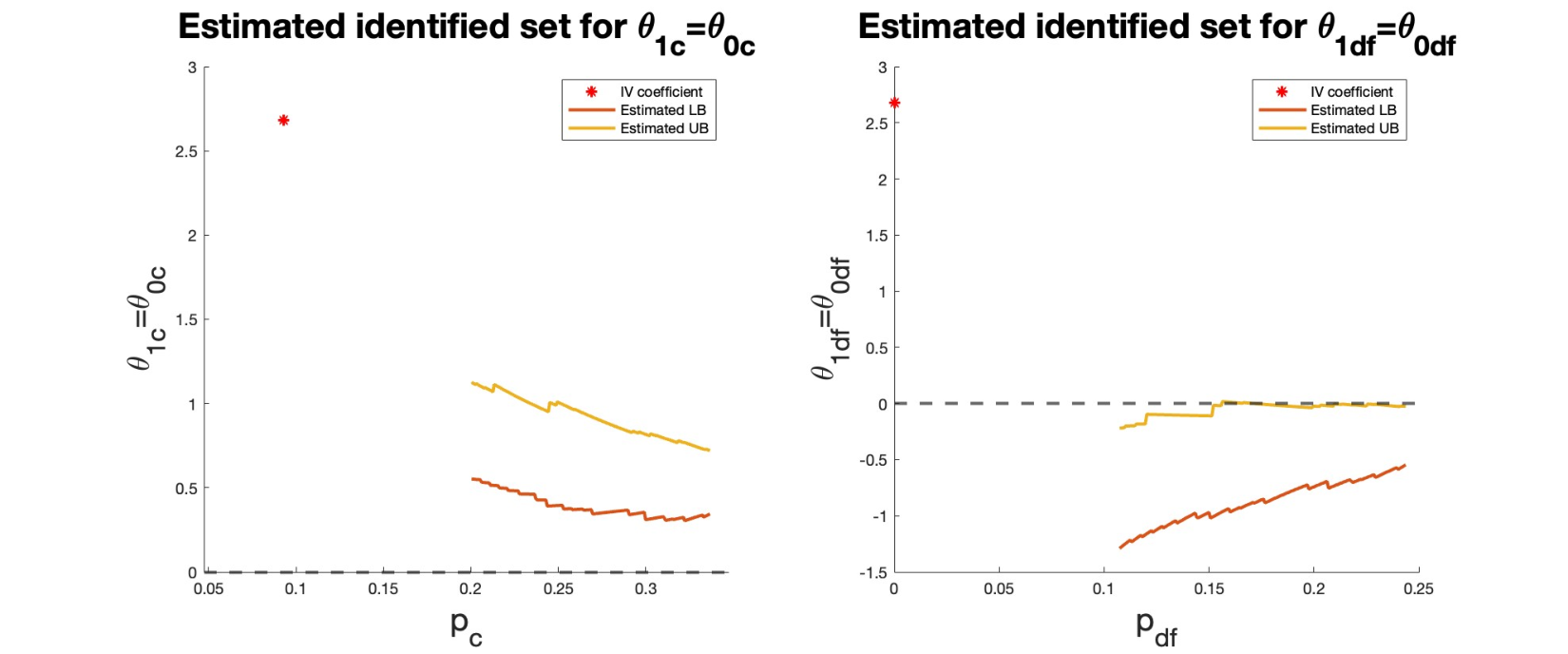}
	\caption{Estimated bounds for LAICDEs under $A_2$ and IV estimate}
	\label{fig:est_id_sets}
\end{figure}

Figure \ref{fig:est_id_sets} shows the heterogeneous effects of job-search services across different household types. Specifically, we observe positive treatment effects for compliers across all values of $p_c$ within the identified set. In complier households, individuals enroll in job-search services only when they are informed about the social norms. Upon learning that more women are employed than they initially thought, they may perceive an increased likelihood of getting a job, as a larger female workforce indicates greater opportunities in the labor market, which motivates them to comply. These characteristics of complier households align with the increase effects on long-term female labor force participation following their enrollment in job-search services. 

In contrast, we observe negative treatment effects for defiers across most values of $p_{df}$ within the identified set. Unlike compliers, women in defier households may regard increased female workforce as a signal of intensified competition. This may discourage them from seeking jobs, particularly if their relative advantage has diminished. Consequently, they choose not to enroll in job-search services once they are informed about the social norms. These characteristics of defier households are consistent with the decreased effects on long-term female labor force participation following enrollment in job-search services, as they may face greater competition in the female labor market.

Notably, the IV estimate $(2.68)$ falls outside our estimated bounds. It overestimates the effect of job-search services for compliers relative to our identified set.

\subsection{Identified set under $A_3$}
Finally, we relax Assumption \ref{ass:er} while maintaining Assumption \ref{ass:mon} and study identification under the assumption set $A_3 = \{RA, MON\}$. As discussed in Section \ref{sec:ra_mon:ass}, the identified set under $A_3$ is always non-empty, implying that the joint assumptions in $A_3$ can never be rejected.

With Assumption \ref{ass:mon} imposed, defiers cannot exist and all proportions of types can be point identified in this framework. Table \ref{tab:type_prop} presents the estimations of type proportions under $A_3$. The results show that never-takers constitute the largest proportion of the population, accounting for 70.68\% of all observations. This finding suggests that the majority of households in the dataset would not sign up job-search services for women, regardless of whether or not they receive the information about correct social norms.

\begin{table}
    \centering
    \caption{Estimated proportion of each type under $A_3$}
      \begin{tabular}{cr}
      \toprule
      \multicolumn{1}{c}{Parameter} & \multicolumn{1}{c}{Estimated proportion} \\
      \midrule
      $p_a$ & 0.2474 \\
      $p_c$ & 0.0929 \\
      $p_n$ & 0.7068 \\
       $p_{df}$    & 0 \\
      Observations & 381 \\
      \bottomrule
      \end{tabular}%
    \label{tab:type_prop}%
\end{table}%

According to the identified set $\Theta_I(A_3)$, we have informative identified sets for the local average treatment-controlled direct effects for always-takers and never-takers, $\delta_{1a}$ and $\delta_{0n}$, and the total effects for compliers, $\delta_{1c} + \theta_{0c} = \delta_{0c} + \theta_{1c}$. Table \ref{tab:est_bounds} presents the estimated results for these identified sets.

\begin{table}
    \centering
    \caption{Estimated informative identified sets under $A_3$}
      \begin{tabular}{lrr}
      \toprule
        Parameter    & \multicolumn{1}{c}{Estimated Bounds} \\
      \midrule
      $\delta_{1a}$ \quad \quad \quad \quad & [0.0557,0.7155] \\
      $\delta_{0n}$ \quad \quad \quad \quad & [0.0829,0.2505] \\
      $\delta_{1c}+\theta_{0c}$ \quad \quad \quad \quad& [-0.7140,1.9242] \\
      \bottomrule
      \end{tabular}%
    \label{tab:est_bounds}%
\end{table}%

The results in Table \ref{tab:est_bounds} indicate that the information experiment has positive direct effects on long-term female labor force outcomes for both always-takers and never-takers, who together account for nearly 90\% of the population. This finding implies that, even though the information experiment does not affect job-search service enrollment decisions in these households, exposure to correct social norm information directly benefits women's long-term labor force participation.

\subsection*{Misspecification robust bound} Since $\widehat{\Theta}_I(A_2)$ and $\widehat{\Theta}_I(A_3)$ are nonempty, the estimated misspecification robust bound in this empirical example is $\widehat{\Theta}^*_I(A)=\widehat{\Theta}_I(A_2) \cup \widehat{\Theta}_I(A_3)$. We explained earlier in Section \ref{sec:mrb} how this result is achieved. 

\section{Conclusion}
\label{sec:summary}
  In this paper, we propose a robust identification approach in a randomized experiment setting with imperfect compliance where the LATE assumptions could potentially be rejected by the data. We propose bounds that are never empty and robust to detectable violations of the LATE assumptions. The bounds are also robust to violations of the monotonicity or exclusion restriction assumptions. 
  We introduce two causal parameters: the local average treatment-controlled direct effect, and the local average instrument-controlled direct effect. We apply our methods to revisit \cite{bursztyn2020misperceived} and study the effect of randomly providing accurate social norm information on the decision to enroll women in job-search services, which may affect women's labor force participation in Saudi Arabia. We show that the LATE assumptions are jointly rejected in this context. However, the assumptions cannot be rejected when either the monotonicity or exclusion restriction is relaxed. Under the assumptions of random assignment and the exclusion restriction, we find that job-search service enrollment has a positive effect on long-term female labor force outcomes for compliers, whereas the effect is negative for defiers. Under the assumptions of random assignment and monotonicity, our results suggest that the information experiment itself has positive direct effects on long-term female labor force outcomes for both always-takers and never-takers, who together account for more than 90\% of the population.
  

  The approach developed in this paper works well for a binary instrument. An interesting area for future research would consist of extending the proposed method to a multivalued discrete or continuous instrument in the context of marginal treatment effects introduced in \cite{Heckman2001Policy-RelevantEffects, Heckman2005StructuralEvaluation}. One challenge is that the number of causal parameters would be infinite in this setting. Recently, \cite{sun2022pairwise} proposed a pairwise validity approach for multivalued discrete instruments in the LATE framework. A combination of their method with the one developed in this paper could be an interesting area for future research. 
  Another consideration for future research could be how to incorporate exogenous covariates in the analysis. While the approach still works conditional on covariates, it may suffer from the curse of dimensionality as the number of parameters grows with the cardinality of the support of the covariates. Semiparametric assumptions could also enter the set $A$ of assumptions considered in the paper. 


\onehalfspacing
\bibliographystyle{jpe}
\renewcommand\refname{Reference}
\bibliography{Template_Bib}

\begin{thebibliography}{35}
\newcommand{\enquote}[1]{``#1''}
\providecommand{\natexlab}[1]{#1}
\providecommand{\url}[1]{\texttt{#1}}
\providecommand{\urlprefix}{URL }

\bibitem[{Angrist, Imbens, and Rubin(1996)}]{Angrist1996IdentificationVariables}
Angrist, Joshua~D., Guido~W. Imbens, and Donald~B. Rubin. 1996.
\newblock \enquote{{Identification of Causal Effects Using Instrumental Variables}.}
\newblock \emph{Journal of the American Statistical Association} 91~(434):444--455.

\bibitem[{Balke and Pearl(1997)}]{Balke1997BoundsCompliance}
Balke, Alexander and Judea Pearl. 1997.
\newblock \enquote{{Bounds on treatment effects from studies with imperfect compliance}.}
\newblock \emph{Journal of the American Statistical Association} 92~(439):1171--1176.
\newblock \urlprefix\url{https://www.tandfonline.com/action/journalInformation?journalCode=uasa20}.

\bibitem[{Bursztyn, Gonz{\'a}lez, and Yanagizawa-Drott(2020)}]{bursztyn2020misperceived}
Bursztyn, Leonardo, Alessandra~L Gonz{\'a}lez, and David Yanagizawa-Drott. 2020.
\newblock \enquote{Misperceived social norms: Women working outside the home in Saudi Arabia.}
\newblock \emph{American economic review} 110~(10):2997--3029.

\bibitem[{Cai et~al.(2008)Cai, Kuroki, Pearl, and Tian}]{cai2008bounds}
Cai, Zhihong, Manabu Kuroki, Judea Pearl, and Jin Tian. 2008.
\newblock \enquote{Bounds on direct effects in the presence of confounded intermediate variables.}
\newblock \emph{Biometrics} 64~(3):695--701.

\bibitem[{Dahl, Huber, and Mellace(2023)}]{dahl2023never}
Dahl, Christian~M, Martin Huber, and Giovanni Mellace. 2023.
\newblock \enquote{It is never too LATE: a new look at local average treatment effects with or without defiers.}
\newblock \emph{The Econometrics Journal} 26~(3):378--404.

\bibitem[{de~Chaisemartin(2017)}]{deChaisemartin2017ToleratingMonotonicity}
de~Chaisemartin, Clément. 2017.
\newblock \enquote{{Tolerating defiance? Local average treatment effects without monotonicity}.}
\newblock \emph{Quantitative Economics} 8~(2):367--396.

\bibitem[{Fiorini and Stevens(2021)}]{Fiorini2021ScrutinizingDesigns}
Fiorini, Mario and Katrien Stevens. 2021.
\newblock \enquote{{Scrutinizing the Monotonicity Assumption in IV and fuzzy RD designs}.}
\newblock \emph{Oxford Bulletin of Economics and Statistics} 83~(6):1475--1526.

\bibitem[{Flores and Flores-Lagunes(2013)}]{flores2013partial}
Flores, Carlos~A and Alfonso Flores-Lagunes. 2013.
\newblock \enquote{Partial identification of local average treatment effects with an invalid instrument.}
\newblock \emph{Journal of Business \& Economic Statistics} 31~(4):534--545.

\bibitem[{Heckman and Pinto(2018)}]{heckman2018unordered}
Heckman, James~J and Rodrigo Pinto. 2018.
\newblock \enquote{Unordered monotonicity.}
\newblock \emph{Econometrica} 86~(1):1--35.

\bibitem[{Heckman and Vytlacil(2001)}]{Heckman2001Policy-RelevantEffects}
Heckman, James~J and Edward Vytlacil. 2001.
\newblock \enquote{{Policy-Relevant Treatment Effects}.}
\newblock \emph{American Economic Review} 91~(2):107--111.

\bibitem[{Heckman and Vytlacil(2005)}]{Heckman2005StructuralEvaluation}
Heckman, James~J. and Edward Vytlacil. 2005.
\newblock \enquote{{Structural equations, treatment effects, and econometric policy evaluation}.}
\newblock \emph{Econometrica} 73~(3):669--738.

\bibitem[{Horowitz and Manski(1995)}]{Horowitz1995IdentificationData}
Horowitz, Joel~L and Charles~F Manski. 1995.
\newblock \enquote{{Identification and Robustness with Contaminated and Corrupted Data}.}
\newblock \emph{Econometrica} 63~(2):281--302.

\bibitem[{Huber, Laffers, and Mellace(2017)}]{Huber2017SharpNoncompliance}
Huber, Martin, Lukas Laffers, and Giovanni Mellace. 2017.
\newblock \enquote{{Sharp IV Bounds on Average Treatment Effects on the Treated and Other Populations Under Endogeneity and Noncompliance}.}
\newblock \emph{Journal of Applied Econometrics} 32~(1):56--79.

\bibitem[{Imbens and Angrist(1994)}]{imbens1994identification}
Imbens, Guido~W and Joshua~D Angrist. 1994.
\newblock \enquote{Identification and estimation of local average treatment effects.}
\newblock \emph{Econometrica} 62~(2):467--475.

\bibitem[{Imbens and Manski(2004)}]{imbens2004confidence}
Imbens, Guido~W and Charles~F Manski. 2004.
\newblock \enquote{Confidence intervals for partially identified parameters.}
\newblock \emph{Econometrica} 72~(6):1845--1857.

\bibitem[{Imbens and Rubin(1997)}]{imbens1997estimating}
Imbens, Guido~W and Donald~B Rubin. 1997.
\newblock \enquote{Estimating outcome distributions for compliers in instrumental variables models.}
\newblock \emph{The Review of Economic Studies} 64~(4):555--574.

\bibitem[{K{\'{e}}dagni(2023)}]{Kedagni2023IdentifyingTypes}
K{\'{e}}dagni, Désiré. 2023.
\newblock \enquote{{Identifying treatment effects in the presence of confounded types}.}
\newblock \emph{Journal of Econometrics} 234~(2):479--511.
\newblock \urlprefix\url{https://doi.org/10.1016/j.jeconom.2021.01.012}.

\bibitem[{K{\'{e}}dagni and Mourifi{\'{e}}(2020)}]{Kedagni2020GeneralizedAssumption}
K{\'{e}}dagni, Désiré and Ismael Mourifi{\'{e}}. 2020.
\newblock \enquote{{Generalized instrumental inequalities: testing the instrumental variable independence assumption}.}
\newblock \emph{Biometrika} 107~(3):661--675.
\newblock \urlprefix\url{https://academic.oup.com/biomet/article/107/3/661/5767137}.

\bibitem[{Kitagawa(2015)}]{Kitagawa2015AValidity}
Kitagawa, Toru. 2015.
\newblock \enquote{{A Test for Instrument Validity}.}
\newblock \emph{Econometrica} 83~(5):2043--2063.

\bibitem[{Kitagawa(2021)}]{Kitagawa2021TheIndependence}
---{}---{}---. 2021.
\newblock \enquote{{The identification region of the potential outcome distributions under instrument independence}.}
\newblock \emph{Journal of Econometrics} 225~(2):231--253.
\newblock \urlprefix\url{https://doi.org/10.1016/j.jeconom.2021.03.006}.

\bibitem[{Kwon and Roth(2024)}]{kwon2024testing}
Kwon, Soonwoo and Jonathan Roth. 2024.
\newblock \enquote{Testing Mechanisms.}
\newblock \emph{arXiv preprint arXiv:2404.11739} .

\bibitem[{Lee(2009)}]{Lee2009TrainingEffects}
Lee, David~S. 2009.
\newblock \enquote{{Training, Wages, and Sample Selection: Estimating Sharp Bounds on Treatment Effects}.}
\newblock \emph{Review of Economic Studies} 76:1071--1102.
\newblock \urlprefix\url{https://academic.oup.com/restud/article/76/3/1071/1590707}.

\bibitem[{Lee and Salani{\'{e}}(2018)}]{Lee2018IdentifyingTreatments}
Lee, Sokbae and Bernard Salani{\'{e}}. 2018.
\newblock \enquote{Identifying Effects of Multivalued Treatments.}
\newblock \emph{Econometrica} 86~(6):1939--1963.

\bibitem[{Li, K{\'e}dagni, and Mourifi{\'e}(2024)}]{li2024discordant}
Li, Lixiong, D{\'e}sir{\'e} K{\'e}dagni, and Isma{\"e}l Mourifi{\'e}. 2024.
\newblock \enquote{Discordant Relaxations of Misspecified Models.}
\newblock \emph{Quantitative Economics} 15~(2):331--379.

\bibitem[{Manski(1990)}]{Manski1990NonparametricEffects}
Manski, Charles~F. 1990.
\newblock \enquote{{Nonparametric Bounds on Treatment Effects}.}
\newblock \emph{American Economic Review} 80~(2):319--323.

\bibitem[{Manski(2003)}]{manski2003partial}
---{}---{}---. 2003.
\newblock \emph{Partial identification of probability distributions}.
\newblock Springer Science \& Business Media.

\bibitem[{Manski(2011)}]{manski2011policy}
---{}---{}---. 2011.
\newblock \enquote{Policy analysis with incredible certitude.}
\newblock \emph{The Economic Journal} 121~(554):F261--F289.

\bibitem[{Masten and Poirier(2021)}]{Masten2021SalvagingModels}
Masten, Matthew~A. and Alexandre Poirier. 2021.
\newblock \enquote{{Salvaging Falsified Instrumental Variable Models}.}
\newblock \emph{Econometrica} 89~(3):1449--1469.

\bibitem[{Mourifi\'e, Henry, and M\'eango(2020)}]{mourifie2020sharp}
Mourifi\'e, Ismael, Marc Henry, and Romuald M\'eango. 2020.
\newblock \enquote{Sharp bounds and testability of a Roy model of STEM major choices.}
\newblock \emph{Journal of Political Economy} 128~(8):3220--3283.

\bibitem[{Mourifi{\'{e}} and Wan(2017)}]{Mourifie2017TestingAssumptions}
Mourifi{\'{e}}, Ismael and Yuanyuan Wan. 2017.
\newblock \enquote{{Testing Local Average Treatment Effect Assumptions}.}
\newblock \emph{Review of Economics and Statistics} 99~(2):305--313.
\newblock \urlprefix\url{http://direct.mit.edu/rest/article-pdf/99/2/305/1918591/rest_a_00622.pdf}.

\bibitem[{Nevo and Rosen(2012)}]{Nevo2012IdentificationInstruments}
Nevo, Aviv and Adam~M. Rosen. 2012.
\newblock \enquote{{Identification with imperfect instruments}.}
\newblock \emph{Review of Economics and Statistics} 94~(3):659--671.

\bibitem[{Noack(2021)}]{Noack2021SensitivityAssumption}
Noack, Claudia. 2021.
\newblock \enquote{{Sensitivity of LATE Estimates to Violations of the Monotonicity Assumption}.}
\newblock \emph{Working paper, University of Mannheim} \urlprefix\url{http://arxiv.org/abs/2106.06421}.

\bibitem[{Richardson and Robins(2010)}]{richardson2010analysis}
Richardson, Thomas~S and James~M Robins. 2010.
\newblock \enquote{Analysis of the binary instrumental variable model.}
\newblock \emph{Heuristics, Probability and Causality: A Tribute to Judea Pearl} 25:415--444.

\bibitem[{Small et~al.(2017)Small, Tan, Ramsahai, Lorch, and Brookhart}]{small2017instrumental}
Small, Dylan~S, Zhiqiang Tan, Roland~R Ramsahai, Scott~A Lorch, and M~Alan Brookhart. 2017.
\newblock \enquote{Instrumental variable estimation with a stochastic monotonicity assumption.}
\newblock \emph{Statistical Science} 32~(4):561--579.

\bibitem[{Sun and W{\"u}thrich(2022)}]{sun2022pairwise}
Sun, Zhenting and Kaspar W{\"u}thrich. 2022.
\newblock \enquote{Pairwise Valid Instruments.}
\newblock \emph{arXiv preprint arXiv:2203.08050} .

\end{thebibliography}

\clearpage
\appendix

\section{Proof of Identified Set Under Random Assignment and Exclusion Restriction}

\subsection{Proof of Proposition \ref{prop_pdf}} \label{app:proof_prop1}   

\begin{proof}
   First, we demonstrate that Equation \eqref{test:imp} is a testable implication of Assumptions \ref{ass:RA} and \ref{ass:er}. Given $d \in \{0, 1\}$, we have
\begin{equation*}
    \begin{aligned}
        \int_{\mathcal Y} \sup _z f_{Y, D \mid Z}(y, d \vert z) d \mu(y) &= \int_{\mathcal Y} \sup _z f_{Y_d, D_z \mid Z}(y, d \vert z) d \mu(y) \\
        & \leq \int_{\mathcal Y} \sup _z \{f_{Y_d, D_z \mid Z}(y, d \vert z)+f_{Y_d, D_z \mid Z}(y, 1-d \vert z)\} d \mu(y)\\
        &= \int_{\mathcal Y} \sup _z f_{Y_d \mid Z}(y \vert z) d \mu(y) \\
        &= \int_{\mathcal Y} f_{Y_d}(y) d \mu(y) = 1,
    \end{aligned}
\end{equation*}
where the first equality holds from  Assumption \ref{ass:er}, the first inequality holds because $f_{Y_d, D_z \mid Z}(y, 1-d \mid z) \geq 0$ for any $y$ and $z$, the second equality holds because $f_{Y_d \mid Z}(y \vert z)=f_{Y_d, D_z \mid Z}(y, d \vert z)+f_{Y_d, D_z \mid Z}(y, 1-d \vert z)$ by the law of total probability, and the third equality follows from Assumption \ref{ass:RA}, and the last equality follows from the fact that a density function integrates to 1. Therefore, Assumptions \ref{ass:RA} and \ref{ass:er} imply the inequality in Equation \eqref{test:imp}, and the identified set of $p_{df}$ is empty if Equation \eqref{test:imp} is violated.

For any Borel set $A \subseteq \mathcal{Y}$, under Assumptions \ref{ass:RA} and \ref{ass:er}, we have
\begin{equation*}
    \begin{aligned}
        & \mathbb{P}\left(Y \in A, D = 1 \mid Z = 0\right) - \mathbb{P}\left(Y \in A, D = 1 \mid Z = 1\right) \\
        =& \mathbb{P}\left(Y_1 \in A, D_0 = 1\right) - \mathbb{P}\left(Y_1 \in A, D_1 = 1\right) \\
        =& \mathbb{P}\left(Y_1 \in A, T = a\right) + \mathbb{P}\left(Y_1 \in A, T = df\right) - \mathbb{P}\left(Y_1 \in A, T = a\right) - \mathbb{P}\left(Y_1 \in A, T = c\right) \\
        =& \mathbb{P}\left(Y_1 \in A, T = df\right) - \mathbb{P}\left(Y_1 \in A, T = c\right) \\
        \leq& \mathbb{P}\left(Y_1 \in A, T = df\right) \leq \mathbb{P}\left(T = df\right),
    \end{aligned}
\end{equation*}
and 
\begin{equation*}
    \begin{aligned}
        & \mathbb{P}\left(Y \in A, D = 0 \mid Z = 1\right) - \mathbb{P}\left(Y \in A, D = 0 \mid Z = 0\right) \\
        =& \mathbb{P}\left(Y_0 \in A, D_1 = 0\right) - \mathbb{P}\left(Y_0 \in A, D_0 = 0\right) \\
        =& \mathbb{P}\left(Y_0 \in A, T = n\right) + \mathbb{P}\left(Y_0 \in A, T = df\right) - \mathbb{P}\left(Y_0 \in A, T = n\right) - \mathbb{P}\left(Y_0 \in A, T = c\right) \\
        =& \mathbb{P}\left(Y_0 \in A, T = df\right) - \mathbb{P}\left(Y_0 \in A, T = c\right) \\
        \leq& \mathbb{P}\left(Y_0 \in A, T = df\right) \leq \mathbb{P}\left(T = df\right).
    \end{aligned}
\end{equation*}
Combined with the property probability that $p_{df} \geq 0$, we have 
\begin{equation*}
    p_{df} \geq \max \left\{\max _{s \in\{0,1\}}\left\{\sup _A\{\mathbb{P}(Y \in A, D=s \mid Z=1-s)-\mathbb{P}(Y \in A, D=s \mid Z=s)\}\right\}, 0\right\}.
\end{equation*}

In addition, according to the definitions and the property that $p_a, p_n \geq 0$,
\begin{equation*}
    \mathbb{P}\left(D = 1 \mid Z = 0\right) = p_a + p_{df} \geq p_{df}, \ \ \mathbb{P}\left(D = 0 \mid Z = 1\right) = p_n + p_{df} \geq p_{df}.
\end{equation*}

To summarize, an identified bound of $p_{df}$ is given by
\small \begin{equation*}
    \Theta_{I}(p_{df}) = \left\{\begin{array}{lll}
    & \bigg[\max \left\{\max _{s \in\{0,1\}}\left\{\sup _A\{\mathbb{P}(Y \in A, D=s  \vert  Z=1-s)-\mathbb{P}(Y \in A, D=s  \vert  Z=s)\}\right\}, 0\right\}, \\
    & \qquad \qquad \qquad \min \{\mathbb{E}[D  \vert  Z=0], \mathbb{E}[1-D  \vert  Z=1]\}\bigg], \text{ if inequality \eqref{test:imp} } \text{ holds } \\
    & \emptyset, \text{ otherwise}.
    \end{array}\right.
    \end{equation*}

The next step is to prove the sharpness of the $\Theta_I(p_{df})$. We have to construct a joint distribution of $(\tilde{Y}_{11}, \tilde{Y}_{10}, \tilde{Y}_{01}, \tilde{Y}_{00}, \tilde{T}, Z)$, $\tilde{T} \in \{a, n, c, df\}, Z \in \{0, 1\}$, such that each point in $\Theta_I(p_{df})$ can be achieved, and the constructed distribution is compatible with imposed assumptions and satisfies

\begin{eqnarray}
\label{eq:proof1} 
\mathbb{P}(Y \in A, D=1  \vert  Z=1) &=& \tilde{p}_c \mathbb{P}(\tilde{Y}_1 \in A  \vert  \tilde{T}=c)+\tilde{p}_a \mathbb{P}(\tilde{Y}_1 \in A  \vert  \tilde{T}=a), \\
\label{eq:proof2}
\mathbb{P}(Y \in A, D=1  \vert  Z=0) &=& \tilde{p}_{df} \mathbb{P}(\tilde{Y}_1 \in A  \vert  \tilde{T}=df)+\tilde{p}_a \mathbb{P}(\tilde{Y}_1 \in A  \vert  \tilde{T}=a),\\
\label{eq:proof3}
\mathbb{P}(Y \in A, D=0  \vert  Z=1) &=& \tilde{p}_{d f} \mathbb{P}(\tilde{Y}_0 \in A  \vert  \tilde{T}=d f)+\tilde{p}_n \mathbb{P}(\tilde{Y}_0 \in A  \vert  \tilde{T}=n), \\
\label{eq:proof4}
\mathbb{P}(Y \in A, D=0  \vert  Z=0) &=& \tilde{p}_c \mathbb{P}(\tilde{Y}_0 \in A  \vert  \tilde{T}=c)+\tilde{p}_n \mathbb{P}(\tilde{Y}_0 \in A  \vert  \tilde{T}=n).
\end{eqnarray}

Define $\tilde{Y}_{dz}=\tilde{Y}_d$ for all $d$, $z$, $\tilde{p}_t\equiv \mathbb P(\tilde{T}=t)=\mathbb P(\tilde{T}=t \mid Z=z)$, $\mathbb P(\tilde{Y}_d \leq y_d \mid \tilde{T}=t, Z=z)=\mathbb P(\tilde{Y}_d \leq y_d \mid \tilde{T}=t),$ and $\mathbb P(\tilde{Y}_1 \leq y_1,\tilde{Y}_0 \leq y_0, \tilde{T}=t \mid Z=z)=\tilde{p}_t \mathbb P(\tilde{Y}_1 \leq y_1 \mid \tilde{T}=t) \mathbb P(\tilde{Y}_0 \leq y_0 \mid \tilde{T}=t)$.

For each $\tilde{p}_{df} \in \Theta_I(p_{df})$, we need to find $\mathbb P(\tilde{Y}_d \leq y_d \mid \tilde{T}=t)$ such that Equations \eqref{eq:proof1}-\eqref{eq:proof4} hold. We will do this case by case.

\subsubsection*{case 1: $\max_{s \in\{0,1\}}\left\{\sup _A\{\mathbb{P}(Y \in A, D=s  \vert  Z=1-s)-\mathbb{P}(Y \in A, D=s  \vert  Z=s)\}\right\} \leq 0$} 
In this case, $\Theta_I(p_{df})=[0,\min \{\mathbb{E}[D  \vert  Z=0], \mathbb{E}[1-D  \vert  Z=1]\}]$. Note that inequality \eqref{test:imp} holds in this case, as $\sup_z f_{Y,D\mid Z}(y,d \mid z)=f_{Y,D\mid Z}(y,d \mid d)$, and $\int_{\mathcal Y} f_{Y,D\mid Z}(y,d \mid d) d \mu(y)=\mathbb{P}(D=d\mid Z=d) \leq 1$ for all $d$. 

\begin{enumerate}
    \item \label{subcase1} $\tilde{p}_{df}=0$. 

We have $\mathbb E[D\mid Z=1]-\mathbb E[D \mid Z=0]=\mathbb P(Y\in \mathcal Y, D=1\mid Z=1)-\mathbb P(Y\in \mathcal Y, D=1\mid Z=0) \geq 0$. First, suppose  $\mathbb E[D\mid Z=1]-\mathbb E[D \mid Z=0] > 0$, and define 
\begin{eqnarray*}
\tilde{p}_a &=& \mathbb E[D\mid Z=0], \ \tilde{p}_n=\mathbb E[1-D \mid Z=1], \ \tilde{p}_c=\mathbb E[D\mid Z=1]-\mathbb E[D \mid Z=0],\\
\mathbb P(\tilde{Y}_1 \leq y_1\mid \tilde{T}=a) &=& \mathbb P(Y \leq y_1 \mid D=1, Z=0),\\
\mathbb P(\tilde{Y}_0 \leq y_0\mid \tilde{T}=n) &=& \mathbb P(Y \leq y_0 \mid D=0, Z=1),\\
\mathbb P(\tilde{Y}_1 \leq y_1\mid \tilde{T}=c) &=& \frac{\mathbb P(Y \leq y_1,D=1 \mid Z=1)-\mathbb P(Y \leq y_1,D=1 \mid Z=0)}{\mathbb E[D\mid Z=1]-\mathbb E[D \mid Z=0]},\\
\mathbb P(\tilde{Y}_0 \leq y_0\mid \tilde{T}=c) &=& \frac{\mathbb P(Y \leq y_0,D=0 \mid Z=0)-\mathbb P(Y \leq y_0,D=0 \mid Z=1)}{\mathbb E[D\mid Z=1]-\mathbb E[D \mid Z=0]},\\
\mathbb P(\tilde{Y}_0 \leq y_0\mid \tilde{T}=a) &=& \mathbb P(Y \leq y_0 \mid D=1, Z=0),\\
\mathbb P(\tilde{Y}_1 \leq y_1\mid \tilde{T}=n) &=& \mathbb P(Y \leq y_1 \mid D=0, Z=1)
\end{eqnarray*}
Since $\mathbb{P}(Y \in A, D=d  \vert  Z=d)-\mathbb{P}(Y \in A, D=d  \vert  Z=1-d) \geq 0$ for all Borel set $A$ and all $d$, we can easily verify that the proposed distributions $\mathbb P(\tilde{Y}_d \leq y_d \mid \tilde{T}=t)$ are well-defined cdf. We can also verify that Equations \eqref{eq:proof1}-\eqref{eq:proof4} hold.   

Now, suppose $\mathbb E[D\mid Z=1]-\mathbb E[D \mid Z=0]=0$. Then, 
\begin{eqnarray*}
&& \max _{s \in\{0,1\}}\left\{\sup _A\{\mathbb{P}(Y \in A, D=s  \vert  Z=1-s)-\mathbb{P}(Y \in A, D=s  \vert  Z=s)\}\right\} \\ 
&& \qquad \qquad \leq 0=\mathbb P(Y\in \mathcal Y, D=1\mid Z=0)-\mathbb P(Y\in \mathcal Y, D=1\mid Z=1).
\end{eqnarray*}
Therefore, 
\begin{eqnarray*}
&& \max _{s \in\{0,1\}}\left\{\sup _A\{\mathbb{P}(Y \in A, D=s  \vert  Z=1-s)-\mathbb{P}(Y \in A, D=s  \vert  Z=s)\}\right\} \\ 
&& \qquad \qquad = \mathbb P(Y\in \mathcal Y, D=1\mid Z=0)-\mathbb P(Y\in \mathcal Y, D=1\mid Z=1)=0,
\end{eqnarray*}
which implies $\mathbb{P}(Y \in A, D=s  \vert  Z=1-s)=\mathbb{P}(Y \in A, D=s  \vert  Z=s)=\mathbb{P}(Y \in A, D=s)$ for all Borel set $A$ and $s$, i.e., $Z \indep (Y,D)$. Define
\begin{eqnarray*}
\tilde{p}_a &=& \mathbb E[D], \ \tilde{p}_n=\mathbb E[1-D], \ \tilde{p}_c=0,\\
\mathbb P(\tilde{Y}_1 \leq y_1\mid \tilde{T}=a) &=& \mathbb P(Y \leq y_1 \mid D=1),\\
\mathbb P(\tilde{Y}_0 \leq y_0\mid \tilde{T}=n) &=& \mathbb P(Y \leq y_0 \mid D=0),\\
\mathbb P(\tilde{Y}_0 \leq y_0\mid \tilde{T}=a) &=& \mathbb P(Y \leq y_0 \mid D=1),\\
\mathbb P(\tilde{Y}_1 \leq y_1\mid \tilde{T}=n) &=& \mathbb P(Y \leq y_1 \mid D=0).
\end{eqnarray*}
We only have two types $a$ and $n$. It is straightforward that Equations \eqref{eq:proof1}-\eqref{eq:proof4} hold. 

    \item \label{subcase2} $\tilde{p}_{df}=\min \{\mathbb{E}[D  \vert  Z=0], \mathbb{E}[1-D  \vert  Z=1]\}$.

    First, suppose $\tilde{p}_{df}=\mathbb E[D \mid Z=0]$. Then $\mathbb{E}[D  \vert  Z=0] \leq \mathbb{E}[1-D  \vert  Z=1]$, and $\tilde{p}_a=0$. Define
\begin{eqnarray*}
\tilde{p}_a &=& 0, \ \tilde{p}_n=\mathbb E[1-D \mid Z=1]-\mathbb E[D \mid Z=0], \ \tilde{p}_c=\mathbb E[D\mid Z=1],\\
\mathbb P(\tilde{Y}_1 \leq y_1\mid \tilde{T}=c) &=& \mathbb P(Y \leq y_1 \mid D=1, Z=1),\\
\mathbb P(\tilde{Y}_1 \leq y_1\mid \tilde{T}=df) &=& \mathbb P(Y \leq y_1 \mid D=1, Z=0),\\
\mathbb P(\tilde{Y}_0 \leq y_0\mid \tilde{T}=n) &=& \mathbb P(Y \leq y_0 \mid D=0, Z=1),\\
\mathbb P(\tilde{Y}_0 \leq y_0\mid \tilde{T}=df) &=& \mathbb P(Y \leq y_0\mid D=0, Z=1),\\
\mathbb P(\tilde{Y}_0 \leq y_0\mid \tilde{T}=c) &=& \frac{\mathbb P(Y \leq y_0,D=0 \vert Z=0)-(\mathbb E[1-D \vert Z=1]-\mathbb E[D \vert Z=0])\mathbb P(Y \leq y_0\vert D=0, Z=1)}{\mathbb E[D\vert Z=1]},\\
\mathbb P(\tilde{Y}_1 \leq y_1\mid \tilde{T}=n) &=& \mathbb P(Y \leq y_1 \mid D=0, Z=1)
\end{eqnarray*}
It is easy to verify that Equations \eqref{eq:proof1}-\eqref{eq:proof4} hold. It is obvious that $\mathbb P(\tilde{Y}_1 \leq y_1\mid \tilde{T}=c)$, $\mathbb P(\tilde{Y}_1 \leq y_1\mid \tilde{T}=df)$, $\mathbb P(\tilde{Y}_0 \leq y_0\mid \tilde{T}=n)$, $\mathbb P(\tilde{Y}_0 \leq y_0\mid \tilde{T}=df)$, and $\mathbb P(\tilde{Y}_1 \leq y_1\mid \tilde{T}=n)$ are well-defined cdfs. We are going to check $\mathbb P(\tilde{Y}_0 \leq y_0\mid \tilde{T}=c)$ is also a well-defined cdf. It is clear that $\lim_{y_0 \rightarrow -\infty}\mathbb P(\tilde{Y}_0 \leq y_0\mid \tilde{T}=c)=0$, and $\lim_{y_0 \rightarrow \infty}\mathbb P(\tilde{Y}_0 \leq y_0\mid \tilde{T}=c)=1$. Right-continuity of $\mathbb P(\tilde{Y}_0 \leq y_0\mid \tilde{T}=c)$ holds by definition. It remains to check that $\mathbb P(\tilde{Y}_0 \leq y_0\mid \tilde{T}=c)$ is nondecreasing. We have
$$\mathbb P(\tilde{Y}_0 \leq y_0\mid \tilde{T}=c) = \frac{\mathbb P(Y \leq y_0,D=0 \vert Z=0)-\mathbb P(Y \leq y_0, D=0 \vert Z=1)+\mathbb E[D \vert Z=0]\mathbb P(Y \leq y_0\vert D=0, Z=1)}{\mathbb E[D\vert Z=1]}.$$
Since $\mathbb{P}(Y \in A, D=d  \vert  Z=d)-\mathbb{P}(Y \in A, D=d  \vert  Z=1-d) \geq 0$ for all Borel set $A$ and all $d$, the function $\mathbb P(Y \leq y_0,D=0 \vert Z=0)-\mathbb P(Y \leq y_0, D=0 \vert Z=1)$ is nondecreasing. The function $\mathbb E[D \vert Z=0]\mathbb P(Y \leq y_0\vert D=0, Z=1)$ is also nondecreasing, because the cdf $\mathbb P(Y \leq y_0\vert D=0, Z=1)$ is. Because the summation of two nondecreasing functions is nondecreasing, we conclude that $\mathbb P(\tilde{Y}_0 \leq y_0\mid \tilde{T}=c)$ as defined above is nondecreasing. 

Now, suppose $\tilde{p}_{df}=\mathbb E[1-D \mid Z=1]$. Then $\mathbb{E}[D  \vert  Z=0] \geq \mathbb{E}[1-D  \vert  Z=1]$, and $\tilde{p}_n=0$. The argument is similar to the one above if we replace $D$ and $Z$ by $\tilde{D}=1-D$ and $\tilde{Z}=1-Z$, respectively. 

    \item $\tilde{p}_{df}=\lambda\times 0 + (1-\lambda) \min \{\mathbb{E}[D  \vert  Z=0], \mathbb{E}[1-D  \vert  Z=1]\}$, where $\lambda \in (0,1)$. As we can see, $\tilde{p}_{df}$ is defined as a mixture of the previous sub-cases \ref{subcase1} and \ref{subcase2}. Hence, It suffices to define the joint distribution based on the corresponding mixture distributions in the previous two points \ref{subcase1} and \ref{subcase2}.
\end{enumerate}

\subsubsection*{case 2: $\max_{s \in\{0,1\}}\left\{\sup _A\{\mathbb{P}(Y \in A, D=s  \vert  Z=1-s)-\mathbb{P}(Y \in A, D=s  \vert  Z=s)\}\right\} > 0$}
In this case, 
\small \begin{equation*}
    \Theta_{I}(p_{df}) = \left\{\begin{array}{lll}
    & \bigg[\max _{s \in\{0,1\}}\left\{\sup _A\{\mathbb{P}(Y \in A, D=s  \vert  Z=1-s)-\mathbb{P}(Y \in A, D=s  \vert  Z=s)\}\right\}, \\
    & \qquad \qquad \qquad \min \{\mathbb{E}[D  \vert  Z=0], \mathbb{E}[1-D  \vert  Z=1]\}\bigg], \text{ if inequality \eqref{test:imp} } \text{ holds } \\
    & \emptyset, \text{ otherwise}.
    \end{array}\right.
    \end{equation*}

\begin{enumerate}
\item \label{subcase4} $\tilde{p}_{df} =\max_{s \in\{0,1\}}\left\{\sup _A\{\mathbb{P}(Y \in A, D=s  \vert  Z=1-s)-\mathbb{P}(Y \in A, D=s  \vert  Z=s)\}\right\}$.

Define 
\begin{eqnarray*}
    \tilde{p}_{df} &=&\max_{s \in\{0,1\}}\left\{\sup _A\{\mathbb{P}(Y \in A, D=s  \vert  Z=1-s)-\mathbb{P}(Y \in A, D=s  \vert  Z=s)\}\right\},\\
    \tilde{p}_a &=& \mathbb E[D \vert Z=0]-\tilde{p}_{df}, \\ \tilde{p}_{c}&=&\mathbb E[D \vert Z=1]-\tilde{p}_a=\mathbb E[D\vert Z=1]-\mathbb E[D \vert Z=0]+\tilde{p}_{df},\\
    \tilde{p}_n &=& \mathbb E[1-D \vert Z=1]-\tilde{p}_{df},\\
    \mathbb P(\tilde{Y}_1 \leq y_1 \mid \tilde{T}=a) &=& \frac{\int^{y_1}_{-\infty}\min\{f_{Y,D\vert Z}(y,1\vert 1),f_{Y,D\vert Z}(y,1\vert 0)\}d\mu(y)}{\int^{\infty}_{-\infty}\min\{f_{Y,D\vert Z}(y,1\vert 1),f_{Y,D\vert Z}(y,1\vert 0)\}d\mu(y)},\\
    \mathbb P(\tilde{Y}_0 \leq y_0 \mid \tilde{T}=n) &=& \frac{\int^{y_0}_{-\infty}\min\{f_{Y,D\vert Z}(y,0\vert 1),f_{Y,D\vert Z}(y,0\vert 0)\}d\mu(y)}{\int^{\infty}_{-\infty}\min\{f_{Y,D\vert Z}(y,0\vert 1),f_{Y,D\vert Z}(y,0\vert 0)\}d\mu(y)},\\
    \mathbb P(\tilde{Y}_1 \leq y_1 \vert \tilde{T}=c) &=& \frac{\mathbb P(Y \leq y_1, D=1 \vert Z=1)-\tilde{p}_a \mathbb P(\tilde{Y}_1 \leq y_1 \mid \tilde{T}=a)}{\tilde{p}_c},\\
    \mathbb P(\tilde{Y}_1 \leq y_1 \vert \tilde{T}=df) &=& \frac{\mathbb P(Y \leq y_1, D=1 \vert Z=0)-\tilde{p}_a \mathbb P(\tilde{Y}_1 \leq y_1 \mid \tilde{T}=a)}{\tilde{p}_{df}},\\
    \mathbb P(\tilde{Y}_0 \leq y_0 \vert \tilde{T}=c) &=& \frac{\mathbb P(Y \leq y_0, D=0 \vert Z=0)-\tilde{p}_n \mathbb P(\tilde{Y}_0 \leq y_0 \mid \tilde{T}=n)}{\tilde{p}_c},\\
    \mathbb P(\tilde{Y}_0 \leq y_0 \vert \tilde{T}=df) &=& \frac{\mathbb P(Y \leq y_0, D=0 \vert Z=1)-\tilde{p}_n \mathbb P(\tilde{Y}_n \leq y_n \mid \tilde{T}=n)}{\tilde{p}_{df}},\\
   \mathbb P(\tilde{Y}_0 \leq y_0 \mid \tilde{T}=a) &=& \frac{\int^{y_0}_{-\infty}\min\{f_{Y,D\vert Z}(y,1\vert 1),f_{Y,D\vert Z}(y,1\vert 0)\}d\mu(y)}{\int^{\infty}_{-\infty}\min\{f_{Y,D\vert Z}(y,1\vert 1),f_{Y,D\vert Z}(y,1\vert 0)\}d\mu(y)},\\
    \mathbb P(\tilde{Y}_1 \leq y_1 \mid \tilde{T}=n) &=& \frac{\int^{y_1}_{-\infty}\min\{f_{Y,D\vert Z}(y,0\vert 1),f_{Y,D\vert Z}(y,0\vert 0)\}d\mu(y)}{\int^{\infty}_{-\infty}\min\{f_{Y,D\vert Z}(y,0\vert 1),f_{Y,D\vert Z}(y,0\vert 0)\}d\mu(y)}. 
\end{eqnarray*}
We need to show that $\mathbb P(\tilde{Y}_1 \leq y_1 \vert \tilde{T}=c)$, $\mathbb P(\tilde{Y}_1 \leq y_1 \vert \tilde{T}=df)$, $\mathbb P(\tilde{Y}_0 \leq y_0 \vert \tilde{T}=c)$, and $\mathbb P(\tilde{Y}_0 \leq y_0 \vert \tilde{T}=df)$ as defined above are well-defined cdfs. To do so, we are going to show that their density versions are nonnegative and integrate to 1.
Let $f_{\tilde{Y}_d\mid \tilde{T}}(y_d\mid t)$ denote the density of $\tilde{Y}_d$ conditional on $\tilde{T}=t$, that is absolutely continuous with respect to the dominating measure $\mu$. We have
\begin{eqnarray*}
    f_{\tilde{Y}_1 \mid \tilde{T}}(y_1\mid a) &=& \frac{\min\{f_{Y,D\vert Z}(y_1,1\vert 1),f_{Y,D\vert Z}(y_1,1\vert 0)\}}{\int^{\infty}_{-\infty}\min\{f_{Y,D\vert Z}(y,1\vert 1),f_{Y,D\vert Z}(y,1\vert 0)\}d\mu(y)},\\
    f_{\tilde{Y}_0 \mid \tilde{T}}(y_0 \mid n) &=& \frac{\min\{f_{Y,D\vert Z}(y_0,0\vert 1),f_{Y,D\vert Z}(y_0,0\vert 0)\}}{\int^{\infty}_{-\infty}\min\{f_{Y,D\vert Z}(y,0\vert 1),f_{Y,D\vert Z}(y,0\vert 0)\}d\mu(y)},\\
    f_{\tilde{Y}_1 \vert \tilde{T}}(y_1\vert c) &=& \frac{f_{Y, D \vert Z}(y_1,1\vert 1)-\tilde{p}_a f_{\tilde{Y}_1 \mid \tilde{T}}(y_1\mid a)}{\tilde{p}_c},\\
    f_{\tilde{Y}_1 \vert \tilde{T}}(y_1 \vert df) &=& \frac{f_{Y, D \vert Z}(y_1,1\vert 0)-\tilde{p}_a f_{\tilde{Y}_1 \mid \tilde{T}}(y_1\mid a)}{\tilde{p}_{df}},\\
    f_{\tilde{Y}_0 \vert \tilde{T}}(y_0 \vert c) &=& \frac{f_{Y, D \vert Z}(y_0,0 \vert 0)-\tilde{p}_n f_{\tilde{Y}_0 \mid \tilde{T}}(y_0 \mid n)}{\tilde{p}_c},\\
    f_{\tilde{Y}_0 \vert \tilde{T}}(y_0 \vert df) &=& \frac{f_{Y, D \vert Z}(y_0,0 \vert 1)-\tilde{p}_n f_{\tilde{Y}_0 \mid \tilde{T}}(y_0 \mid n)}{\tilde{p}_{df}},\\
    f_{\tilde{Y}_0 \mid \tilde{T}}(y_0\mid a) &=& \frac{\min\{f_{Y,D\vert Z}(y_0,1\vert 1),f_{Y,D\vert Z}(y_0,1\vert 0)\}}{\int^{\infty}_{-\infty}\min\{f_{Y,D\vert Z}(y,1\vert 1),f_{Y,D\vert Z}(y,1\vert 0)\}d\mu(y)},\\
    f_{\tilde{Y}_1 \mid \tilde{T}}(y_1 \mid n) &=& \frac{\min\{f_{Y,D\vert Z}(y_1,0\vert 1),f_{Y,D\vert Z}(y_1,0\vert 0)\}}{\int^{\infty}_{-\infty}\min\{f_{Y,D\vert Z}(y,0\vert 1),f_{Y,D\vert Z}(y,0\vert 0)\}d\mu(y)}. 
\end{eqnarray*}
Clearly, $f_{\tilde{Y}_1 \mid \tilde{T}}(y_1\mid a)$, $f_{\tilde{Y}_0 \mid \tilde{T}}(y_0\mid a)$, $f_{\tilde{Y}_0 \mid \tilde{T}}(y_0 \mid n)$, and $f_{\tilde{Y}_1 \mid \tilde{T}}(y_1 \mid n)$ are nonnegative and integrate to 1. To show that $f_{\tilde{Y}_1 \vert \tilde{T}}(y_1\vert c) \geq 0$, it suffices to show that $f_{\tilde{Y}_1 \mid \tilde{T}}(y_1\mid a) \leq \frac{1}{\tilde{p}_a}f_{Y, D \vert Z}(y_1,1\vert 1)$.

We have $\min\{f_{Y,D\vert Z}(y_1,1\vert 1),f_{Y,D\vert Z}(y_1,1\vert 0)\} \leq f_{Y,D\vert Z}(y_1,1\vert 1)$. We just need to show that $\tilde{p}_a \leq \int^{\infty}_{-\infty}\min\{f_{Y,D\vert Z}(y,1\vert 1),f_{Y,D\vert Z}(y,1\vert 0)\}d\mu(y)$. Let $B\equiv \{y\in \mathcal Y: f_{Y,D\vert Z}(y,1\vert 1) \geq f_{Y,D\vert Z}(y,1\vert 0)\}$, and $\overline{B}\equiv \{y\in \mathcal Y: f_{Y,D\vert Z}(y,1\vert 1) < f_{Y,D\vert Z}(y,1\vert 0)\}$. On the one hand, we have
\begin{eqnarray*}
&& \int^{\infty}_{-\infty}\min\{f_{Y,D\vert Z}(y,1\vert 1),f_{Y,D\vert Z}(y,1\vert 0)\}d\mu(y)=\int_{\mathcal Y}\min\{f_{Y,D\vert Z}(y,1\vert 1),f_{Y,D\vert Z}(y,1\vert 0)\}d\mu(y)\\
&& \qquad = \int_{\overline{B}}\min\{f_{Y,D\vert Z}(y,1\vert 1),f_{Y,D\vert Z}(y,1\vert 0)\}d\mu(y) + \int_{B}\min\{f_{Y,D\vert Z}(y,1\vert 1),f_{Y,D\vert Z}(y,1\vert 0)\}d\mu(y)\\
&& \qquad = \int_{\overline{B}}f_{Y,D\vert Z}(y,1\vert 1)d\mu(y) + \int_{B}f_{Y,D\vert Z}(y,1\vert 0)d\mu(y),\\
&& =\mathbb{P}(Y \in \overline{B}, D=1  \vert  Z=1)+\mathbb{P}(Y \in B, D=1  \vert  Z=0)
\end{eqnarray*}
On the other hand, we have 
\begin{eqnarray*}
    \tilde{p}_a &=& \mathbb E[D \vert Z=0]-\max_{s \in\{0,1\}}\left\{\sup _A\{\mathbb{P}(Y \in A, D=s  \vert  Z=1-s)-\mathbb{P}(Y \in A, D=s  \vert  Z=s)\}\right\},\\
    &=& \min_{s \in\{0,1\}}\left\{\inf_A\{\mathbb E[D \vert Z=0]-\mathbb{P}(Y \in A, D=s  \vert  Z=1-s)+\mathbb{P}(Y \in A, D=s  \vert  Z=s)\}\right\},\\
    &\leq& \inf_A\{\mathbb E[D \vert Z=0]-\mathbb{P}(Y \in A, D=1  \vert  Z=0)+\mathbb{P}(Y \in A, D=1  \vert  Z=1)\} \ \ (\text{for } s=1),\\
    &=& \inf_A\{\mathbb{P}(Y \in \overline{A}, D=1  \vert  Z=0)+\mathbb{P}(Y \in A, D=1  \vert  Z=1)\},\ \text{ where } \overline{A} \text{ denotes the complement of } A, \\
    &\leq& \mathbb{P}(Y \in \overline{B}, D=1  \vert  Z=1)+\mathbb{P}(Y \in B, D=1  \vert  Z=0),\\
    &=&  \int^{\infty}_{-\infty}\min\{f_{Y,D\vert Z}(y,1\vert 1),f_{Y,D\vert Z}(y,1\vert 0)\}d\mu(y).    
\end{eqnarray*}
Similarly, we have $f_{\tilde{Y}_1 \mid \tilde{T}}(y_1\mid a) \leq \frac{1}{\tilde{p}_a}f_{Y, D \vert Z}(y_1,1\vert 0)$. Therefore, $f_{\tilde{Y}_1 \vert \tilde{T}}(y_1 \vert df) \geq 0$. To show that $f_{\tilde{Y}_0 \vert \tilde{T}}(y_0 \vert c)$ and $f_{\tilde{Y}_0 \vert \tilde{T}}(y_0 \vert df)$ are nonnegative, we are going to follow a similar reasoning as above. 

Let $C\equiv \{y\in \mathcal Y: f_{Y,D\vert Z}(y,0\vert 0) \geq f_{Y,D\vert Z}(y,0\vert 1)\}$, and $\overline{C}\equiv \{y\in \mathcal Y: f_{Y,D\vert Z}(y,0\vert 0) < f_{Y,D\vert Z}(y,0\vert 1)\}$.

We have
\begin{eqnarray*}
&& \int^{\infty}_{-\infty}\min\{f_{Y,D\vert Z}(y,0\vert 1),f_{Y,D\vert Z}(y,0\vert 0)\}d\mu(y)=\int_{\mathcal Y}\min\{f_{Y,D\vert Z}(y,0\vert 1),f_{Y,D\vert Z}(y,0\vert 0)\}d\mu(y),\\
&& \qquad = \int_{\overline{C}}\min\{f_{Y,D\vert Z}(y,0\vert 1),f_{Y,D\vert Z}(y,0\vert 0)\}d\mu(y) + \int_{C}\min\{f_{Y,D\vert Z}(y,0\vert 1),f_{Y,D\vert Z}(y,0\vert 0)\}d\mu(y),\\
&& \qquad = \int_{\overline{C}}f_{Y,D\vert Z}(y,0\vert 0)d\mu(y) + \int_{C}f_{Y,D\vert Z}(y,0\vert 1)d\mu(y),\\
&& =\mathbb{P}(Y \in \overline{C}, D=0 \vert Z=0)+\mathbb{P}(Y \in C, D=0 \vert Z=1).
\end{eqnarray*}
We also have 
\begin{eqnarray*}
    \tilde{p}_n &=& \mathbb E[1-D \vert Z=1]-\max_{s \in\{0,1\}}\left\{\sup _A\{\mathbb{P}(Y \in A, D=s  \vert  Z=1-s)-\mathbb{P}(Y \in A, D=s  \vert  Z=s)\}\right\},\\
    &=& \min_{s \in\{0,1\}}\left\{\inf_A\{\mathbb E[1-D \vert Z=1]-\mathbb{P}(Y \in A, D=s  \vert  Z=1-s)+\mathbb{P}(Y \in A, D=s  \vert  Z=s)\}\right\},\\
    &\leq& \inf_A\{\mathbb E[1-D \vert Z=1]-\mathbb{P}(Y \in A, D=0  \vert  Z=1)+\mathbb{P}(Y \in A, D=0  \vert  Z=0)\} \ \ (\text{for } s=0),\\
    &=& \inf_A\{\mathbb{P}(Y \in \overline{A}, D=0  \vert  Z=1)+\mathbb{P}(Y \in A, D=0  \vert  Z=0)\}, \\
    &\leq& \mathbb{P}(Y \in \overline{C}, D=0  \vert  Z=0)+\mathbb{P}(Y \in C, D=0 \vert Z=1),\\
    &=&  \int^{\infty}_{-\infty}\min\{f_{Y,D\vert Z}(y,0\vert 1),f_{Y,D\vert Z}(y,0\vert 0)\}d\mu(y).    
\end{eqnarray*}
$\min\{f_{Y,D\vert Z}(y,0\vert 1),f_{Y,D\vert Z}(y,0\vert 0)\} \leq f_{Y,D\vert Z}(y,0\vert 0)$ and $\tilde{p}_n \leq \int^{\infty}_{-\infty}\min\{f_{Y,D\vert Z}(y,0\vert 1),f_{Y,D\vert Z}(y,0\vert 0)\}d\mu(y)$ imply $f_{\tilde{Y}_0 \vert \tilde{T}}(y_0 \vert n) \leq \frac{1}{\tilde{p}_n} f_{Y,D\vert Z}(y,0\vert 0)$, therefore $f_{\tilde{Y}_0 \vert \tilde{T}}(y_0 \vert c) \geq 0$. A similarly argument proves that $f_{\tilde{Y}_0 \vert \tilde{T}}(y_0 \vert df) \geq 0$. 

Finally, given the definition of the types' probabilities $\tilde{p}_t$, it is easy to check that all the defined above densities integrate to 1. We can also check that the above proposed distributions satisfy Equations \eqref{eq:proof1}-\eqref{eq:proof4}.

To complete this step of the proof, we need to check that the DGP we provide requires that inequality~\eqref{test:imp} holds. We have 
\begin{eqnarray*}
    f_{\tilde{Y}_1}(y_1)&=&\sum_{t\in \{a,c,df,n\}} \tilde{p}_t f_{\tilde{Y}_1 \vert \tilde{T}}(y_1 \vert t),\\
    &=& \tilde{p}_a f_{\tilde{Y}_1 \vert \tilde{T}}(y_1\vert a) +f_{Y, D \vert Z}(y_1,1\vert 1)-\tilde{p}_a f_{\tilde{Y}_1 \vert \tilde{T}}(y_1\vert a) + f_{Y, D \vert Z}(y_1,1\vert 0)-\tilde{p}_a f_{\tilde{Y}_1 \vert \tilde{T}}(y_1\vert a)+\tilde{p}_n f_{\tilde{Y}_1 \vert \tilde{T}}(y_1\vert n),\\
    &=& f_{Y, D \vert Z}(y_1,1\vert 1) + (f_{Y, D \vert Z}(y_1,1\vert 0)-\tilde{p}_a f_{\tilde{Y}_1 \mid \tilde{T}}(y_1\mid a))+\tilde{p}_n f_{\tilde{Y}_1 \mid \tilde{T}}(y_1\mid n) \geq f_{Y, D \vert Z}(y_1,1\vert 1),\\
    &=&(f_{Y, D \vert Z}(y_1,1\vert 1)-\tilde{p}_a f_{\tilde{Y}_1 \mid \tilde{T}}(y_1\mid a)) + f_{Y, D \vert Z}(y_1,1\vert 0)+\tilde{p}_n f_{\tilde{Y}_1 \mid \tilde{T}}(y_1\mid n) \geq f_{Y, D \vert Z}(y_1,1\vert 0),\\
    &\geq& \max\left\{f_{Y, D \vert Z}(y_1,1\vert 1),f_{Y, D \vert Z}(y_1,1\vert 0)\right\}=\max_{z\in\{0,1\}}\left\{f_{Y, D \vert Z}(y_1,1\vert z)\right\}. 
\end{eqnarray*}
Therefore, $1=\int_{\mathcal Y} f_{\tilde{Y}_1}(y_1) d\mu(y_1) \geq \int_{\mathcal Y} \max_{z\in\{0,1\}}\left\{f_{Y, D \vert Z}(y_1,1\vert z)\right\} d\mu(y_1)$. 

Similarly, $1=\int_{\mathcal Y} f_{\tilde{Y}_0}(y_0) d\mu(y_0) \geq \int_{\mathcal Y} \max_{z\in\{0,1\}}\left\{f_{Y, D \vert Z}(y_0,0\vert z)\right\} d\mu(y_0)$. These last two inequalities combined are equivalent to inequality \eqref{test:imp}. 

\item \label{subcase5} $\tilde{p}_{df}=\min \{\mathbb{E}[D  \vert  Z=0], \mathbb{E}[1-D  \vert  Z=1]\}$.

First, suppose $\tilde{p}_{df}=\mathbb E[D \mid Z=0]$. Then $\mathbb{E}[D  \vert  Z=0] \leq \mathbb{E}[1-D  \vert  Z=1]$, and $\tilde{p}_a=0$. Define
\begin{eqnarray*}
\tilde{p}_a &=& 0, \ \tilde{p}_n=\mathbb E[1-D \mid Z=1]-\mathbb E[D \mid Z=0], \ \tilde{p}_c=\mathbb E[D\mid Z=1],\\
\mathbb P(\tilde{Y}_1 \leq y_1\mid \tilde{T}=c) &=& \mathbb P(Y \leq y_1 \mid D=1, Z=1),\\
\mathbb P(\tilde{Y}_1 \leq y_1\mid \tilde{T}=df) &=& \mathbb P(Y \leq y_1 \mid D=1, Z=0),\\
\mathbb P(\tilde{Y}_0 \leq y_0 \mid \tilde{T}=n) &=& \frac{\int^{y_0}_{-\infty}\min\{f_{Y,D\vert Z}(y,0\vert 1),f_{Y,D\vert Z}(y,0\vert 0)\}d\mu(y)}{\int^{\infty}_{-\infty}\min\{f_{Y,D\vert Z}(y,0\vert 1),f_{Y,D\vert Z}(y,0\vert 0)\}d\mu(y)},\\
\mathbb P(\tilde{Y}_0 \leq y_0, \vert \tilde{T}=c) &=& \frac{\mathbb P(Y \leq y_0, D=0 \vert Z=0)-\tilde{p}_n \mathbb P(\tilde{Y}_0 \leq y_0 \mid \tilde{T}=n)}{\tilde{p}_c},\\
    \mathbb P(\tilde{Y}_0 \leq y_0 \vert \tilde{T}=df) &=& \frac{\mathbb P(Y \leq y_0, D=0 \vert Z=1)-\tilde{p}_n \mathbb P(\tilde{Y}_0 \leq y_0 \mid \tilde{T}=n)}{\tilde{p}_{df}},\\
    \mathbb P(\tilde{Y}_1 \leq y_1 \mid \tilde{T}=n) &=& \frac{\int^{y_1}_{-\infty}\min\{f_{Y,D\vert Z}(y,0\vert 1),f_{Y,D\vert Z}(y,0\vert 0)\}d\mu(y)}{\int^{\infty}_{-\infty}\min\{f_{Y,D\vert Z}(y,0\vert 1),f_{Y,D\vert Z}(y,0\vert 0)\}d\mu(y)}.
\end{eqnarray*}
It is clear that all the above quantities are well-defined cdfs, except $\mathbb P(\tilde{Y}_0 \leq y_0, \vert \tilde{T}=c)$, and $\mathbb P(\tilde{Y}_0 \leq y_0, \vert \tilde{T}=df)$. As in the previous bullet point, we show that the density analogs are nonnegative and integrate to 1. We have 
\begin{eqnarray*}
     f_{\tilde{Y}_0 \vert \tilde{T}}(y_0 \vert c) &=& \frac{f_{Y, D \vert Z}(y_0,0 \vert 0)-\tilde{p}_n f_{\tilde{Y}_0 \mid \tilde{T}}(y_0 \mid n)}{\tilde{p}_c},\\
    f_{\tilde{Y}_0 \vert \tilde{T}}(y_0 \vert df) &=& \frac{f_{Y, D \vert Z}(y_0,0 \vert 1)-\tilde{p}_n f_{\tilde{Y}_0 \mid \tilde{T}}(y_0 \mid n)}{\tilde{p}_{df}}.
\end{eqnarray*}
As before, it suffices to show $\tilde{p}_n \leq \int^{\infty}_{-\infty}\min\{f_{Y,D\vert Z}(y,0\vert 1),f_{Y,D\vert Z}(y,0\vert 0)\}d\mu(y)$. We have shown above that $$\int^{\infty}_{-\infty}\min\{f_{Y,D\vert Z}(y,0\vert 1),f_{Y,D\vert Z}(y,0\vert 0)\}d\mu(y)=\mathbb{P}(Y \in \overline{C}, D=0 \vert Z=0)+\mathbb{P}(Y \in C, D=0 \vert Z=1).$$ 
Therefore, 
\begin{eqnarray*}
  &&\tilde{p}_n -  \int^{\infty}_{-\infty}\min\{f_{Y,D\vert Z}(y,0\vert 1),f_{Y,D\vert Z}(y,0\vert 0)\}d\mu(y)\\
  && \qquad \mathbb = \mathbb E[1-D \mid Z=1]-\mathbb E[D \mid Z=0] - \mathbb{P}(Y \in \overline{C}, D=0 \vert Z=0)-\mathbb{P}(Y \in C, D=0 \vert Z=1),\\
  && \qquad =(\mathbb E[1-D \mid Z=1]-\mathbb{P}(Y \in C, D=0 \vert Z=1)) + (\mathbb E[1-D \mid Z=0]-\mathbb{P}(Y \in \overline{C}, D=0 \vert Z=0))-1,\\
  && \qquad =\mathbb{P}(Y \in \overline{C}, D=0 \vert Z=1)+\mathbb{P}(Y \in C, D=0 \vert Z=0)-1.
\end{eqnarray*}
Inequality \eqref{test:imp} implies $\int_{\mathcal Y} \max_{z\in \{0,1\}}f_{Y,D\vert Z}(y,0\vert z)d\mu(y) \leq 1$. Then,
\begin{eqnarray*}
    1 \geq \int_{\mathcal Y} \max_{z\in \{0,1\}}f_{Y,D\vert Z}(y,0\vert z)d\mu(y) &=& \int_{\overline{C}} \max_{z\in \{0,1\}}f_{Y,D\vert Z}(y,0\vert z)d\mu(y)+\int_{C} \max_{z\in \{0,1\}}f_{Y,D\vert Z}(y,0\vert z)d\mu(y),\\
    &\geq& \int_{\overline{C}} f_{Y,D\vert Z}(y,0\vert 1)d\mu(y)+\int_{C} f_{Y,D\vert Z}(y,0\vert 0)d\mu(y),\\
    &=& \mathbb{P}(Y \in \overline{C}, D=0 \vert Z=1)+\mathbb{P}(Y \in C, D=0 \vert Z=0).
\end{eqnarray*}
Hence, $\tilde{p}_n -  \int^{\infty}_{-\infty}\min\{f_{Y,D\vert Z}(y,0\vert 1),f_{Y,D\vert Z}(y,0\vert 0)\}d\mu(y)\leq 0$. 

Finally, we can check that the above proposed distributions satisfy Equations \eqref{eq:proof1}-\eqref{eq:proof4}. 

Second, if $\tilde{p}_{df}=\mathbb E[1-D \mid Z=1]$, then $\mathbb{E}[D  \vert  Z=0] \geq \mathbb{E}[1-D  \vert  Z=1]$, and $\tilde{p}_n=0$. The argument is similar to the one above if we replace $D$ and $Z$ by $\tilde{D}=1-D$ and $\tilde{Z}=1-Z$, respectively.  

\item \label{subcase6} $\tilde{p}_{df} =\lambda \max_{s \in\{0,1\}}\left\{\sup _A\{\mathbb{P}(Y \in A, D=s  \vert  Z=1-s)-\mathbb{P}(Y \in A, D=s  \vert  Z=s)\}\right\} + (1-\lambda)\min \{\mathbb{E}[D  \vert  Z=0], \mathbb{E}[1-D  \vert  Z=1]\}$, where $\lambda \in (0,1)$.

As we can see, $\tilde{p}_{df}$ is defined as a mixture of the previous sub-cases \ref{subcase4} and \ref{subcase5}. Hence, it suffices to define the joint distribution based on the corresponding mixture distributions in the previous two points \ref{subcase4} and \ref{subcase5}.

\end{enumerate}

Finally, to finish the proof of Proposition \ref{prop_pdf}, we show that the following identified set 
\begin{equation} \label{eq:id_bound_df}
    \begin{aligned}
      \Theta_I(p_{df}) =  & \bigg[\max \left\{\max _{s \in\{0,1\}}\left\{\sup _A\{\mathbb{P}(Y \in A, D=s  \vert  Z=1-s)-\mathbb{P}(Y \in A, D=s  \vert  Z=s)\}\right\}, 0\right\}, \\
        & \qquad \qquad \qquad \min \{\mathbb{E}[D  \vert  Z=0], \mathbb{E}[1-D  \vert  Z=1]\}\bigg]
    \end{aligned}
\end{equation}
is empty if and only if Equation \eqref{test:imp} does not hold.

When the instrument and the treatment are binary, $Z \in \{0, 1\}$ and $D \in \{0, 1\}$, we can write Equation \eqref{test:imp} as 
\begin{equation} \label{eq:test_implication}
    \max _{d \in\{0,1\}} \int_{\mathcal{Y}} \max _{z \in \{0,1\}} f_{Y, D \mid Z}(y, d \mid z) d \mu(y) \leq 1.
\end{equation}

Let $B_1$ denote the Borel set on $\mathcal{Y}$ where $f_{Y, D \mid Z}(y, 1 \mid 0)$ is no smaller than $f_{Y, D \mid Z}(y, 1 \mid 1)$, $B_1 = \{y: f_{Y, D \mid Z}(y, 1 \mid 0) \geq f_{Y, D \mid Z}(y, 1 \mid 1)\}$, and $B_0$ denote the Borel set on $\mathcal{Y}$ where $f_{Y, D \mid Z}(y, 0 \mid 1)$ is no smaller than $f_{Y, D \mid Z}(y, 0 \mid 0)$, $B_0 = \{y: f_{Y, D \mid Z}(y, 0 \mid 1) \geq f_{Y, D \mid Z}(y, 0 \mid 0)\}$. The notation $B^c$ represents the complement set of $B$. Given $d = 1$, 
\begin{equation*}
    \begin{aligned}
        &\int_{\mathcal{Y}} \max _{z \in \{0,1\}} f_{Y, D \mid Z}(y, 1 \mid z) d \mu(y) \\
        =& \int_{B_1} f_{Y, D \mid Z}(y, 1 \mid 0) d \mu(y) + \int_{B_1^C} f_{Y, D \mid Z}(y, 1 \mid 1) d \mu(y) \\
        =& \mathbb{P}\left(Y \in B_1, D = 1 \mid Z = 0\right) - \mathbb{P}\left(Y \in B_1, D = 1 \mid Z = 1\right) + \mathbb{P}\left(D = 1 \mid Z = 1\right),
    \end{aligned}
\end{equation*}
and given $d = 0$,
\begin{equation*}
    \begin{aligned}
        &\int_{\mathcal{Y}} \max _{z \in \{0,1\}} f_{Y, D \mid Z}(y, 0 \mid z) d \mu(y) \\
        =& \int_{B_0} f_{Y, D \mid Z}(y, 0 \mid 1) d \mu(y) + \int_{B_0^C} f_{Y, D \mid Z}(y, 0 \mid 0) d \mu(y) \\
        =& \mathbb{P}\left(Y \in B_0, D = 0 \mid Z = 1\right) - \mathbb{P}\left(Y \in B_0, D = 0 \mid Z = 0\right) + \mathbb{P}\left(D = 0 \mid Z = 0\right).
    \end{aligned}
\end{equation*}

Therefore, the violation of Equation \eqref{eq:test_implication} can be represented as 
\begin{equation} \label{eq:cond1}
    \begin{aligned}
        \max& \big\{ \mathbb{P}\left(Y \in B_1, D = 1 \mid Z = 0\right) - \mathbb{P}\left(Y \in B_1, D = 1 \mid Z = 1\right) + \mathbb{P}\left(D = 1 \mid Z = 1\right) , \\
        & \mathbb{P}\left(Y \in B_0, D = 0 \mid Z = 1\right) - \mathbb{P}\left(Y \in B_0, D = 0 \mid Z = 0\right) + \mathbb{P}\left(D = 0 \mid Z = 0\right) \big\} > 1 \\
        \Longleftrightarrow & \max _{s \in \{0, 1\}} \Big\{\mathbb{P} \left(Y \in B_s, D = s \mid Z = 1-s\right) - \mathbb{P}\left(Y \in B_s, D = s \mid Z = s\right) - \mathbb{P}\left(D = 1-s \mid Z = s\right)\Big\} > 0.
    \end{aligned}
\end{equation}

In Equation \eqref{eq:id_bound_df}, since the empty set $\emptyset$ is in the Borel set generated by $Y$,
\begin{equation*}
    \begin{aligned}
        &\sup _A\{\mathbb{P}(Y \in A, D=s \mid Z=1-s)-\mathbb{P}(Y \in A, D=s \mid Z=s)\} \\
        \geq& \mathbb{P}(Y \in \emptyset, D=s \mid Z=1-s)-\mathbb{P}(Y \in \emptyset, D=s \mid Z=s) = 0,
    \end{aligned}
\end{equation*}
for any $s \in \{0,1\}$. Then, the identified set $\Theta_I(p_{df})$ in Equation \eqref{eq:id_bound_df} would be empty when 
\begin{equation*}
    \begin{aligned}
        &\min \{\mathbb{E}[D \mid Z=0], \mathbb{E}[1-D \mid Z=1]\} \\
        &< \max _{s \in\{0,1\}}\left\{\sup _A\{\mathbb{P}(Y \in A, D=s \mid Z=1-s)-\mathbb{P}(Y \in A, D=s \mid Z=s)\}\right\}.
    \end{aligned}
\end{equation*}
The above inequality is equivalent to 
\begin{equation} \label{eq:cond2}
    \begin{aligned}
        \max \Big\{&\sup_A \left\{\mathbb{P}\left(Y \in A, D = 1 \mid Z = 0\right) - \mathbb{P}\left(Y \in A, D = 1 \mid Z = 1\right) - \mathbb{P}\left(D = 0 \mid Z = 1\right)\right\}, \\
        &\sup_A \left\{\mathbb{P}\left(Y \in A, D = 1 \mid Z = 0\right) - \mathbb{P}\left(Y \in A, D = 1 \mid Z = 1\right) - \mathbb{P}\left(D = 1 \mid Z = 0\right)\right\}, \\
        &\sup_A \left\{\mathbb{P}\left(Y \in A, D = 0 \mid Z = 1\right) - \mathbb{P}\left(Y \in A, D = 0 \mid Z = 0\right) - \mathbb{P}\left(D = 1 \mid Z = 0\right)\right\}, \\
        &\sup_A \left\{\mathbb{P}\left(Y \in A, D = 0 \mid Z = 1\right) - \mathbb{P}\left(Y \in A, D = 0 \mid Z = 0\right) - \mathbb{P}\left(D = 0 \mid Z = 1\right)\right\} \Big\} >0.
    \end{aligned}
\end{equation}

If the inequality in Equation \eqref{eq:cond1} holds, it implies that either 
\begin{equation*}
    \sup_A \left\{\mathbb{P}\left(Y \in A, D = 1 \mid Z = 0\right) - \mathbb{P}\left(Y \in A, D = 1 \mid Z = 1\right) - \mathbb{P}\left(D = 0 \mid Z = 1\right)\right\} > 0
\end{equation*}
or 
\begin{equation*}
    \sup_A \left\{\mathbb{P}\left(Y \in A, D = 0 \mid Z = 1\right) - \mathbb{P}\left(Y \in A, D = 0 \mid Z = 0\right) - \mathbb{P}\left(D = 1 \mid Z = 0\right)\right\} > 0,
\end{equation*}
and the inequality in Equation \eqref{eq:cond2} holds. Therefore, the violation of Equation \eqref{eq:test_implication} implies that the identified set in Equation \eqref{eq:id_bound_df} is empty.

Suppose now that the inequality in \eqref{eq:cond2} holds so that the identified set in Equation \eqref{eq:id_bound_df} is empty. Since 
\begin{equation*}
    \begin{aligned}
        &\sup_A \left\{\mathbb{P}\left(Y \in A, D = 1 \mid Z = 0\right) - \mathbb{P}\left(Y \in A, D = 1 \mid Z = 1\right) - \mathbb{P}\left(D = 1 \mid Z = 0\right)\right\} \\
        \leq&\sup_A \left\{\mathbb{P}\left(Y \in A, D = 1 \mid Z = 0\right) - \mathbb{P}\left(D = 1 \mid Z = 0\right)\right\} = 0, \\
        &\sup_A \left\{\mathbb{P}\left(Y \in A, D = 0 \mid Z = 1\right) - \mathbb{P}\left(Y \in A, D = 0 \mid Z = 0\right) - \mathbb{P}\left(D = 0 \mid Z = 1\right)\right\} \\
        \leq&\sup_A \left\{\mathbb{P}\left(Y \in A, D = 0 \mid Z = 1\right) - \mathbb{P}\left(D = 0 \mid Z = 1\right)\right\} = 0,
    \end{aligned}
\end{equation*}
it should be that either the first or the third expression in Equation \eqref{eq:cond2} is greater than zero. Assume that the maximum in Equation \eqref{eq:cond2} is achieved by the first expression. Based on the definition of $B_1$, the supremum is achieved when $A = B_1$. Then, we have 
\begin{equation*}
    \mathbb{P} \left(Y \in B_1, D = 1 \mid Z = 0\right) - \mathbb{P}\left(Y \in B_1, D = 1 \mid Z = 1\right) - \mathbb{P}\left(D = 0 \mid Z = 1\right) > 0,
\end{equation*}
which implies that the inequality in \eqref{eq:cond1} holds. Following similar reasoning, if the third term in Equation \eqref{eq:cond2} attains its maximum, this would also imply that the inequality in \eqref{eq:cond1} holds. Therefore, an empty identified set for the defier's proportion in Equation \eqref{eq:id_bound_df} would imply the failure of the testable implication in Equation \eqref{eq:test_implication}. 

To conclude, the violation of testable implication in Equation \eqref{eq:test_implication} is equivalent to the identified set $\Theta_I(p_{df})$ in Equation \eqref{eq:id_bound_df} being empty. Consequently, we can write the identified set for defier's probability as
\begin{equation*}
    \begin{aligned}
        \Theta_{I}(p_{df}) = & \bigg[\max _{s \in\{0,1\}}\left\{\sup _A\{\mathbb{P}(Y \in A, D=s  \vert  Z=1-s)-\mathbb{P}(Y \in A, D=s  \vert  Z=s)\}\right\}, \\
        & \qquad \qquad \qquad \min \{\mathbb{E}[D  \vert  Z=0], \mathbb{E}[1-D  \vert  Z=1]\}\bigg].
    \end{aligned}
\end{equation*}
\end{proof}

\subsection{Proof of Proposition \ref{prop:distbound} and Theorem \ref{thm1}} \label{app:proof_prop2_thm1}

\begin{proof}

For each $p_{df} \in \Theta_I(p_{df})$, we derive in bounds on $\mu_{1a}(p_a)$, $\mu_{1c}(p_a)$,$\mu_{1df}(p_a)$, $\mu_{0n}(p_n)$, $\mu_{0c}(p_n)$, $\mu_{0df}(p_n)$ where $p_a=\mathbb{E}[D \mid Z=0]-p_{d f}$, and $p_n =\mathbb{E}[1-D \mid Z=1]-p_{d f}$. We build on results from \cite{Horowitz1995IdentificationData}. 


First, suppose $0 < p_{df} < \min\{\mathbb E[D\vert Z=0], \mathbb E[1-D \vert Z=1]\}$.
Equation \eqref{important3} implies 
\begin{eqnarray*}
    \mathbb P(Y_1 \in A \vert T=a) &=& \frac{\mathbb P(Y\in A, D=1 \vert Z=1)- p_c \mathbb P(Y_1 \in A \vert T=c)}{p_a}.
\end{eqnarray*}
Since $\mathbb P(Y_1 \in A \vert T=c) \in [0,1],$ and $\mathbb P(Y_1 \in A \vert T=a) \in [0,1]$, we have
\begin{eqnarray*}
   \max\left\{\frac{\mathbb P(Y\in A, D=1 \vert Z=1)- p_c}{p_a},0\right\} \leq \mathbb P(Y_1 \in A \vert T=a) \leq \min\left\{\frac{\mathbb P(Y\in A, D=1 \vert Z=1)}{p_a},1\right\},  
   \end{eqnarray*}
which we can equivalently rewrite as 
\begin{eqnarray*}
   && \frac{(\mathbb P(Y\in A, D=1 \vert Z=1)- p_c)\mathbbm{1}\{\mathbb P(Y\in A, D=1 \vert Z=1)- p_c\geq 0\}}{p_a}\\
   && \qquad \qquad \leq \mathbb P(Y_1 \in A \vert T=a) \leq \\
   && \frac{\mathbb P(Y\in A, D=1 \vert Z=1)}{p_a}\mathbbm{1}\left\{\frac{\mathbb P(Y\in A, D=1 \vert Z=1)}{p_a} \leq 1\right\}+\mathbbm{1}\left\{\frac{\mathbb P(Y\in A, D=1 \vert Z=1)}{p_a} > 1\right\}.  
\end{eqnarray*}

Similarly, Equation \eqref{important4} implies
\begin{eqnarray*}
   \max\left\{\frac{\mathbb P(Y\in A, D=1 \vert Z=0)- p_{df}}{p_a},0\right\} \leq \mathbb P(Y_1 \in A \vert T=a) \leq \min\left\{\frac{\mathbb P(Y\in A, D=1 \vert Z=0)}{p_a},1\right\},   
   \end{eqnarray*}
or equivalently
\begin{eqnarray*}
   && \frac{(\mathbb P(Y\in A, D=1 \vert Z=0)- p_{df})\mathbbm{1}\{\mathbb P(Y\in A, D=1 \vert Z=0)- p_{df}\geq 0\}}{p_a}\\
   && \qquad \qquad \leq \mathbb P(Y_1 \in A \vert T=a) \leq \\
   && \frac{\mathbb P(Y\in A, D=1 \vert Z=0)}{p_a}\mathbbm{1}\left\{\frac{\mathbb P(Y\in A, D=1 \vert Z=0)}{p_a} \leq 1\right\}+\mathbbm{1}\left\{\frac{\mathbb P(Y\in A, D=1 \vert Z=0)}{p_a} > 1\right\}.  
\end{eqnarray*}
   
   Hence, Equations \eqref{important3}-\eqref{important4} together imply 
   \begin{eqnarray*}
   && \max\left\{\frac{\mathbb P(Y\in A, D=1 \vert Z=1)- p_{c}}{p_a}, \frac{\mathbb P(Y\in A, D=1 \vert Z=0)- p_{df}}{p_a},0\right\}\\
   && \qquad \qquad \leq \mathbb P(Y_1 \in A \vert T=a) \leq \\
   && \min\left\{\frac{\mathbb P(Y\in A, D=1 \vert Z=1)}{p_a}, \frac{\mathbb P(Y\in A, D=1 \vert Z=0)}{p_a},1\right\}.   
   \end{eqnarray*}
Therefore, the identified set for the distribution $F_{1a}\equiv F_{Y_1\vert T=a}$ is
\begin{eqnarray}
    \Theta_I(F_{1a})&=&\Bigg\{
    F_{1a} \in \mathcal F: \max\bigg\{ \frac{(\mathbb P(Y\in (y,y'], D=1 \vert Z=1)- p_c)\mathbbm{1}\{\mathbb P(Y\in (y,y'], D=1 \vert Z=1)- p_c\geq 0\}}{p_a},\nonumber\\
   && \qquad \frac{(\mathbb P(Y\in (y,y'], D=1 \vert Z=0)- p_{df})\mathbbm{1}\{\mathbb P(Y\in (y,y'], D=1 \vert Z=0)- p_{df}\geq 0\}}{p_a}\bigg\}\nonumber \\
   && \qquad \leq F_{1a}(y')-F_{1a}(y) \leq \\
   && \min\bigg\{ \frac{\mathbb P(Y\in (y,y'], D=1 \vert Z=1)}{p_a}\mathbbm{1}\left\{\frac{\mathbb P(Y\in (y,y'], D=1 \vert Z=1)}{p_a} \leq 1\right\}\nonumber \\
   && \qquad \qquad +\mathbbm{1}\left\{\frac{\mathbb P(Y\in (y,y'], D=1 \vert Z=1)}{p_a} > 1\right\},\nonumber\\
   && \qquad \frac{\mathbb P(Y\in (y,y'], D=1 \vert Z=0)}{p_a}\mathbbm{1}\left\{\frac{\mathbb P(Y\in (y,y'], D=1 \vert Z=0)}{p_a} \leq 1\right\}\nonumber \\
   && \qquad \qquad +\mathbbm{1}\left\{\frac{\mathbb P(Y\in (y,y'], D=1 \vert Z=0)}{p_a} > 1\right\} \bigg\} \text{ for all } y<y'    
   \Bigg\},\nonumber
    \end{eqnarray}
   where $\mathcal F$ is the set of all distributions. 

Combining Equations \eqref{important}-\eqref{important2}, we obtain the following identified set for $F_{0n}\equiv F_{Y_0\vert T=n}$:
\begin{eqnarray}
    \Theta_I(F_{0n})&=&\Bigg\{
    F_{0n} \in \mathcal F: \max\bigg\{ \frac{(\mathbb P(Y\in (y,y'], D=0 \vert Z=0)- p_c)\mathbbm{1}\{\mathbb P(Y\in (y,y'], D=0 \vert Z=0)- p_c\geq 0\}}{p_n},\nonumber\\
   && \qquad \frac{(\mathbb P(Y\in (y,y'], D=0 \vert Z=1)- p_{df})\mathbbm{1}\{\mathbb P(Y\in (y,y'], D=0 \vert Z=1)- p_{df}\geq 0\}}{p_n}\bigg\}\nonumber \\
   && \qquad \leq F_{0n}(y')-F_{0n}(y) \leq \\
   && \min\bigg\{ \frac{\mathbb P(Y\in (y,y'], D=0 \vert Z=0)}{p_n}\mathbbm{1}\left\{\frac{\mathbb P(Y\in (y,y'], D=0 \vert Z=0)}{p_n} \leq 1\right\}\nonumber \\
   && \qquad \qquad +\mathbbm{1}\left\{\frac{\mathbb P(Y\in (y,y'], D=0 \vert Z=0)}{p_n} > 1\right\},\nonumber\\
   && \qquad \frac{\mathbb P(Y\in (y,y'], D=0 \vert Z=1)}{p_n}\mathbbm{1}\left\{\frac{\mathbb P(Y\in (y,y'], D=0 \vert Z=1)}{p_n} \leq 1\right\}\nonumber \\
   && \qquad \qquad +\mathbbm{1}\left\{\frac{\mathbb P(Y\in (y,y'], D=0 \vert Z=1)}{p_n} > 1\right\} \bigg\} \text{ for all } y<y'    
   \Bigg\}.\nonumber
    \end{eqnarray}
Note that the above bounds are \textit{functionally} sharp, as discussed in \cite{mourifie2020sharp}. 
The pointwise bounds are given below:
\begin{eqnarray}
   &&F_{1a}^{LB}(y)\equiv \max\left\{F_{1a}^{LB_1}(y),F_{1a}^{LB_0}(y)\right\} \leq F_{1a}(y) \leq \min\left\{F_{1a}^{UB_1}(y),F_{1a}^{UB_0}(y)\right\}\equiv F_{1a}^{UB}(y),
\end{eqnarray}
where
\begin{eqnarray*}
   &&F_{1a}^{LB_1}(y)\equiv \frac{(\mathbb P(Y\leq y, D=1 \vert Z=1)- p_c)\mathbbm{1}\{\mathbb P(Y\leq y, D=1 \vert Z=1)- p_c\geq 0\}}{p_a},\\
   &&F_{1a}^{LB_0}(y)\equiv \frac{(\mathbb P(Y\leq y, D=1 \vert Z=0)- p_{df})\mathbbm{1}\{\mathbb P(Y\leq y, D=1 \vert Z=0)- p_{df}\geq 0\}}{p_a}, \\
   &&F_{1a}^{UB_1}(y)\equiv \frac{\mathbb P(Y\leq y, D=1 \vert Z=1)}{p_a}\mathbbm{1}\left\{\frac{\mathbb P(Y\leq y, D=1 \vert Z=1)}{p_a} \leq 1\right\}+\mathbbm{1}\left\{\frac{\mathbb P(Y\leq y, D=1 \vert Z=1)}{p_a} > 1\right\},\\
   && F_{1a}^{UB_0}(y) \equiv \frac{\mathbb P(Y\leq y, D=1 \vert Z=0)}{p_a}\mathbbm{1}\left\{\frac{\mathbb P(Y\leq y, D=1 \vert Z=0)}{p_a} \leq 1\right\} + \mathbbm{1}\left\{\frac{\mathbb P(Y\leq y, D=1 \vert Z=0)}{p_a} > 1\right\}.
    \end{eqnarray*}
    
\begin{eqnarray}
   &&F_{0n}^{LB}(y)\equiv \max\left\{F_{0n}^{LB_1}(y),F_{0n}^{LB_0}(y)\right\} \leq F_{0n}(y) \leq \min\left\{F_{0n}^{UB_1}(y),F_{0n}^{UB_0}(y)\right\}\equiv F_{0n}^{UB}(y),
\end{eqnarray}
where
\begin{eqnarray*}
    && F_{0n}^{LB_0}(y) \equiv \frac{(\mathbb P(Y\leq y, D=0 \vert Z=0)- p_c)\mathbbm{1}\{\mathbb P(Y \leq y, D=0 \vert Z=0)- p_c\geq 0\}}{p_n},\\
   && F_{0n}^{LB_1}(y) \equiv \frac{(\mathbb P(Y\leq y, D=0 \vert Z=1)- p_{df})\mathbbm{1}\{\mathbb P(Y\leq y, D=0 \vert Z=1)- p_{df}\geq 0\}}{p_n},\\
   && F_{0n}^{UB_0}(y) \equiv \frac{\mathbb P(Y \leq y, D=0 \vert Z=0)}{p_n}\mathbbm{1}\left\{\frac{\mathbb P(Y \leq y, D=0 \vert Z=0)}{p_n} \leq 1\right\} + \mathbbm{1}\left\{\frac{\mathbb P(Y \leq y, D=0 \vert Z=0)}{p_n} > 1\right\},\\
   && F_{0n}^{UB_1}(y) \equiv \frac{\mathbb P(Y \leq y, D=0 \vert Z=1)}{p_n}\mathbbm{1}\left\{\frac{\mathbb P(Y\leq y, D=0 \vert Z=1)}{p_n} \leq 1\right\} + \mathbbm{1}\left\{\frac{\mathbb P(Y\leq y, D=0 \vert Z=1)}{p_n} > 1\right\}.
    \end{eqnarray*}
The identified sets for $F_{dt}\equiv F_{Y_d\vert T=t}$, $d\in\{0,1\}$, $t\in \{c,df\}$ are given by
  \begin{eqnarray*}
      F_{1c}(y) &=& \frac{\mathbb P(Y \leq y, D=1 \vert Z=1)-p_a F_{1a}(y)}{p_c },\\
    F_{1df}(y) &=& \frac{\mathbb P(Y \leq y, D=1 \vert Z=0)-p_a F_{1a}(y)}{p_{df} },\\
    F_{0c}(y) &=& \frac{\mathbb P(Y \leq y, D=0 \vert Z=0)-p_n F_{0n}(y)}{p_c },\\
    F_{0df}(y) &=& \frac{\mathbb P(Y \leq y, D=0 \vert Z=1)-p_n F_{0n}(y)}{p_{df}},
  \end{eqnarray*}  
where $F_{1a} \in \Theta_I(F_{1a})$, and $F_{0n} \in \Theta_I(F_{0n})$. 

Let $\mu_{F}$ denote the expected value of a given cdf $F$. Then sharp bounds for $\mu_{1a}$ and $\mu_{0n}$ are as follows: 
\begin{eqnarray*}
    \mu_{F_{1a}^{UB}} &\leq& \mu_{1a} \leq \mu_{F_{1a}^{LB}}\\
    \mu_{F_{0n}^{UB}} &\leq& \mu_{0n} \leq \mu_{F_{0n}^{LB}}.
\end{eqnarray*}

Once obtaining the sharp bounds on $\mu_{1a}, \mu_{0n}$, we can characterize the sharp bounds for $\mu_{dt}, d \in \{0, 1\}, t \in \{c, df\}$ as
\begin{eqnarray*}
\begin{aligned}
& \mu_{1 c} = \frac{\mathbb{E}[Y D  \vert  Z=1]-p_a \mu_{1 a}}{\mathbb{E}[D  \vert  Z=1]-p_a},\ \ \ \mu_{1 d f} = \frac{\mathbb{E}[Y D  \vert  Z=0]-p_a \mu_{1 a}}{\mathbb{E}[D  \vert  Z=0]-p_a}, \\
& \mu_{0 c} = \frac{\mathbb{E}[Y(1-D)  \vert  Z=0]-p_n \mu_{0 n}}{\mathbb{E}[1-D  \vert  Z=0]-p_n},\ \ \ \mu_{0 d f} = \frac{\mathbb{E}[Y(1-D)  \vert  Z=1]-p_n \mu_{0 n}}{\mathbb{E}[1-D  \vert  Z=1]-p_n}.
\end{aligned}
\end{eqnarray*}

Finally, sharp bounds for parameters $\theta_{0c}$, $\theta_{1c}$, $\theta_{0df}$, $\theta_{1df}$ are given by
\begin{eqnarray*}
    \theta_{0c}=\theta_{1c}&=& \mu_{1c}-\mu_{0c},\ \ \ \theta_{0df}=\theta_{1df}(p_{df})= \mu_{1df}-\mu_{0df}.
\end{eqnarray*}

Therefore, when $p_{df} \in \mathring{\Theta}_I(p_{df})$, we can characterize the identified set for $\Gamma$ with $\Theta_I^1(A_2)$.

These sharp bounds can be difficult to compute in practice. Valid outer sets can be computed:
\begin{eqnarray*}
    \max\left\{\mu_{F_{1a}^{UB_1}}, \mu_{F_{1a}^{UB_0}}\right\} &\leq& \mu_{1a} \leq \min\left\{\mu_{F_{1a}^{LB_1}}, \mu_{F_{1a}^{LB_0}}\right\}\\
    \max\left\{\mu_{F_{0n}^{UB_1}}, \mu_{F_{0n}^{UB_0}}\right\} &\leq& \mu_{0n} \leq \min\left\{\mu_{F_{0n}^{LB_1}}, \mu_{F_{0n}^{LB_0}}\right\}.
\end{eqnarray*}

\begin{remark}
    $\max\left\{\mu_{F_{1a}^{UB_1}}, \mu_{F_{1a}^{UB_0}}\right\}=\mu_{F_{1a}^{UB}}$ if $F_{1a}^{UB_1}$ first-order stochastically dominates $F_{1a}^{UB_0}$ or vice versa. Similarly, $\min\left\{\mu_{F_{1a}^{LB_1}}, \mu_{F_{1a}^{LB_0}}\right\}=\mu_{F_{1a}^{LB}}$ if $F_{1a}^{LB_1}$ first-order stochastically dominates $F_{1a}^{LB_0}$ or vice versa. However, in general, 
    $\max\left\{\mu_{F_{1a}^{UB_1}}, \mu_{F_{1a}^{UB_0}}\right\} \leq \mu_{F_{1a}^{UB}}$, and $\min\left\{\mu_{F_{1a}^{LB_1}}, \mu_{F_{1a}^{LB_0}}\right\}\geq \mu_{F_{1a}^{LB}}$. 
\end{remark}

\subsubsection{Proof of Lemma \ref{lem:lee2009}}

\begin{proof}
    To compute the expected value of $F_{1a}^{LB_z}$, we only need to identify the support and the density of a random variable with such cdf. Take $\mu_{F_{1a}^{LB_1}}$ as the example. Since the outcome is continuous with a strictly increasing cdf, the support is characterized by $\{y: 0<F_{1a}^{LB_1}(y) < 1\}$, which is this case equal to $\{y: \mathbb P(Y \leq y, D=1 \vert Z=1)- p_c > 0\}=\left\{y: \mathbb P(Y \leq y\vert D=1, Z=1)> \frac{p_c}{\mathbb E[D\vert Z=1]}\right\}=\left\{y: y> F^{-1}_{Y\vert D=1,Z=1}\left(\frac{p_c}{\mathbb E[D\vert Z=1]}\right)\right\}$. Then for all $y> F^{-1}_{Y\vert D=1,Z=1}\left(\frac{p_c}{\mathbb E[D\vert Z=1]}\right)$, $F_{1a}^{LB_1}(y)=\frac{\mathbb P(Y \leq y, D=1 \vert Z=1)- p_c}{p_a}$ and its density $f_{1a}^{LB_1}(y)=\partial F_{1a}^{LB_1}(y)/\partial y=\frac{f_{Y,D\vert Z}(y,1\vert 1)}{p_a}$. Therefore,
    \begin{eqnarray*}
       \mu_{F_{1a}^{LB_1}} &=& \int_{y> F^{-1}_{Y\vert D=1,Z=1}\left(\frac{p_c}{\mathbb E[D\vert Z=1]}\right)} y \frac{f_{Y,D\vert Z}(y,1\vert 1)}{p_a} dy,\\
       &=& \int_{y> F^{-1}_{Y\vert D=1,Z=1}\left(1-\frac{p_a}{\mathbb E[D\vert Z=1]}\right)} y \frac{f_{Y,D\vert Z}(y,1\vert 1)}{p_a} dy\ \ \text{ since } p_c=\mathbb E[D\vert Z=1]-p_a,\\
       &=& \int_{y> F^{-1}_{Y\vert D=1,Z=1}\left(1-\frac{p_a}{\mathbb E[D\vert Z=1]}\right)} y \frac{f_{Y\vert D, Z}(y\vert 1, 1)}{p_a/\mathbb E[D\vert Z=1]} dy\ \ \text{ from Bayes' rule}\\
       &=& \int_{y> F^{-1}_{Y\vert D=1,Z=1}\left(1-\frac{p_a}{\mathbb E[D\vert Z=1]}\right)} y \frac{f_{Y\vert D, Z}(y\vert 1, 1)}{\mathbb P\left(Y> F^{-1}_{Y\vert D=1,Z=1}\left(1-\frac{p_a}{\mathbb E[D\vert Z=1]}\right) \vert D = 1, Z = 1\right)} dy,\\ 
       && \qquad \qquad \text{ because } \mathbb P\left(Y> F^{-1}_{Y\vert D=1,Z=1}\left(1-\frac{p_a}{\mathbb E[D\vert Z=1]}\right) \vert D = 1, Z = 1\right)=p_a/\mathbb E[D\vert Z=1],\\
       &=& \mathbb E\left[Y\vert D=1, Z=1, Y > F^{-1}_{Y\vert D=1, Z=1}\left(1-\frac{p_a}{\mathbb E[D\vert Z=1]}\right)\right]. 
    \end{eqnarray*}

    A similar argument for $\mu_{F_{1a}^{LB_0}}, \mu_{F_{0n}^{LB_z}}$ yields the formula in the lemma.
    
    For $\mu_{F_{1a}^{UB_z}}$, the support is $\{y: \frac{\mathbb P(Y\leq y, D=1 \vert Z=z)}{p_a} < 1\}=\{y: \mathbb P(Y \leq y \vert D=1, Z=z) < \frac{p_a}{\mathbb E[D\vert Z=z]}\}=\left\{y: y< F^{-1}_{Y\vert D=1,Z=z}\left(\frac{p_a}{\mathbb E[D\vert Z=z]}\right)\right\}$. Also, the density of $F_{1a}^{UB_z}$ is $f_{1a}^{UB_z}(y)=\partial F_{1a}^{UB_z}(y)/\partial y=\frac{f_{Y,D\vert Z}(y,1\vert z)}{p_a}$. Therefore,
    \begin{eqnarray*}
       \mu_{F_{1a}^{UB_z}} &=& \int_{y< F^{-1}_{Y\vert D=1,Z=z}\left(\frac{p_a}{\mathbb E[D\vert Z=z]}\right)} y \frac{f_{Y,D\vert Z}(y,1\vert z)}{p_a} dy,\\
       &=& \int_{y< F^{-1}_{Y\vert D=1,Z=z}\left(\frac{p_a}{\mathbb E[D\vert Z=z]}\right)} y \frac{f_{Y,D\vert Z}(y,1\vert z)}{p_a} dy,\\
       &=& \int_{y< F^{-1}_{Y\vert D=1,Z=z}\left(\frac{p_a}{\mathbb E[D\vert Z=z]}\right)} y \frac{f_{Y\vert D, Z}(y\vert 1, z)}{p_a/\mathbb E[D\vert Z=z]} dy\ \ \text{ from Bayes' rule}\\
       &=& \int_{y < F^{-1}_{Y\vert D=1,Z=z}\left(\frac{p_a}{\mathbb E[D\vert Z=z]}\right)} y \frac{f_{Y\vert D, Z}(y\vert 1, z)}{\mathbb P\left(Y < F^{-1}_{Y\vert D=1,Z=z}\left(\frac{p_a}{\mathbb E[D\vert Z=z]}\right) \vert D = 1, Z = z\right)} dy,\\ 
       && \qquad \qquad \text{ because } \mathbb P\left(Y < F^{-1}_{Y\vert D=1,Z=z}\left(\frac{p_a}{\mathbb E[D\vert Z=z]}\right) \vert D = 1, Z = z\right)=p_a/\mathbb E[D\vert Z=z],\\
       &=& \mathbb E\left[Y\vert D=1, Z=z, Y < F^{-1}_{Y\vert D=1, Z=z}\left(\frac{p_a}{\mathbb E[D\vert Z=z]}\right)\right]. 
    \end{eqnarray*}

    A similar argument for $\mu_{F_{0n}^{UB_z}}$ yields the formula in the lemma.
    
\end{proof}

Second, when $p_{df}=0$, we have the bounds in $\Theta_I(A_1)$.

Third, suppose $p_{df}=\min\{\mathbb E[D\vert Z=0],\mathbb E[1-D\vert Z=1]\}$. We distinguish three cases.
\begin{enumerate}
    \item $p_{df}=\mathbb E[D\vert Z=0] < \mathbb E[1-D\vert Z=1]$

    In this case, $p_a=\mathbb E[D\vert Z=0]-p_{df}=0$, $p_c=\mathbb E[D\vert Z=1]-p_a=\mathbb E[D\vert Z=1]$, and $p_n=\mathbb E[1-D\vert Z=1]-\mathbb E[D\vert Z=0]$. Equations \eqref{important3}-\eqref{important4} imply
    \begin{eqnarray*}
        \mathbb P(Y_1 \leq y \vert T=c) &=& \mathbb P(Y \leq y \vert D=1, Z=1),\\
        \mathbb P(Y_1 \leq y \vert T=df) &=& \mathbb P(Y \leq y \vert D=1, Z=0),\\
    \end{eqnarray*}
Following a similar reasoning as before and combining Equations \eqref{important}-\eqref{important2}, we obtain the following identified set for $F_{0n}$:
\begin{eqnarray}
    && \Theta_I(F_{0n})=\Bigg\{
    F_{0n} \in \mathcal F: \nonumber\\
    &&\max\bigg\{ \frac{(\mathbb P(Y\in (y,y'], D=0 \vert Z=0)- \mathbb E[D\vert Z=1])}{\mathbb E[1-D\vert Z=1]-\mathbb E[D\vert Z=0]} \cdot\nonumber\\
    && \qquad \mathbbm{1}\{\mathbb P(Y\in (y,y'], D=0 \vert Z=0)- \mathbb E[D\vert Z=1]\geq 0\}, \nonumber \\
   && \qquad \frac{(\mathbb P(Y\in (y,y'], D=0 \vert Z=1)- \mathbb E[D\vert Z=0])}{\mathbb E[1-D\vert Z=1]-\mathbb E[D\vert Z=0]} \cdot \nonumber \\
   && \qquad \mathbbm{1}\{\mathbb P(Y\in (y,y'], D=0 \vert Z=1)- \mathbb E[D\vert Z=0]\geq 0\} \bigg\}\nonumber \\
   && \qquad \leq F_{0n}(y')-F_{0n}(y) \leq \\
   && \min\bigg\{ \frac{\mathbb P(Y\in (y,y'], D=0 \vert Z=0)}{\mathbb E[1-D\vert Z=1]-\mathbb E[D\vert Z=0]}\mathbbm{1}\left\{\frac{\mathbb P(Y\in (y,y'], D=0 \vert Z=0)}{\mathbb E[1-D\vert Z=1]-\mathbb E[D\vert Z=0]} \leq 1\right\}\nonumber \\
   && \qquad \qquad +\mathbbm{1}\left\{\frac{\mathbb P(Y\in (y,y'], D=0 \vert Z=0)}{\mathbb E[1-D\vert Z=1]-\mathbb E[D\vert Z=0]} > 1\right\},\nonumber\\
   && \qquad \frac{\mathbb P(Y\in (y,y'], D=0 \vert Z=1)}{\mathbb E[1-D\vert Z=1]-\mathbb E[D\vert Z=0]}\mathbbm{1}\left\{\frac{\mathbb P(Y\in (y,y'], D=0 \vert Z=1)}{\mathbb E[1-D\vert Z=1]-\mathbb E[D\vert Z=0]} \leq 1\right\}\nonumber \\
   && \qquad \qquad +\mathbbm{1}\left\{\frac{\mathbb P(Y\in (y,y'], D=0 \vert Z=1)}{\mathbb E[1-D\vert Z=1]-\mathbb E[D\vert Z=0]} > 1\right\} \bigg\} \text{ for all } y<y'    
   \Bigg\}.\nonumber
    \end{eqnarray}
\begin{eqnarray*}
   &&F_{0n}^{LB}(y)\equiv \max\left\{F_{0n}^{LB_1}(y),F_{0n}^{LB_0}(y)\right\} \leq F_{0n}(y) \leq \min\left\{F_{0n}^{UB_1}(y),F_{0n}^{UB_0}(y)\right\}\equiv F_{0n}^{UB}(y),
\end{eqnarray*}
where
\begin{eqnarray*}
    && F_{0n}^{LB_0}(y) \equiv \frac{(\mathbb P(Y\leq y, D=0 \vert Z=0)- \mathbb E[D\vert Z=1])\mathbbm{1}\{\mathbb P(Y \leq y, D=0 \vert Z=0)- \mathbb E[D\vert Z=1]\geq 0\}}{\mathbb E[1-D\vert Z=1]-\mathbb E[D\vert Z=0]},\\
   && F_{0n}^{LB_1}(y) \equiv \frac{(\mathbb P(Y\leq y, D=0 \vert Z=1)- \mathbb E[D\vert Z=0])\mathbbm{1}\{\mathbb P(Y\leq y, D=0 \vert Z=1)- \mathbb E[D\vert Z=0]\geq 0\}}{\mathbb E[1-D\vert Z=1]-\mathbb E[D\vert Z=0]},\\
   && F_{0n}^{UB_0}(y) \equiv \frac{\mathbb P(Y \leq y, D=0 \vert Z=0)}{\mathbb E[1-D\vert Z=1]-\mathbb E[D\vert Z=0]}\mathbbm{1}\left\{\frac{\mathbb P(Y \leq y, D=0 \vert Z=0)}{\mathbb E[1-D\vert Z=1]-\mathbb E[D\vert Z=0]} \leq 1\right\}\\
   && + \mathbbm{1}\left\{\frac{\mathbb P(Y \leq y, D=0 \vert Z=0)}{\mathbb E[1-D\vert Z=1]-\mathbb E[D\vert Z=0]} > 1\right\},\\
   && F_{0n}^{UB_1}(y) \equiv \frac{\mathbb P(Y \leq y, D=0 \vert Z=1)}{\mathbb E[1-D\vert Z=1]-\mathbb E[D\vert Z=0]}\mathbbm{1}\left\{\frac{\mathbb P(Y\leq y, D=0 \vert Z=1)}{\mathbb E[1-D\vert Z=1]-\mathbb E[D\vert Z=0]} \leq 1\right\}\\
   && + \mathbbm{1}\left\{\frac{\mathbb P(Y\leq y, D=0 \vert Z=1)}{\mathbb E[1-D\vert Z=1]-\mathbb E[D\vert Z=0]} > 1\right\}.
    \end{eqnarray*}
The identified sets for $F_{dt}$, $d\in\{0,1\}$, $t\in \{c,df\}$ are given by
  \begin{eqnarray*}
    F_{0c}(y) &=& \frac{\mathbb P(Y \leq y, D=0 \vert Z=0)-(\mathbb E[1-D\vert Z=1]-\mathbb E[D\vert Z=0]) F_{0n}(y)}{\mathbb E[D\vert Z=1]},\\
    F_{0df}(y) &=& \frac{\mathbb P(Y \leq y, D=0 \vert Z=1)-(\mathbb E[1-D\vert Z=1]-\mathbb E[D\vert Z=0]) F_{0n}(y)}{\mathbb E[D\vert Z=0]},
  \end{eqnarray*}  
where $F_{0n} \in \Theta_I(F_{0n})$. 

\begin{eqnarray*}
    \Theta_I^{3,1}(A_2) &=& \Bigg \{\Gamma \in \mathbb R^{20}: \theta_{0a}(p_a)=\theta_{1a}(p_a) = \mu_{1a} - \mu_{0a}, \theta_{0n}(p_n) = \theta_{1n}(p_n) = \mu_{1n} - \mu_{0n}(p_n),\\
   && \qquad \theta_{0c}(p_{c})=\theta_{1c}(p_{c})=\mathbb{E}[Y \vert  D=1, Z=1]-\frac{\mathbb{E}[Y(1-D)  \vert  Z=0]-p_n \mu_{0 n}(p_n)}{\mathbb{E}[1-D  \vert  Z=0]-p_n},\\   
   &&\qquad \theta_{0df}(p_{df})=\theta_{1df}(p_{df}) = \mathbb{E}[Y \vert  D=1, Z=0]- \frac{\mathbb{E}[Y(1-D)  \vert  Z=1]-p_n \mu_{0 n}(p_n)}{\mathbb{E}[1-D  \vert  Z=1]-p_n},\\   
   && \qquad \delta_{0a}=\delta_{1a}=\delta_{0c}=\delta_{1c}=\delta_{0df}=\delta_{1df}=\delta_{0n}=\delta_{1n}=0,\\
   && \qquad p_a=0, p_c=\mathbb{E}[D  \vert  Z=1], p_{df}=\mathbb{E}[D  \vert  Z=0], p_n=\mathbb{E}[1-D  \vert  Z=1]-\mathbb{E}[D  \vert  Z=0],\\
    && \qquad  \mu_{0n}(p_n)\in \left[\mu_{F_{0n}^{UB}}(p_{n}),\mu_{F_{0n}^{LB}}(p_{n})\right], \mu_{1n}, \mu_{0a}, \mu_{1a} \in \left[\inf \mathcal{Y}, \sup \mathcal{Y}\right]
   \Bigg \}, 
\end{eqnarray*}

    \item $p_{df}=\mathbb E[1-D\vert Z=1] < \mathbb E[D\vert Z=0]$

    This is case is symmetric to the previous. By setting $\tilde{D}=1-D$ and $\tilde{Z}=1-Z$, we can adapt the previous identification results.

    Hence, $p_{n}=\mathbb E[1-D\vert Z=1]-p_{df}=0$, $p_c=\mathbb E[1-D\vert Z=0]-p_{n}=\mathbb E[1-D\vert Z=0]$, and $p_a=\mathbb E[D\vert Z=0]-\mathbb E[1-D\vert Z=1]$. Equations \eqref{important3}-\eqref{important4} imply
    \begin{eqnarray*}
        \mathbb P(Y_0 \leq y \vert T=c) &=& \mathbb P(Y \leq y \vert D=0, Z=0),\\
        \mathbb P(Y_0 \leq y \vert T=df) &=& \mathbb P(Y \leq y \vert D=0, Z=1),\\
    \end{eqnarray*}
Following a similar reasoning as before and combining Equations \eqref{important}-\eqref{important2}, we obtain the following identified set for $F_{1a}$:
\begin{eqnarray}
    && \Theta_I(F_{1a})=\Bigg\{
    F_{1a} \in \mathcal F: \nonumber\\
    &&\max\bigg\{ \frac{(\mathbb P(Y\in (y,y'], D=1 \vert Z=1)- \mathbb E[1-D\vert Z=0])}{\mathbb E[D\vert Z=0]-\mathbb E[1-D\vert Z=1]} \cdot \nonumber\\
    && \qquad \mathbbm{1}\{\mathbb P(Y\in (y,y'], D=1 \vert Z=1)- \mathbb E[1-D\vert Z=0]\geq 0\} \nonumber\\
   && \qquad \frac{(\mathbb P(Y\in (y,y'], D=1 \vert Z=0)- \mathbb E[1-D\vert Z=1])}{\mathbb E[D\vert Z=0]-\mathbb E[1-D\vert Z=1]} \cdot \nonumber \\
   && \qquad \mathbbm{1}\{\mathbb P(Y\in (y,y'], D=1 \vert Z=0)- \mathbb E[1-D\vert Z=1]\geq 0\} \bigg\} \nonumber \\
   && \qquad \leq F_{1a}(y')-F_{1a}(y) \leq \\
   && \min\bigg\{ \frac{\mathbb P(Y\in (y,y'], D=1 \vert Z=1)}{\mathbb E[D\vert Z=0]-\mathbb E[1-D\vert Z=1]}\mathbbm{1}\left\{\frac{\mathbb P(Y\in (y,y'], D=1 \vert Z=1)}{\mathbb E[D\vert Z=0]-\mathbb E[1-D\vert Z=1]} \leq 1\right\}\nonumber \\
   && \qquad \qquad +\mathbbm{1}\left\{\frac{\mathbb P(Y\in (y,y'], D=1 \vert Z=1)}{\mathbb E[D\vert Z=0]-\mathbb E[1-D\vert Z=1]} > 1\right\},\nonumber\\
   && \qquad \frac{\mathbb P(Y\in (y,y'], D=1 \vert Z=0)}{\mathbb E[D\vert Z=0]-\mathbb E[1-D\vert Z=1]}\mathbbm{1}\left\{\frac{\mathbb P(Y\in (y,y'], D=1 \vert Z=0)}{\mathbb E[D\vert Z=0]-\mathbb E[1-D\vert Z=1]} \leq 1\right\}\nonumber \\
   && \qquad \qquad +\mathbbm{1}\left\{\frac{\mathbb P(Y\in (y,y'], D=1 \vert Z=0)}{\mathbb E[D\vert Z=0]-\mathbb E[1-D\vert Z=1]} > 1\right\} \bigg\} \text{ for all } y<y'    
   \Bigg\}.\nonumber
    \end{eqnarray}
\begin{eqnarray*}
   &&F_{1a}^{LB}(y)\equiv \max\left\{F_{1a}^{LB_1}(y),F_{1a}^{LB_0}(y)\right\} \leq F_{1a}(y) \leq \min\left\{F_{1a}^{UB_1}(y),F_{1a}^{UB_0}(y)\right\}\equiv F_{1a}^{UB}(y),
\end{eqnarray*}
where
\begin{eqnarray*}
    && F_{1a}^{LB_1}(y) \equiv \frac{(\mathbb P(Y\leq y, D=1 \vert Z=1)- \mathbb E[1-D\vert Z=0])}{\mathbb E[D\vert Z=0]-\mathbb E[1-D\vert Z=1]} \cdot\\
    && \qquad \mathbbm{1}\{\mathbb P(Y \leq y, D=1 \vert Z=1)- \mathbb E[1-D\vert Z=0]\geq 0\}, \\
   && F_{1a}^{LB_0}(y) \equiv \frac{(\mathbb P(Y\leq y, D=1 \vert Z=0)- \mathbb E[1-D\vert Z=1])}{\mathbb E[D\vert Z=0]-\mathbb E[1-D\vert Z=1]} \cdot\\
   && \qquad \mathbbm{1}\{\mathbb P(Y\leq y, D=1 \vert Z=0)- \mathbb E[1-D\vert Z=1]\geq 0\}, \\
   && F_{1a}^{UB_1}(y) \equiv \frac{\mathbb P(Y \leq y, D=1 \vert Z=1)}{\mathbb E[D\vert Z=0]-\mathbb E[1-D\vert Z=1]}\mathbbm{1}\left\{\frac{\mathbb P(Y \leq y, D=1 \vert Z=1)}{\mathbb E[D\vert Z=0]-\mathbb E[1-D\vert Z=1]} \leq 1\right\}\\
   && + \mathbbm{1}\left\{\frac{\mathbb P(Y \leq y, D=1 \vert Z=1)}{\mathbb E[D\vert Z=0]-\mathbb E[1-D\vert Z=1]} > 1\right\},\\
   && F_{1a}^{UB_0}(y) \equiv \frac{\mathbb P(Y \leq y, D=1 \vert Z=0)}{\mathbb E[D\vert Z=0]-\mathbb E[1-D\vert Z=1]}\mathbbm{1}\left\{\frac{\mathbb P(Y\leq y, D=1 \vert Z=0)}{\mathbb E[D\vert Z=0]-\mathbb E[1-D\vert Z=1]} \leq 1\right\}\\
   && + \mathbbm{1}\left\{\frac{\mathbb P(Y\leq y, D=1 \vert Z=0)}{\mathbb E[D\vert Z=0]-\mathbb E[1-D\vert Z=1]} > 1\right\}.
    \end{eqnarray*}
The identified sets for $F_{dt}$, $d\in\{0,1\}$, $t\in \{c,df\}$ are given by
  \begin{eqnarray*}
    F_{1c}(y) &=& \frac{\mathbb P(Y \leq y, D=1 \vert Z=1)-(\mathbb E[D\vert Z=0]-\mathbb E[1-D\vert Z=1]) F_{1a}(y)}{\mathbb E[1-D\vert Z=0]},\\
    F_{1df}(y) &=& \frac{\mathbb P(Y \leq y_, D=1 \vert Z=0)-(\mathbb E[D\vert Z=0]-\mathbb E[1-D\vert Z=1]) F_{1a}(y)}{\mathbb E[1-D\vert Z=1]},
  \end{eqnarray*}  
where $F_{1a} \in \Theta_I(F_{1a})$.

\begin{eqnarray*}
    \Theta_I^{3,2}(A_2) &=& \Bigg \{\Gamma \in \mathbb R^{20}: \theta_{0a}(p_a)=\theta_{1a}(p_a) = \mu_{1a}(p_a) - \mu_{0a}, \theta_{0n}(p_n) = \theta_{1n}(p_n) = \mu_{1n} - \mu_{0n},\\
   && \qquad \theta_{0c}(p_{c})=\theta_{1c}(p_{c})=\frac{\mathbb{E}[Y D  \vert  Z=1]-p_a \mu_{1 a}(p_a)}{\mathbb{E}[D  \vert  Z=1]-p_a}-\mathbb{E}[Y  \vert D=0,  Z=0],\\   
   &&\qquad \theta_{0df}(p_{df})=\theta_{1df}(p_{df}) = \frac{\mathbb{E}[Y D  \vert  Z=0]-p_a \mu_{1 a}(p_a)}{\mathbb{E}[D  \vert  Z=0]-p_a}- \mathbb{E}[Y  \vert D=0, Z=1],\\   
   && \qquad \delta_{0a}=\delta_{1a}=\delta_{0c}=\delta_{1c}=\delta_{0df}=\delta_{1df}=\delta_{0n}=\delta_{1n}=0,\\
   && \qquad p_a=\mathbb{E}[D  \vert  Z=0]-\mathbb{E}[1-D  \vert  Z=1], p_c=\mathbb{E}[1-D  \vert  Z=0],\\
   && \qquad p_{df} = \mathbb{E}[1-D  \vert  Z=1], p_n=0,\\
    && \qquad \mu_{1a}(p_a) \in \left[\mu_{F_{1a}^{UB}}(p_a),\mu_{F_{1a}^{LB}}(p_a)\right], \mu_{1n}, \mu_{0n}, \mu_{0a} \in \left[\inf \mathcal{Y}, \sup \mathcal{Y}\right]
   \Bigg \}, 
\end{eqnarray*}

    \item $p_{df}=\mathbb E[D\vert Z=0] = \mathbb E[1-D\vert Z=1]$

    Then $p_a=p_{n}=0$, $p_c=\mathbb E[D\vert Z=1]=\mathbb E[1-D\vert Z=0]$.
    \begin{eqnarray*}
    \mathbb P(Y_1 \leq y \vert T=c) &=& \mathbb P(Y \leq y \vert D=1, Z=1),\\
    \mathbb P(Y_1 \leq y \vert T=df) &=& \mathbb P(Y \leq y \vert D=1, Z=0),\\
    \mathbb P(Y_0 \leq y \vert T=c) &=& \mathbb P(Y \leq y \vert D=0, Z=0),\\
    \mathbb P(Y_0 \leq y \vert T=df) &=& \mathbb P(Y \leq y \vert D=0, Z=1),\\
    \end{eqnarray*}
    
\end{enumerate}

\begin{eqnarray*}
    \Theta_I^{3,3}(A_2) &=& \Bigg \{\Gamma \in \mathbb R^{20}: \theta_{0a}(p_a)=\theta_{1a}(p_a) = \mu_{1a} - \mu_{0a}, \theta_{0n}(p_n) = \theta_{1n}(p_n) = \mu_{1n} - \mu_{0n},\\
   && \qquad \theta_{0c}(p_{c})=\theta_{1c}(p_{c})=\mathbb{E}[Y  \vert  D=1, Z=1]-\mathbb{E}[Y  \vert D=0, Z=0],\\   
   &&\qquad \theta_{0df}(p_{df})=\theta_{1df}(p_{df}) = \mathbb{E}[Y  \vert  D=1, Z=0]- \mathbb{E}[Y  \vert D=0, Z=1],\\   
   && \qquad \delta_{0a}=\delta_{1a}=\delta_{0c}=\delta_{1c}=\delta_{0df}=\delta_{1df}=\delta_{0n}=\delta_{1n}=0,\\
   && \qquad p_a=0, p_c=\mathbb{E}[D  \vert  Z=1]=\mathbb{E}[1-D  \vert  Z=0], \\
   && \qquad p_{df}=\mathbb{E}[D  \vert  Z=0]=\mathbb{E}[1-D  \vert  Z=1], p_n=0,\\
    && \qquad \mu_{1a}, \mu_{0n}, \mu_{1n}, \mu_{0a} \in \left[\inf \mathcal{Y}, \sup \mathcal{Y}\right]
   \Bigg \}. 
\end{eqnarray*}

Finally,
\begin{eqnarray*} 
\Theta_I^3(A_2)=\left\{\begin{array}{l}
        \Theta_I^{3,1}(A_2),\ \text{ if }\ p_{df}=\mathbb E[D\vert Z=0] < \mathbb E[1-D\vert Z=1],\\
        \Theta_I^{3,2}(A_2),\ \text{ if }\ p_{df}=\mathbb E[1-D\vert Z=1] < \mathbb E[D\vert Z=0],\\
       \Theta_I^{3,3}(A_2),\ \text{ if }\ p_{df}=\mathbb E[D\vert Z=0] = \mathbb E[1-D\vert Z=1].
        \end{array}\right.
\end{eqnarray*}
\end{proof}

\section{Proof of Proposition \ref{prop:noer}}\label{propC}

\begin{proof}
When $\mathbb{E}[D  \vert  Z=1] - \mathbb{E}[D  \vert  Z=0]> 0$, under Assumption \ref{ass:mon}, there are no defiers, i.e., $p_{df}=0$. This implies $p_c=\mathbb{E}[D  \vert  Z=1] - \mathbb{E}[D  \vert  Z=0]$, $p_a=\mathbb E[D\vert Z=0]$, and $p_n=\mathbb E[1-D\vert Z=1]$.
From the main text, we have shown the following: for all Borel set $A$, 
\begin{eqnarray*}
    \mathbb P(Y_{11} \in A \vert T=c) &=& \frac{\mathbb P(Y\in A, D=1 \vert Z=1)- p_a \mathbb P(Y_{11} \in A \vert T=a)}{p_c},
\end{eqnarray*}
\begin{eqnarray*}
   \max\left\{\frac{\mathbb P(Y\in A, D=1 \vert Z=1)- p_c}{p_a},0\right\} \leq \mathbb P(Y_{11} \in A \vert T=a) \leq \min\left\{\frac{\mathbb P(Y\in A, D=1 \vert Z=1)}{p_a},1\right\},  
\end{eqnarray*}
$\mathbb P(Y_{10}\in A \vert T=a)=\mathbb P(Y \in A \vert D=1, Z=0)$, $\mathbb P(Y_{01}\in A \vert T=n)=\mathbb P(Y \in A \vert D=0, Z=1)$,
\begin{eqnarray*}
   \max\left\{\frac{\mathbb P(Y\in A, D=0 \vert Z=0)- p_c}{p_n},0\right\} \leq \mathbb P(Y_{00} \in A \vert T=n) \leq \min\left\{\frac{\mathbb P(Y\in A, D=0 \vert Z=0)}{p_n},1\right\},  
\end{eqnarray*}
\begin{eqnarray*}
    \mathbb P(Y_{00} \in A \vert T=c) &=& \frac{\mathbb P(Y\in A, D=0 \vert Z=0)- p_n \mathbb P(Y_{00} \in A \vert T=n)}{p_c}.
\end{eqnarray*}
The results of this proposition are straightforward if $A$ is set to be $(-\infty, y]$, as one can easily check that the bounds are well-defined cdfs.  
\end{proof}

\begin{corollary}\label{cor:noer1}
 Suppose Assumptions \ref{ass:RA} and \ref{ass:mon} hold, and $\mathbb{E}[D  \vert  Z=1] - \mathbb{E}[D  \vert  Z=0]> 0$. Then,
\begin{eqnarray*}
    && p_c=\mathbb{E}[D  \vert  Z=1] - \mathbb{E}[D  \vert  Z=0],\ p_a=\mathbb E[D\vert Z=0],\ p_n=\mathbb E[1-D\vert Z=1], p_{df}=0,\\
    && \mu_{F_{11a}^{UB}} \leq \mu_{11a} \leq \mu_{F_{11a}^{LB}},\ \mu_{F_{00n}^{UB}} \leq \mu_{00n} \leq \mu_{F_{00n}^{LB}},\\
    && \mu_{11c} = \frac{\mathbb E[Y D \vert Z=1]- p_a \mu_{11a}}{p_c},\  \mu_{00c} = \frac{\mathbb E[Y (1-D) \vert Z=0]- p_n \mu_{00n}}{p_c},\\
    && \mu_{10a}=\mathbb E[Y \vert D=1, Z=0],\ \mu_{01n}=\mathbb E[Y \vert D=0, Z=1]. 
\end{eqnarray*}   
These bounds are sharp. 
For continuous outcomes with strictly increasing cdfs, we have: 
\begin{eqnarray*}
\mu_{F_{11a}^{LB}} &=& \mathbb E\left[Y\vert D=1, Z=1, Y> F^{-1}_{Y\vert D=1, Z=1}\left(1-\frac{p_a}{\mathbb E[D\vert Z=1]}\right)\right],\\
\mu_{F_{11a}^{UB}} &=& \mathbb E\left[Y\vert D=1, Z=1, Y< F^{-1}_{Y\vert D=1, Z=1}\left(\frac{p_a}{\mathbb E[D\vert Z=1]}\right)\right],\\
\mu_{F_{00n}^{LB}} &=& \mathbb E\left[Y\vert D=0, Z=0, Y> F^{-1}_{Y\vert D=0, Z=0}\left(1-\frac{p_{n}}{\mathbb E[1-D\vert Z=0]}\right)\right],\\
\mu_{F_{00n}^{UB}} &=& \mathbb E\left[Y\vert D=0, Z=0, Y< F^{-1}_{Y\vert D=0, Z=0}\left(\frac{p_{n}}{\mathbb E[1-D\vert Z=0]}\right)\right].
\end{eqnarray*}
\end{corollary}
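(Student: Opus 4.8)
The plan is to read everything off the distributional bounds already established in Proposition~\ref{prop:noer} and then pass from distributions to means. Since Proposition~\ref{prop:noer} point-identifies the type proportions and the conditional laws $F_{10a}$ and $F_{01n}$ (with $F_{10a}(y)=\mathbb{P}(Y\le y\mid D=1,Z=0)$ and $F_{01n}(y)=\mathbb{P}(Y\le y\mid D=0,Z=1)$), taking expectations immediately yields $\mu_{10a}=\mathbb{E}[Y\mid D=1,Z=0]$ and $\mu_{01n}=\mathbb{E}[Y\mid D=0,Z=1]$, together with $p_c$, $p_a$, $p_n$, and $p_{df}=0$.

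For the always-takers' mean $\mu_{11a}$, I would start from the pointwise sharp envelope $F_{11a}^{LB}(y)\le F_{11a}(y)\le F_{11a}^{UB}(y)$ of Proposition~\ref{prop:noer}. The mean functional is order-reversing with respect to this pointwise (first-order stochastic dominance) order: a pointwise larger cdf is stochastically smaller and hence has a weakly smaller mean. This delivers the outer inequalities $\mu_{F_{11a}^{UB}}\le\mu_{11a}\le\mu_{F_{11a}^{LB}}$, and symmetrically $\mu_{F_{00n}^{UB}}\le\mu_{00n}\le\mu_{F_{00n}^{LB}}$.

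The substance is sharpness, where I would invoke the Horowitz--Manski trimming logic: the extreme envelopes are themselves feasible. Substituting $F_{11a}^{LB}$ into the identity $F_{11c}(y)=\{\mathbb{P}(Y\le y,D=1\mid Z=1)-p_aF_{11a}(y)\}/p_c$ from Proposition~\ref{prop:noer} produces $F_{11c}(y)=\min\{\mathbb{P}(Y\le y,D=1\mid Z=1)/p_c,\,1\}$, a bona fide cdf, and substituting $F_{11a}^{UB}$ likewise yields a valid $F_{11c}$; hence the endpoints $\mu_{F_{11a}^{LB}}$ and $\mu_{F_{11a}^{UB}}$ are attained by admissible data-generating processes. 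Because $F_{11c}$ is affine in $F_{11a}$, any convex combination $F_\lambda=\lambda F_{11a}^{LB}+(1-\lambda)F_{11a}^{UB}$ stays in the envelope and keeps $F_{11c}$ a valid cdf, so every intermediate mean $\lambda\mu_{F_{11a}^{LB}}+(1-\lambda)\mu_{F_{11a}^{UB}}$ is attained, filling the interval. The never-takers' case is identical after the relabeling implicit in Equation~\eqref{eq:noer4}.

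For $\mu_{11c}$ and $\mu_{00c}$, I would integrate the mixture identities~\eqref{eq:noer1} and~\eqref{eq:noer4} against $y$ over the full support (equivalently, compute $\mathbb{E}[Y\,\mathbbm{1}\{D=1\}\mid Z=1]$ and $\mathbb{E}[Y\,\mathbbm{1}\{D=0\}\mid Z=0]$), obtaining $\mathbb{E}[YD\mid Z=1]=p_c\mu_{11c}+p_a\mu_{11a}$ and $\mathbb{E}[Y(1-D)\mid Z=0]=p_c\mu_{00c}+p_n\mu_{00n}$; solving gives the stated expressions, so $\mu_{11c}$ and $\mu_{00c}$ are pinned down once $\mu_{11a}$ and $\mu_{00n}$ are, inheriting their sharp ranges. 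Finally, the closed forms under continuous, strictly increasing cdfs follow exactly as in the proof of Lemma~\ref{lem:lee2009}: the support of $F_{11a}^{LB}$ is the upper tail $\{y> F^{-1}_{Y\mid D=1,Z=1}(1-p_a/\mathbb{E}[D\mid Z=1])\}$, on which its density equals $f_{Y,D\mid Z}(y,1\mid 1)/p_a$, and Bayes' rule converts the resulting integral into the truncated conditional expectation, with the remaining three formulas obtained by the same computation. The only delicate step is the sharpness verification, since there one must confirm both the feasibility of the trimming endpoints and that the entire interval of means, not merely its endpoints, is achievable.
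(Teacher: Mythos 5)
Your proposal is correct and follows essentially the same route as the paper, which simply derives the corollary from the distributional bounds in Proposition~\ref{prop:noer} together with the closed-form mean expressions in Lemma~\ref{lem:lee2009}; your passage from cdf envelopes to means via first-order stochastic dominance, the verification that the trimmed endpoints leave $F_{11c}$ a valid cdf, and the convex-combination argument filling the interval are exactly the details the paper leaves implicit. No gaps.
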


\begin{proof}
    The results of this corollary hold from Proposition \ref{prop:noer} and Lemma \ref{lem:lee2009}. 
\end{proof}

\subsection{Proof of Corollary \ref{cor:noer2}}

\begin{proof}
   By definition, $\delta_{1a}=\mu_{11a}-\mu_{10a}$, $\delta_{0n}=\mu_{01n}-\mu_{00n}$, $\mu_{11c}-\mu_{00c}=\theta_{0c}+\delta_{1c}=\theta_{1c}+\delta_{0c}$. Therefore, from Corollary \ref{cor:noer1}, the results hold. 
\end{proof}

\section{Binary Outcomes}\label{binaryoutcome}
In this subsection, we provide bounds on the same causal parameters when the outcome of interest is binary. 
Equation \eqref{test:imp} becomes
\begin{eqnarray}\label{test:impbinary}
    \max_{d\in \{0,1\}}\left\{\max_{z\in\{0,1\}} \mathbb P(Y=0,D=d \vert Z=z) +\max_{z\in\{0,1\}} \mathbb P(Y=1,D=d \vert Z=z)\right\} \leq 1.
\end{eqnarray}
The bounds in Proposition \ref{prop_pdf} become
\small \begin{eqnarray}\label{prop21binary}
\Theta_{I}(p_{df}) = \left\{\begin{array}{lll}
& \bigg[\max \Big\{\max _{s \in\{0,1\}}\{\mathbb{P}(Y=0, D=s  \vert  Z=1-s)-\mathbb{P}(Y=0, D=s  \vert  Z=s),\\ &\mathbb{P}(Y=1, D=s  \vert  Z=1-s)-\mathbb{P}(Y=1, D=s  \vert  Z=s),\\ 
&\mathbb{P}(D=s  \vert  Z=1-s)-\mathbb{P}(D=s  \vert  Z=s)\}, 0 \Big\}, \\
& \qquad \qquad \qquad \min \{\mathbb{E}[D  \vert  Z=0], \mathbb{E}[1-D  \vert  Z=1]\}\bigg], \text{ if inequality } \eqref{test:impbinary} \text{ holds } \\
& \emptyset, \text{ otherwise}.
\end{array}\right.
\end{eqnarray}




In the main text, we have shown that for any Borel set A, 
\begin{eqnarray*}
   && \max\left\{\frac{\mathbb P(Y\in A, D=1 \vert Z=1)- p_{c}}{p_a}, \frac{\mathbb P(Y\in A, D=1 \vert Z=0)- p_{df}}{p_a},0\right\}\\
   && \qquad \qquad \leq \mathbb P(Y_1 \in A \vert T=a) \leq \\
   && \min\left\{\frac{\mathbb P(Y\in A, D=1 \vert Z=1)}{p_a}, \frac{\mathbb P(Y\in A, D=1 \vert Z=0)}{p_a},1\right\}.   
   \end{eqnarray*}
   Therefore, for $A=\{1\}$, we have
   \begin{eqnarray*}
   && \max\left\{\frac{\mathbb P(Y=1, D=1 \vert Z=1)- p_{c}}{p_a}, \frac{\mathbb P(Y=1, D=1 \vert Z=0)- p_{df}}{p_a},0\right\}\\
   && \qquad \qquad \leq \mu_{1a}(p_a) \leq \\
   && \min\left\{\frac{\mathbb P(Y=1, D=1 \vert Z=1)}{p_a}, \frac{\mathbb P(Y=1, D=1 \vert Z=0)}{p_a},1\right\}.   
   \end{eqnarray*}
   By a similar argument, we have
\begin{eqnarray*}
   && \max\left\{\frac{\mathbb P(Y=1, D=0 \vert Z=0)- p_{c}}{p_n}, \frac{\mathbb P(Y=1, D=0 \vert Z=1)- p_{df}}{p_n},0\right\}\\
   && \qquad \qquad \leq \mu_{0n}(p_n) \leq \\
   && \min\left\{\frac{\mathbb P(Y=1, D=0 \vert Z=0)}{p_n}, \frac{\mathbb P(Y=1, D=0 \vert Z=1)}{p_n},1\right\}.   
   \end{eqnarray*}

\section{Estimation and inference} \label{esti_inf}

\subsection{Estimation and inference under random assignment and exclusion restriction}

Our identification results rely on the observable data $(Y, D, Z)$ equipped with the probability measure $\mathbb{P}$. In this section, we provide a brief introduction of the estimation and inference methods based on a sample of size $n$, consisting of independently and identically distributed observations $\{(Y_i, D_i, Z_i)\}_{i=1}^n$ drawn from $\mathbb{P}$.

First, we aim to estimate the identified set for the proportion of defiers, as characterized in Proposition \ref{prop_pdf}. We can estimate the upper bound $UB(p_{df}) \equiv \min \{\mathbb{E}[D \mid Z=0], \mathbb{E}[1-D \mid Z=1]\}$ by 
\begin{equation*}
    \begin{aligned}
        & \widehat{UB}(p_{df}) \equiv \min \{\widehat{\mathbb{E}}[D \mid Z=0], \widehat{\mathbb{E}}[1-D \mid Z=1]\}, \\
        &\widehat{\mathbb{E}}[D \mid Z=0] = \frac{\sum_{i=1}^{n}D_i (1-Z_i)}{\sum_{i=1}^{n}(1-Z_i)}, \widehat{\mathbb{E}}[1-D \mid Z=1] = \frac{\sum_{i=1}^{n}(1-D_i) Z_i}{\sum_{i=1}^{n} Z_i}.
    \end{aligned}
\end{equation*}
It is straightforward to verify that $\widehat{\mathbb{E}}[D \mid Z=0]$ and $\widehat{\mathbb{E}}[1 - D \mid Z=1]$ are consistent estimators of $\mathbb{E}[D \mid Z=0]$ and $\mathbb{E}[1 - D \mid Z=1]$ by the law of large numbers. Consistency of $\widehat{UB}(p_{df})$ for $UB(p_{df})$ follows from the continuous mapping theorem, as the minimum function $\min\{\cdot, \cdot\}$ is continuous everywhere on $\mathbb{R}^2$.

If the outcome $Y$ has finite support, $Y \in \{y_1, y_2, \cdots, y_t\}$, $t < \infty$, we can estimate the lower bound of defier's proportion as 
\begin{equation*}
    \begin{aligned}
        \widehat{LB}^{disc}(p_{df}) \equiv \max_{s \in \{0,1\}}\bigg\{ \max_{y \in \{y_1, y_2, \cdots, y_t\}} \bigg\{ & \frac{\sum_{i=1}^{n} \mathbbm{1} \{Y_i = y\} \mathbbm{1}\{D_i = s\} \mathbbm{1}\{Z_i = 1-s\}}{\mathbbm{1}\{Z_i = 1-s\}} \\
        &- \frac{\sum_{i=1}^{n} \mathbbm{1} \{Y_i = y\} \mathbbm{1}\{D_i = s\} \mathbbm{1}\{Z_i = s\}}{\mathbbm{1}\{Z_i = s\}} \bigg\} \bigg\}.
    \end{aligned}
\end{equation*}
According to the law of large number and the continuous mapping theorem, $\widehat{LB}^{disc}(p_{df})$ is a consistent estimator of $LB(p_{df}) \equiv \max _{s \in\{0,1\}}\big\{\sup _A\{\mathbb{P}(Y \in A, D=s \mid Z=1-s)-\mathbb{P}(Y \in A, D=s \mid Z=s)\}\big\}$.

If the outcome $Y$ is continuous, taking the supremum over all Borel sets $A$ in the support of $Y$ becomes intractable. To address this, we partition the support of $Y$ into $d$ disjoint intervals, denoted as $I_1, \dots, I_d$, and define the lower bound for the proportion of defiers based on this partition as
\begin{equation*}
    LB^{part}(p_{df}) \equiv \max_{A \in \mathcal{A}^{part}}\{\mathbb{P}(Y \in A, D=s \mid Z=1-s)-\mathbb{P}(Y \in A, D=s \mid Z=s)\},
\end{equation*}
where $\mathcal{A}^{part}$ is the finite $\sigma$-algebra generated by partitioning intervals $I_1, \cdots, I_d$, and $d$ is a fixed number specified by the researcher. Since $\mathcal{A}^{part}$ is a subset of the Borel $\sigma$-algebra generated by $Y$, it follows that $LB^{part}(p_{df}) \leq LB(p_{df})$. As a result, when employing $LB^{part}(p_{df})$, the estimated lower bound on the proportion of defiers remains valid, though the sharpness of the bound may be compromised. Increasing the number of partition intervals can improve the approximation, making $LB^{part}(p_{df})$ closer to the sharp lower bound $LB(p_{df})$. However, partitioning the support into a large number of intervals, especially when the number of intervals is large relative to the sample size or grows with the sample size, can introduce inference problems associated with many moment inequalities. For example, the confidence intervals might be too conservative when correcting for the large number of moment inequalities using critical values based on max-statistics. Therefore, we recommend choosing the number of intervals $d$ sufficiently small relative to the sample size $n$. An estimator of $LB^{part}(p_{df})$ is 
\begin{equation*}
    \begin{aligned}
        \widehat{LB}^{part}(p_{df}) \equiv \max_{s \in \{0,1\}}\bigg\{ \max_{A \in \mathcal{A}^{part}} \bigg\{ & \frac{\sum_{i=1}^{n} \mathbbm{1} \{Y_i \in A\} \mathbbm{1}\{D_i = s\} \mathbbm{1}\{Z_i = 1-s\}}{\mathbbm{1}\{Z_i = 1-s\}} \\
        &- \frac{\sum_{i=1}^{n} \mathbbm{1} \{Y_i \in A\} \mathbbm{1}\{D_i = s\} \mathbbm{1}\{Z_i = s\}}{\mathbbm{1}\{Z_i = s\}} \bigg\} \bigg\},
    \end{aligned}
\end{equation*}
which is consistent based on the law of large number and the continuous mapping theorem. To conclude, we estimate the identified bound of defier's proportion as 
\begin{equation*}
    \widehat{\Theta}_I(p_{d f})=\left\{\begin{array}{l}
        \big[\widehat{LB}^{disc}(p_{df}), \widehat{UB}(p_{df}) \big] \quad \text {if $Y$ is discrete,} \\
        \big[\widehat{LB}^{part}(p_{df}), \widehat{UB}(p_{df})\big] \quad \text {if $Y$ is continuous.}
        \end{array}\right.
\end{equation*}

Next, we conduct inference on the identified bound for the proportion of defiers, which corresponds to an intersection of finitely many intervals when the number of partitions $d$ is sufficiently small. We employ the inference procedure in \cite{Nevo2012IdentificationInstruments}, which is specifically designed for identified bounds in the form of intersections of finite intervals. The inference steps include selecting the set of moment inequalities close to binding at the estimated boundary points of the identified set. The corresponding estimates are then adjusted using their standard errors, along with critical values based on the distribution of the maximum or minimum over the moments. For a detailed description of the inference procedure and the underlying asymptotic arguments, we refer to Section IV of \cite{Nevo2012IdentificationInstruments}. Implementing their method yields a confidence band, denoted as $CI(p_{df})$, uniformly valid over points in the identified bound of defier's proportion. 

The proportions of the remaining types can be identified as $p_a = \mathbb{P}(D = 1 \mid Z = 0) - p_{df}$, $p_n = \mathbb{P}(D=0 \mid Z=1)$, and $p_c = \mathbb{P}(D = 1 \mid Z = 1) - \mathbb{P}(D = 1 \mid Z = 0) + p_{df}$. Given these expressions, we estimate their respective identified bounds as
\begin{equation*}
    \begin{aligned}
        & \widehat{\Theta}_I(p_a) = \bigg[\frac{\sum_{i=1}^{n}D_i (1-Z_i)}{\sum_{i=1}^{n}(1-Z_i)} - \widehat{UB}(p_{df}), \frac{\sum_{i=1}^{n}D_i (1-Z_i)}{\sum_{i=1}^{n}(1-Z_i)} - \widehat{LB}(p_{df}) \bigg], \\
        & \widehat{\Theta}_I(p_n) = \bigg[\frac{\sum_{i=1}^{n}(1-D_i) Z_i}{\sum_{i=1}^{n}Z_i} - \widehat{UB}(p_{df}), \frac{\sum_{i=1}^{n}(1-D_i) Z_i}{\sum_{i=1}^{n}Z_i} - \widehat{LB}(p_{df}) \bigg], \\
        & \widehat{\Theta}_I(p_c) = \bigg[\frac{\sum_{i=1}^{n}D_i Z_i}{\sum_{i=1}^{n}Z_i} - \frac{\sum_{i=1}^{n}D_i (1-Z_i)}{\sum_{i=1}^{n}(1-Z_i)} + \widehat{LB}(p_{df}), \frac{\sum_{i=1}^{n}D_i Z_i}{\sum_{i=1}^{n}Z_i} - \frac{\sum_{i=1}^{n}D_i (1-Z_i)}{\sum_{i=1}^{n}(1-Z_i)} + \widehat{UB}(p_{df}) \bigg],
    \end{aligned}
\end{equation*}
where $\widehat{LB}(p_{df}) = \widehat{LB}^{disc}(p_{df})$ if $Y$ is discrete and $\widehat{LB}(p_{df}) = \widehat{LB}^{part}(p_{df})$ if $Y$ is continuous. It is straightforward that the proposed estimators of the bounds are consistent. We also apply the inference procedure in \cite{Nevo2012IdentificationInstruments} to conduct uniform inference on the proportion of each type.

In the application of \cite{bursztyn2020misperceived}, which investigates the effects of social norm perceptions on women's labor force participation in Saudi Arabia. The sample consists of 381 observations. The outcome is a continuous index measuring the long-term women labor force performance, so we equally partition the support of $Y$ into 20 intervals. Table \ref{tab:inf_type_prop_no_mon} presents the confidence bands for the proportions of all types.

\begin{table}
    \centering
    \caption{Estimation and Inference of proportion of each type under $A_2$}
      \begin{tabular}{crr}
      \toprule
      \multicolumn{1}{c}{Parameter} & \multicolumn{1}{c}{Estimated identified bounds} & \multicolumn{1}{c}{Confidence band} \\
      \midrule
      $p_a$ & [0, 0.1436] &  [0,  0.5369] \\
      $p_c$ & [0.1967, 0.3403] & [ 0,  0.3800] \\
      $p_n$ & [0.4123, 0.5559] & [ 0.3375,  1] \\ 
       $p_{df}$    & [0.1038, 0.2474] & [ 0,  0.2813]\\
      Observations & & 381 \\
      \bottomrule
      \end{tabular}%
    \label{tab:inf_type_prop_no_mon}%
\end{table}%

After estimating the type proportions, we proceed to estimate the parameters $\theta_{0 c}=\theta_{1 c}$ and $\theta_{0 d f}=\theta_{1 d f}$, which measure the local average direct effects of the treatment for compliers and defiers. Based on the identification results in Section \ref{sec:ra_er:ass}, we estimate the tractable valid outer set of $\hat{\theta}_{0 c}=\hat{\theta}_{1 c}$ as
\begin{equation*}
    \begin{aligned}
        \Bigg[ \max \Bigg\{& \frac{\sum_{i=1}^{n}Y_i D_i Z_i / \sum_{i=1}^{n} Z_i - \hat{p}_a \hat{\mu}_{F_{1a}^{LB_1}}(\hat{p}_a)}{\sum_{i=1}^{n} D_i Z_i / \sum_{i=1}^{n} Z_i-\hat{p}_a}-\frac{\sum_{i=1}^{n} Y_i (1-D_i) (1-Z_i) / \sum_{i=1}^{n}(1-Z_i)-\hat{p}_n \hat{\mu}_{F_{0n}^{UB_1}}\left(\hat{p}_n\right)}{\sum_{i=1}^{n} (1-D_i) (1-Z_i) / \sum_{i=1}^{n} (1-Z_i) - \hat{p}_n}, \\
         & \frac{\sum_{i=1}^{n}Y_i D_i Z_i / \sum_{i=1}^{n} Z_i - \hat{p}_a \hat{\mu}_{F_{1a}^{LB_1}}(\hat{p}_a)}{\sum_{i=1}^{n} D_i Z_i / \sum_{i=1}^{n} Z_i-\hat{p}_a}-\frac{\sum_{i=1}^{n} Y_i (1-D_i) (1-Z_i) / \sum_{i=1}^{n}(1-Z_i)-\hat{p}_n \hat{\mu}_{F_{0n}^{UB_0}}\left(\hat{p}_n\right)}{\sum_{i=1}^{n} (1-D_i) (1-Z_i) / \sum_{i=1}^{n} (1-Z_i) - \hat{p}_n}, \\ 
         & \frac{\sum_{i=1}^{n}Y_i D_i Z_i / \sum_{i=1}^{n} Z_i - \hat{p}_a \hat{\mu}_{F_{1a}^{LB_0}}(\hat{p}_a)}{\sum_{i=1}^{n} D_i Z_i / \sum_{i=1}^{n} Z_i-\hat{p}_a}-\frac{\sum_{i=1}^{n} Y_i (1-D_i) (1-Z_i) / \sum_{i=1}^{n}(1-Z_i)-\hat{p}_n \hat{\mu}_{F_{0n}^{UB_1}}\left(\hat{p}_n\right)}{\sum_{i=1}^{n} (1-D_i) (1-Z_i) / \sum_{i=1}^{n} (1-Z_i) - \hat{p}_n}, \\ 
         & \frac{\sum_{i=1}^{n}Y_i D_i Z_i / \sum_{i=1}^{n} Z_i - \hat{p}_a \hat{\mu}_{F_{1a}^{LB_0}}(\hat{p}_a)}{\sum_{i=1}^{n} D_i Z_i / \sum_{i=1}^{n} Z_i-\hat{p}_a}-\frac{\sum_{i=1}^{n} Y_i (1-D_i) (1-Z_i) / \sum_{i=1}^{n}(1-Z_i)-\hat{p}_n \hat{\mu}_{F_{0n}^{UB_0}}\left(\hat{p}_n\right)}{\sum_{i=1}^{n} (1-D_i) (1-Z_i) / \sum_{i=1}^{n} (1-Z_i) - \hat{p}_n} \Bigg\}, \\
         \min \Bigg\{& \frac{\sum_{i=1}^{n}Y_i D_i Z_i / \sum_{i=1}^{n} Z_i - \hat{p}_a \hat{\mu}_{F_{1a}^{UB_1}}(\hat{p}_a)}{\sum_{i=1}^{n} D_i Z_i / \sum_{i=1}^{n} Z_i-\hat{p}_a}-\frac{\sum_{i=1}^{n} Y_i (1-D_i) (1-Z_i) / \sum_{i=1}^{n}(1-Z_i)-\hat{p}_n \hat{\mu}_{F_{0n}^{LB_1}}\left(\hat{p}_n\right)}{\sum_{i=1}^{n} (1-D_i) (1-Z_i) / \sum_{i=1}^{n} (1-Z_i) - \hat{p}_n}, \\
         & \frac{\sum_{i=1}^{n}Y_i D_i Z_i / \sum_{i=1}^{n} Z_i - \hat{p}_a \hat{\mu}_{F_{1a}^{UB_1}}(\hat{p}_a)}{\sum_{i=1}^{n} D_i Z_i / \sum_{i=1}^{n} Z_i-\hat{p}_a}-\frac{\sum_{i=1}^{n} Y_i (1-D_i) (1-Z_i) / \sum_{i=1}^{n}(1-Z_i)-\hat{p}_n \hat{\mu}_{F_{0n}^{LB_0}}\left(\hat{p}_n\right)}{\sum_{i=1}^{n} (1-D_i) (1-Z_i) / \sum_{i=1}^{n} (1-Z_i) - \hat{p}_n}, \\ 
         & \frac{\sum_{i=1}^{n}Y_i D_i Z_i / \sum_{i=1}^{n} Z_i - \hat{p}_a \hat{\mu}_{F_{1a}^{UB_0}}(\hat{p}_a)}{\sum_{i=1}^{n} D_i Z_i / \sum_{i=1}^{n} Z_i-\hat{p}_a}-\frac{\sum_{i=1}^{n} Y_i (1-D_i) (1-Z_i) / \sum_{i=1}^{n}(1-Z_i)-\hat{p}_n \hat{\mu}_{F_{0n}^{LB_1}}\left(\hat{p}_n\right)}{\sum_{i=1}^{n} (1-D_i) (1-Z_i) / \sum_{i=1}^{n} (1-Z_i) - \hat{p}_n}, \\ 
         & \frac{\sum_{i=1}^{n}Y_i D_i Z_i / \sum_{i=1}^{n} Z_i - \hat{p}_a \hat{\mu}_{F_{1a}^{UB_0}}(\hat{p}_a)}{\sum_{i=1}^{n} D_i Z_i / \sum_{i=1}^{n} Z_i-\hat{p}_a}-\frac{\sum_{i=1}^{n} Y_i (1-D_i) (1-Z_i) / \sum_{i=1}^{n}(1-Z_i)-\hat{p}_n \hat{\mu}_{F_{0n}^{LB_0}}\left(\hat{p}_n\right)}{\sum_{i=1}^{n} (1-D_i) (1-Z_i) / \sum_{i=1}^{n} (1-Z_i) - \hat{p}_n} \Bigg\} \Bigg], \\
    \end{aligned}
\end{equation*}
and the tractable valid outer set of $\hat{\theta}_{0 df}=\hat{\theta}_{1 df}$ as 
\begin{equation*}
    \begin{aligned}
        \Bigg[ \max \Bigg\{& \frac{\sum_{i=1}^{n}Y_i D_i (1-Z_i) / \sum_{i=1}^{n} (1-Z_i) - \hat{p}_a \hat{\mu}_{F_{1a}^{LB_1}}(\hat{p}_a)}{\sum_{i=1}^{n} D_i (1-Z_i) / \sum_{i=1}^{n} (1-Z_i)-\hat{p}_a}-\frac{\sum_{i=1}^{n} Y_i (1-D_i) Z_i / \sum_{i=1}^{n}Z_i-\hat{p}_n \hat{\mu}_{F_{0n}^{UB_1}}\left(\hat{p}_n\right)}{\sum_{i=1}^{n} (1-D_i) Z_i/ \sum_{i=1}^{n} Z_i - \hat{p}_n}, \\
         & \frac{\sum_{i=1}^{n}Y_i D_i (1-Z_i) / \sum_{i=1}^{n} (1-Z_i) - \hat{p}_a \hat{\mu}_{F_{1a}^{LB_1}}(\hat{p}_a)}{\sum_{i=1}^{n} D_i (1-Z_i) / \sum_{i=1}^{n} (1-Z_i)-\hat{p}_a}-\frac{\sum_{i=1}^{n} Y_i (1-D_i) Z_i / \sum_{i=1}^{n}Z_i-\hat{p}_n \hat{\mu}_{F_{0n}^{UB_0}}\left(\hat{p}_n\right)}{\sum_{i=1}^{n} (1-D_i) Z_i/ \sum_{i=1}^{n} Z_i - \hat{p}_n}, \\ 
         & \frac{\sum_{i=1}^{n}Y_i D_i (1-Z_i) / \sum_{i=1}^{n} (1-Z_i) - \hat{p}_a \hat{\mu}_{F_{1a}^{LB_0}}(\hat{p}_a)}{\sum_{i=1}^{n} D_i (1-Z_i) / \sum_{i=1}^{n} (1-Z_i)-\hat{p}_a}-\frac{\sum_{i=1}^{n} Y_i (1-D_i) Z_i / \sum_{i=1}^{n}Z_i-\hat{p}_n \hat{\mu}_{F_{0n}^{UB_1}}\left(\hat{p}_n\right)}{\sum_{i=1}^{n} (1-D_i) Z_i/ \sum_{i=1}^{n} Z_i - \hat{p}_n}, \\ 
         & \frac{\sum_{i=1}^{n}Y_i D_i (1-Z_i) / \sum_{i=1}^{n} (1-Z_i) - \hat{p}_a \hat{\mu}_{F_{1a}^{LB_0}}(\hat{p}_a)}{\sum_{i=1}^{n} D_i (1-Z_i) / \sum_{i=1}^{n} (1-Z_i)-\hat{p}_a}-\frac{\sum_{i=1}^{n} Y_i (1-D_i) Z_i / \sum_{i=1}^{n}Z_i-\hat{p}_n \hat{\mu}_{F_{0n}^{UB_0}}\left(\hat{p}_n\right)}{\sum_{i=1}^{n} (1-D_i) Z_i/ \sum_{i=1}^{n} Z_i - \hat{p}_n} \Bigg\}, \\
         \min \Bigg\{& \frac{\sum_{i=1}^{n}Y_i D_i (1-Z_i) / \sum_{i=1}^{n} (1-Z_i) - \hat{p}_a \hat{\mu}_{F_{1a}^{UB_1}}(\hat{p}_a)}{\sum_{i=1}^{n} D_i (1-Z_i) / \sum_{i=1}^{n} (1-Z_i)-\hat{p}_a}-\frac{\sum_{i=1}^{n} Y_i (1-D_i) Z_i / \sum_{i=1}^{n}Z_i-\hat{p}_n \hat{\mu}_{F_{0n}^{LB_1}}\left(\hat{p}_n\right)}{\sum_{i=1}^{n} (1-D_i) Z_i/ \sum_{i=1}^{n} Z_i - \hat{p}_n}, \\
         & \frac{\sum_{i=1}^{n}Y_i D_i (1-Z_i) / \sum_{i=1}^{n} (1-Z_i) - \hat{p}_a \hat{\mu}_{F_{1a}^{UB_1}}(\hat{p}_a)}{\sum_{i=1}^{n} D_i (1-Z_i) / \sum_{i=1}^{n} (1-Z_i)-\hat{p}_a}-\frac{\sum_{i=1}^{n} Y_i (1-D_i) Z_i / \sum_{i=1}^{n}Z_i-\hat{p}_n \hat{\mu}_{F_{0n}^{LB_0}}\left(\hat{p}_n\right)}{\sum_{i=1}^{n} (1-D_i) Z_i/ \sum_{i=1}^{n} Z_i - \hat{p}_n}, \\ 
         & \frac{\sum_{i=1}^{n}Y_i D_i (1-Z_i) / \sum_{i=1}^{n} (1-Z_i) - \hat{p}_a \hat{\mu}_{F_{1a}^{UB_0}}(\hat{p}_a)}{\sum_{i=1}^{n} D_i (1-Z_i) / \sum_{i=1}^{n} (1-Z_i)-\hat{p}_a}-\frac{\sum_{i=1}^{n} Y_i (1-D_i) Z_i / \sum_{i=1}^{n}Z_i-\hat{p}_n \hat{\mu}_{F_{0n}^{LB_1}}\left(\hat{p}_n\right)}{\sum_{i=1}^{n} (1-D_i) Z_i/ \sum_{i=1}^{n} Z_i - \hat{p}_n}, \\ 
         & \frac{\sum_{i=1}^{n}Y_i D_i (1-Z_i) / \sum_{i=1}^{n} (1-Z_i) - \hat{p}_a \hat{\mu}_{F_{1a}^{UB_0}}(\hat{p}_a)}{\sum_{i=1}^{n} D_i (1-Z_i) / \sum_{i=1}^{n} (1-Z_i)-\hat{p}_a}-\frac{\sum_{i=1}^{n} Y_i (1-D_i) Z_i / \sum_{i=1}^{n}Z_i-\hat{p}_n \hat{\mu}_{F_{0n}^{LB_0}}\left(\hat{p}_n\right)}{\sum_{i=1}^{n} (1-D_i) Z_i/ \sum_{i=1}^{n} Z_i - \hat{p}_n} \Bigg\} \Bigg],
    \end{aligned}
\end{equation*}
given $(\hat{p}_a, \hat{p}_n)$ in the confidence band of identified set of types' proportions. If $ \hat{\mu}_{F_{1a}^{LB_z}}$, $\hat{\mu}_{F_{1a}^{UB_z}}$, $ \hat{\mu}_{F_{0n}^{LB_z}}$, $\hat{\mu}_{F_{0n}^{UB_z}}$ are consistent estimators of $\mu_{F_{1a}^{LB_z}}$, $\mu_{F_{1a}^{UB_z}}$, $\mu_{F_{0n}^{LB_z}}$, $\mu_{F_{0n}^{UB_z}}$, $z \in \{0,1\}$, in Lemma \ref{lem:lee2009}, then the proposed estimators consistently estimate the lower and upper bounds of the tractable valid outer sets by the law of large numbers and the continuous mapping theorem. We impose the estimators as
\begin{equation*}
    \begin{aligned}
        & \hat{\mu}_{F_{1a}^{LB_z}} = \frac{\sum_{i = 1}^{n} Y_i D_i \mathbbm{1} \{Z_i = z\} \mathbbm{1} \{Y_i > \hat{F}^{-1}_{Y \mid D = 1, Z = z}(1 - \hat{\alpha})\}}{\sum_{i = 1}^{n} D_i \mathbbm{1} \{Z_i = z\} \mathbbm{1} \{Y_i > \hat{F}^{-1}_{Y \mid D = 1, Z = z}(1 - \hat{\alpha})\}}, \\
        & \hat{\mu}_{F_{1a}^{UB_z}} = \frac{\sum_{i = 1}^{n} Y_i D_i \mathbbm{1} \{Z_i = z\} \mathbbm{1} \{Y_i < \hat{F}^{-1}_{Y \mid D = 1, Z = z}(\hat{\alpha})\}}{\sum_{i = 1}^{n} D_i \mathbbm{1} \{Z_i = z\} \mathbbm{1} \{Y_i < \hat{F}^{-1}_{Y \mid D = 1, Z = z}(\hat{\alpha})\}}, \\
        &\hat{\mu}_{F_{0n}^{LB_z}} = \frac{\sum_{i = 1}^{n} Y_i (1-D_i) \mathbbm{1} \{Z_i = z\} \mathbbm{1} \{Y_i > \hat{F}^{-1}_{Y \mid D = 0, Z = z}(1 - \hat{\gamma})\}}{\sum_{i = 1}^{n} (1-D_i) \mathbbm{1} \{Z_i = z\} \mathbbm{1} \{Y_i > \hat{F}^{-1}_{Y \mid D = 0, Z = z}(1 - \hat{\gamma})\}}, \\
        & \hat{\mu}_{F_{0n}^{UB_z}} = \frac{\sum_{i = 1}^{n} Y_i (1-D_i) \mathbbm{1} \{Z_i = z\} \mathbbm{1} \{Y_i < \hat{F}^{-1}_{Y \mid D = 0, Z = z}(\hat{\gamma})\}}{\sum_{i = 1}^{n} (1-D_i )\mathbbm{1} \{Z_i = z\} \mathbbm{1} \{Y_i < \hat{F}^{-1}_{Y \mid D = 0, Z = z}(\hat{\gamma})\}}, \\
        \text{where }& \hat{F}^{-1}_{Y \mid D = 1, Z = z}(q) = \min \left\{y: \frac{\sum_{i = 1}^{n}D_i \mathbbm{1} \{Z_i = z\} \mathbbm{1}\{Y_i \leq y\}}{\sum_{i = 1}^{n} D_i \mathbbm{1} \{Z_i = z\}} \geq q\right\}, \\
        & \hat{F}^{-1}_{Y \mid D = 0, Z = z}(q) = \min \left\{y: \frac{\sum_{i = 1}^{n}(1-D_i) \mathbbm{1} \{Z_i = z\} \mathbbm{1}\{Y_i \leq y\}}{\sum_{i = 1}^{n} (1-D_i) \mathbbm{1} \{Z_i = z\}} \geq q\right\}, \\
        & \hat{\alpha} = \frac{\hat{p_a}}{\sum_{i=1}^{n} D_i \mathbbm{1} \{Z_i = z\} / \sum_{i=1}^{n} \mathbbm{1} \{Z_i = z\}}, \\
        & \hat{\gamma} = \frac{\hat{p_n}}{\sum_{i=1}^{n} (1-D_i) \mathbbm{1} \{Z_i = z\} / \sum_{i=1}^{n} \mathbbm{1} \{Z_i = z\}},
    \end{aligned}
\end{equation*}
and Proposition 2 of \cite{Lee2009TrainingEffects} proves that the above estimators are consistent under the condition that the outcome variable $Y$ has bounded support.

The estimated outer sets of $\theta_{1c} = \theta_{0c}$ take the form of finite intersections of intervals. Therefore, we apply the inference procedure in \cite{Nevo2012IdentificationInstruments} to conduct uniform inference on these parameters. We apply our estimation and inference methods to the study in \cite{bursztyn2020misperceived}. Figure \ref{fig:inf_id_sets} presents the estimated identified bounds and the corresponding confidence bands for $\theta_{1c} = \theta_{0c}$ and $\theta_{1 df} = \theta_{0 df}$ given that $p_c$ and $p_{df}$ lie in their respective estimated identified sets.

\begin{figure}
	\centering
	\includegraphics[scale=0.7]{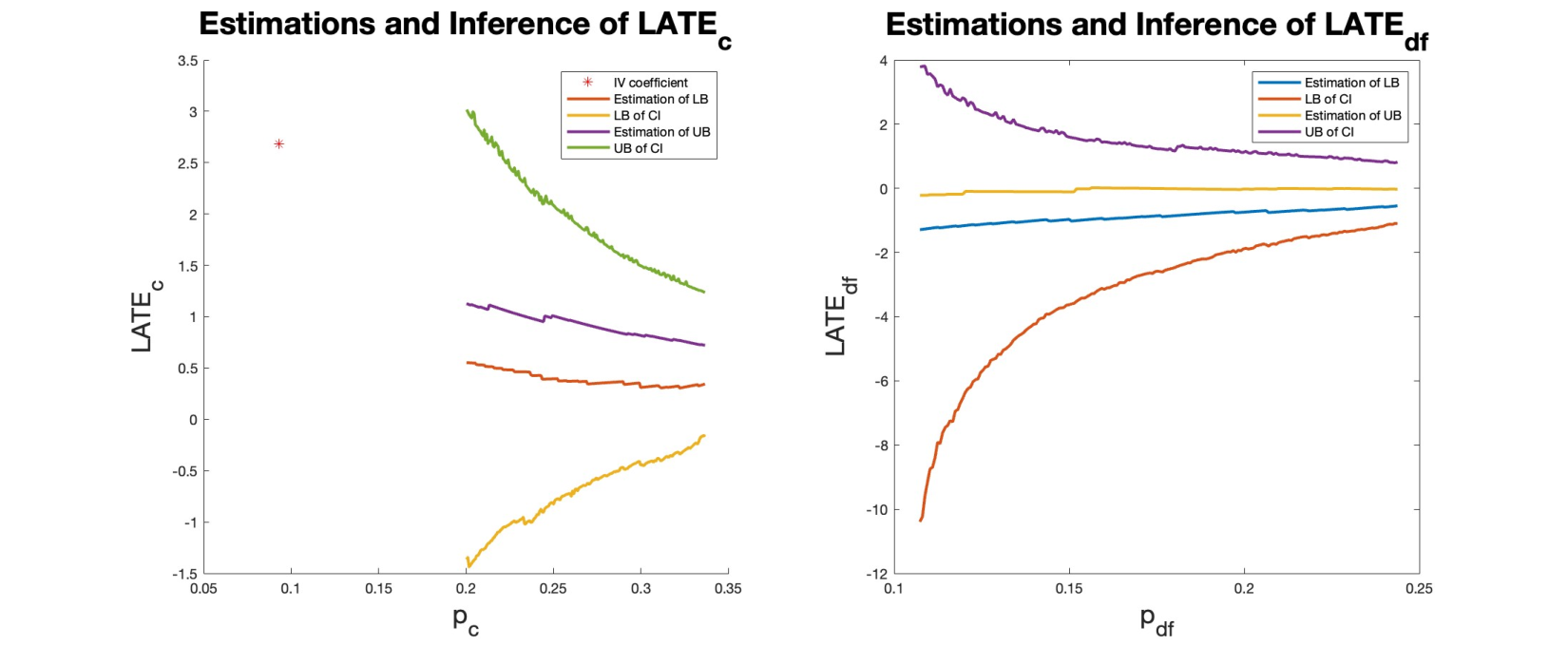}
	\caption{Estimated bounds and inference bands for LAICDEs under $A_2$ and IV estimand}
	\label{fig:inf_id_sets}
\end{figure}

\subsection{Estimation and inference under random assignment and monotonicity}
We can use the sample analog to estimate the bounds derived in Corollary \ref{cor:noer2}. First, we need to estimate the trimming probabilities $\alpha\equiv \frac{p_a}{\mathbb E[D\vert Z=1]}$ and $\gamma \equiv \frac{p_n}{\mathbb E[1-D\vert Z=0]}$ by 
\begin{equation*}
    \begin{aligned}
        & \hat{\alpha} = \frac{\hat{p}_a}{(\sum_{i = 1}^{n} D_i Z_i) / (\sum_{i = 1}^{n} Z_i)}, \\
        & \hat{\gamma} = \frac{\hat{p}_n}{(\sum_{i = 1}^{n} (1 - D_i) (1 - Z_i)) / (\sum_{i = 1}^{n} (1 - Z_i))}.
    \end{aligned}
\end{equation*}
Then, we can estimate the $\hat{\alpha}$th and $(1 - \hat{\alpha})$th quantile of the empirical distribution of $Y \mid D = 1, Z = 1$, and calculate the estimators of bounds on $\delta_{1a}$ as
\begin{equation*}
    \begin{aligned}
        & \widehat{\operatorname{LB}_1} = \frac{\sum_{i = 1}^{n} Y_i D_i Z_i \mathbbm{1} \{Y_i \leq \hat{F}^{-1}_{Y \mid D = 1, Z = 1}(\hat{\alpha})\}}{\sum_{i = 1}^{n} D_i Z_i \mathbbm{1} \{Y_i \leq \hat{F}^{-1}_{Y \mid D = 1, Z = 1}(\hat{\alpha})\}} - \frac{\sum_{i = 1}^{n} Y_i D_i (1 - Z_i)}{\sum_{i = 1}^{n} D_i (1 - Z_i)}, \\
        & \widehat{\operatorname{UB}_1} = \frac{\sum_{i = 1}^{n} Y_i D_i Z_i \mathbbm{1} \{Y_i > \hat{F}^{-1}_{Y \mid D = 1, Z = 1}(1 - \hat{\alpha})\}}{\sum_{i = 1}^{n} D_i Z_i \mathbbm{1} \{Y_i > \hat{F}^{-1}_{Y \mid D = 1, Z = 1}(1 - \hat{\alpha})\}} - \frac{\sum_{i = 1}^{n} Y_i D_i (1 - Z_i)}{\sum_{i = 1}^{n} D_i (1 - Z_i)}, \\
        & \hat{F}^{-1}_{Y \mid D = 1, Z = 1}(q) = \min \left\{y: \frac{\sum_{i = 1}^{n}D_i Z_i \mathbbm{1}\{Y_i \leq y\}}{\sum_{i = 1}^{n} D_i Z_i} \geq q\right\}.
    \end{aligned}
\end{equation*}
Repeating the same process, we can estimate the bounds on $\delta_{0n}$ as
\begin{equation*}
    \begin{aligned}
        & \widehat{\operatorname{LB}_2} = \frac{\sum_{i = 1}^{n} Y_i (1 - D_i) Z_i}{\sum_{i = 1}^{n} (1 - D_i) Z_i} - \frac{\sum_{i = 1}^{n} Y_i (1 - D_i) (1 - Z_i) \mathbbm{1} \{Y_i > \hat{F}^{-1}_{Y \mid D = 0, Z = 0}(1 - \hat{\gamma})\}}{\sum_{i = 1}^{n} (1 - D_i) (1 - Z_i) \mathbbm{1} \{Y_i > \hat{F}^{-1}_{Y \mid D = 0, Z = 0}(1 - \hat{\gamma})\}}, \\
        & \widehat{\operatorname{UB}_2} = \frac{\sum_{i = 1}^{n} Y_i (1 - D_i) Z_i}{\sum_{i = 1}^{n} (1 - D_i) Z_i} - \frac{\sum_{i = 1}^{n} Y_i (1 - D_i) (1 - Z_i) \mathbbm{1} \{Y_i \leq \hat{F}^{-1}_{Y \mid D = 0, Z = 0}(\hat{\gamma})\}}{\sum_{i = 1}^{n} (1 - D_i) (1 - Z_i) \mathbbm{1} \{Y_i \leq \hat{F}^{-1}_{Y \mid D = 0, Z = 0}(\hat{\gamma})\}}, \\
        & \hat{F}^{-1}_{Y \mid D = 0, Z = 0}(q) = \min \left\{y: \frac{\sum_{i = 1}^{n} (1 - D_i) (1 - Z_i) \mathbbm{1}\{Y_i \leq y\}}{\sum_{i = 1}^{n} (1 - D_i) (1 - Z_i)} \geq q\right\},
    \end{aligned}
\end{equation*}
and estimate the bounds on $\theta_{1c}+\delta_{0c}=\theta_{0c}+\delta_{1c}$ as
\begin{equation*}
    \begin{aligned}
        & \widehat{\operatorname{LB}_3} = \frac{\sum_{i = 1}^{n} Y_i D_i Z_i \mathbbm{1} \{Y_i \leq \hat{F}^{-1}_{Y \mid D = 1, Z = 1}(1 - \hat{\alpha})\}}{\sum_{i = 1}^{n} D_i Z_i \mathbbm{1} \{Y_i \leq \hat{F}^{-1}_{Y \mid D = 1, Z = 1}(1 - \hat{\alpha})\}} - \frac{\sum_{i = 1}^{n} Y_i (1 - D_i) (1 - Z_i) \mathbbm{1} \{Y_i > \hat{F}^{-1}_{Y \mid D = 0, Z = 0}(\hat{\gamma})\}}{\sum_{i = 1}^{n} (1 - D_i) (1 - Z_i) \mathbbm{1} \{Y_i > \hat{F}^{-1}_{Y \mid D = 0, Z = 0}(\hat{\gamma})\}}, \\
        & \widehat{\operatorname{UB}_3} = \frac{\sum_{i = 1}^{n} Y_i D_i Z_i \mathbbm{1} \{Y_i > \hat{F}^{-1}_{Y \mid D = 1, Z = 1}(\hat{\alpha})\}}{\sum_{i = 1}^{n} D_i Z_i \mathbbm{1} \{Y_i > \hat{F}^{-1}_{Y \mid D = 1, Z = 1}(\hat{\alpha})\}} - \frac{\sum_{i = 1}^{n} Y_i (1 - D_i) (1 - Z_i) \mathbbm{1} \{Y_i \leq \hat{F}^{-1}_{Y \mid D = 0, Z = 0}(1 - \hat{\gamma})\}}{\sum_{i = 1}^{n} (1 - D_i) (1 - Z_i) \mathbbm{1} \{Y_i \leq \hat{F}^{-1}_{Y \mid D = 0, Z = 0}(1 - \hat{\gamma})\}}.
    \end{aligned}
\end{equation*}
Proposition 2 in \cite{Lee2009TrainingEffects} proves that these estimate bounds are consistent.

In order to obtain the confidence intervals for bounds described in Corollary \ref{cor:noer2}, we have to know the asymptotic distributions of the estimated bound values. We apply Proposition 3 in \cite{Lee2009TrainingEffects} to derive the asymptotic behaviors of those estimated bound values. For the estimated bounds on $\delta_{1a}$, $\sqrt{N} (\widehat{\operatorname{LB}}_1 - \operatorname{LB}_1) \stackrel{d}{\rightarrow} N\left(0, V_{\operatorname{LB}_1} + V_{C_1}\right)$, $\sqrt{N} (\widehat{\operatorname{UB}}_1 - \operatorname{UB}_1) \stackrel{d}{\rightarrow} N\left(0, V_{\operatorname{UB}_1} + V_{C_1}\right)$, where
\begin{equation*}
    \begin{aligned}
        V_{\operatorname{LB}_1} =& \frac{V(Y \mid D = 1, Z = 1, Y \leq F^{-1}_{Y \mid D = 1, Z = 1} (\alpha))}{\mathbb{E}[D Z] \alpha} \\
        &+ \frac{(F^{-1}_{Y \mid D = 1, Z = 1}(\alpha) - \mathbb{E}[Y \mid D = 1, Z = 1, Y \leq F^{-1}_{Y \mid D = 1, Z = 1} (\alpha)])^2 (1 - \alpha)}{\mathbb{E}[DZ] \alpha} \\
        &+ \left(\frac{F^{-1}_{Y \mid D = 1, Z = 1} (\alpha) - \mathbb{E}[Y \mid D = 1, Z = 1, Y \leq F^{-1}_{Y \mid D = 1, Z = 1} (\alpha)]}{\alpha}\right)^2 V_{\alpha_1}, \\
        V_{\operatorname{UB}_1} =& \frac{V(Y \mid D = 1, Z = 1, Y > F^{-1}_{Y \mid D = 1, Z = 1} (1 - \alpha))}{\mathbb{E}[D Z] \alpha} \\
        &+ \frac{(F^{-1}_{Y \mid D = 1, Z = 1}(1 - \alpha) - \mathbb{E}[Y \mid D = 1, Z = 1, Y > F^{-1}_{Y \mid D = 1, Z = 1} (1 - \alpha)])^2 (1 - \alpha)}{\mathbb{E}[DZ] \alpha} \\
        &+ \left(\frac{F^{-1}_{Y \mid D = 1, Z = 1} (1 - \alpha) - \mathbb{E}[Y \mid D = 1, Z = 1, Y > F^{-1}_{Y \mid D = 1, Z = 1} (1 - \alpha)]}{\alpha}\right)^2 V_{\alpha_1}, \\
        V_{\alpha_1} =& \alpha^2 \left[\frac{1 - p_a / \alpha}{\mathbb{E}[Z] p_a / \alpha} + \frac{1 - p_a}{(1 - \mathbb{E}[Z]) p_a}\right], \\
        V_{C_1} =& \frac{V(Y \mid D = 1, Z = 0)}{\mathbb{E}[D (1 - Z)]}.
    \end{aligned}
\end{equation*}
The estimators of bounds on $\delta_{0n}$ satisfy $\sqrt{N} (\widehat{\operatorname{LB}}_2 - \operatorname{LB}_2) \stackrel{d}{\rightarrow} N\left(0, V_{\operatorname{LB}_2} + V_{C_2}\right)$, $\sqrt{N} (\widehat{\operatorname{UB}}_2 - \operatorname{UB}_2) \stackrel{d}{\rightarrow} N\left(0, V_{\operatorname{UB}_2} + V_{C_2}\right)$, where 
\begin{equation*}
    \begin{aligned}
        V_{\operatorname{LB}_2} =& \frac{V(Y \mid D = 0, Z = 0, Y > F^{-1}_{Y \mid D = 0, Z = 0} (1 - \gamma))}{\mathbb{E}[(1 - D)(1 - Z)] \gamma} \\
        &+ \frac{(F^{-1}_{Y \mid D = 0, Z = 0} (1 - \gamma) - \mathbb{E}[Y \mid D = 0, Z = 0, Y > F^{-1}_{Y \mid D = 0, Z = 0} (1 - \gamma)])^2 (1 - \gamma)}{\mathbb{E}[(1 - D)(1 - Z)] \gamma} \\
        &+ \left(\frac{F^{-1}_{Y \mid D = 0, Z = 0} (1 - \gamma) - \mathbb{E}[Y \mid D = 0, Z = 0, Y > F^{-1}_{Y \mid D = 0, Z = 0} (1 - \gamma)]}{\gamma}\right)^2 V_{\gamma_1}, \\
        V_{\operatorname{UB}_2} =& \frac{V(Y \mid D = 0, Z = 0, Y \leq F^{-1}_{Y \mid D = 0, Z = 0} (\gamma))}{\mathbb{E}[(1 - D)(1 - Z)] \gamma} \\
        &+ \frac{(F^{-1}_{Y \mid D = 0, Z = 0}(\gamma) - \mathbb{E}[Y \mid D = 0, Z = 0, Y \leq F^{-1}_{Y \mid D = 0, Z = 0} (\gamma)])^2 (1 - \gamma)}{\mathbb{E}[(1 - D)(1 - Z)] \gamma} \\
        &+ \left(\frac{F^{-1}_{Y \mid D = 0, Z = 0}(\gamma) - \mathbb{E}[Y \mid D = 0, Z = 0, Y \leq F^{-1}_{Y \mid D = 0, Z = 0} (\gamma)]}{\gamma}\right)^2 V_{\gamma_1}, \\
    \end{aligned}
\end{equation*}
\begin{equation*}
    \begin{aligned}
        V_{\gamma_1} =& \gamma^2 \left[\frac{1 - p_n / \gamma}{\mathbb{E}[(1 - Z)] p_n / \gamma} + \frac{1 - p_n}{\mathbb{E}[Z] p_n}\right], \\
        V_{C_2} =& \frac{V(Y \mid D = 0, Z = 1)}{\mathbb{E}[(1 - D) Z]}.
    \end{aligned}
\end{equation*}

Finally, the estimators of bounds on $\delta_{1c}+\theta_{0c}=\delta_{0c}+\theta_{1c}$ have the asymptotic distributions as $\sqrt{N} (\widehat{\operatorname{LB}}_3 - \operatorname{LB}_3) \stackrel{d}{\rightarrow} N\left(0, V_{\operatorname{LB}_3} + V_{\operatorname{LB}_4}\right)$, $\sqrt{N} (\widehat{\operatorname{UB}}_3 - \operatorname{UB}_3) \stackrel{d}{\rightarrow} N\left(0, V_{\operatorname{UB}_3} + V_{\operatorname{UB}_4}\right)$, where 
\begin{equation*}
    \begin{aligned}
        V_{\operatorname{LB}_3} =& \frac{V(Y \mid D = 1, Z = 1, Y \leq F^{-1}_{Y \mid D = 1, Z = 1} (1 - \alpha))}{\mathbb{E}[D Z] (1 - \alpha)} \\
        &+ \frac{(F^{-1}_{Y \mid D = 1, Z = 1}(1 - \alpha) - \mathbb{E}[Y \mid D = 1, Z = 1, Y \leq F^{-1}_{Y \mid D = 1, Z = 1} (1 - \alpha)])^2 \alpha}{\mathbb{E}[DZ] (1 - \alpha)} \\
        &+ \left[\frac{F^{-1}_{Y \mid D = 1, Z = 1}(1 - \alpha) - \mathbb{E}[Y \mid D = 1, Z = 1, Y \leq F^{-1}_{Y \mid D = 1, Z = 1} (1 - \alpha)]}{(1 - \alpha)}\right]^2 V_{\alpha_2}, \\
        V_{\operatorname{LB}_4} =& \frac{V(Y \mid D = 0, Z = 0, Y > F^{-1}_{Y \mid D = 0, Z = 0} (\gamma))}{\mathbb{E}[(1 - D)(1 - Z)] (1 - \gamma)} \\
        &+ \frac{(F^{-1}_{Y \mid D = 0, Z = 0} (\gamma) - \mathbb{E}[Y \mid D = 0, Z = 0, Y > F^{-1}_{Y \mid D = 0, Z = 0} (\gamma)])^2 \gamma}{\mathbb{E}[(1 - D)(1 - Z)] (1 - \gamma)} \\
        &+ \left[\frac{F^{-1}_{Y \mid D = 0, Z = 0} (\gamma) - \mathbb{E}[Y \mid D = 0, Z = 0, Y > F^{-1}_{Y \mid D = 0, Z = 0} (\gamma)]}{1 - \gamma}\right]^2 V_{\gamma_2}, \\
        V_{\operatorname{UB}_3} =& \frac{V(Y \mid D = 1, Z = 1, Y > F^{-1}_{Y \mid D = 1, Z = 1} (\alpha))}{\mathbb{E}[D Z] (1 - \alpha)} \\
        &+ \frac{(F^{-1}_{Y \mid D = 1, Z = 1}(\alpha) - \mathbb{E}[Y \mid D = 1, Z = 1, Y > F^{-1}_{Y \mid D = 1, Z = 1} (\alpha)])^2 \alpha}{\mathbb{E}[DZ] (1 - \alpha)} \\
        &+ \left(\frac{F^{-1}_{Y \mid D = 1, Z = 1} (\alpha) - \mathbb{E}[Y \mid D = 1, Z = 1, Y > F^{-1}_{Y \mid D = 1, Z = 1} (\alpha)]}{1 - \alpha}\right)^2 V_{\alpha_2}, \\
        V_{\operatorname{UB}_4} =& \frac{V(Y \mid D = 0, Z = 0, Y \leq F^{-1}_{Y \mid D = 0, Z = 0} (1 - \gamma))}{\mathbb{E}[(1 - D)(1 - Z)] (1 - \gamma)} \\
        &+ \frac{(F^{-1}_{Y \mid D = 0, Z = 0}(1 - \gamma) - \mathbb{E}[Y \mid D = 0, Z = 0, Y \leq F^{-1}_{Y \mid D = 0, Z = 0} (1 - \gamma)])^2 \gamma}{\mathbb{E}[(1 - D)(1 - Z)] (1 - \gamma)} \\
        &+ \left[\frac{F^{-1}_{Y \mid D = 0, Z = 0}(1 - \gamma) - \mathbb{E}[Y \mid D = 0, Z = 0, Y \leq F^{-1}_{Y \mid D = 0, Z = 0} (1 - \gamma)]}{1 - \gamma}\right]^2 V_{\gamma_2}, \\
        V_{\alpha_2} =& (1 - \alpha)^2 \left[\frac{1 - p_a / (1 - \alpha)}{\mathbb{E}[Z] p_a / (1 - \alpha)} + \frac{1 - p_a}{(1 - \mathbb{E}[Z]) p_a}\right], \\
        V_{\gamma_2} =& (1 - \gamma)^2 \left[\frac{1 - p_n / (1 - \gamma)}{\mathbb{E}[(1 - Z)] p_n / (1 - \gamma)} + \frac{1 - p_n}{\mathbb{E}[Z] p_n}\right]. 
    \end{aligned}
\end{equation*}
For all variances derived in the asymptotic distributions, we can estimate them as their sample analogs.

Since we care about the true value of the average controlled effects instead of their identified sets, we apply the method proposed by \cite{imbens2004confidence} to construct the confidence intervals for identified bounds in Corollary \ref{cor:noer2}. The 95\% confidence intervals are constructed as $[\widehat{LB}_i - \bar{C}_i \hat{\sigma}_{LB_i} / \sqrt{N}, \widehat{UB}_i + \bar{C}_i \hat{\sigma}_{UB_i} / \sqrt{N}]$, $i = 1, 2, 3$, where $\hat{\sigma}_{LB_i}$ and $\hat{\sigma}_{UB_i}$ are estimated standard deviations of the corresponding lower bounds and upper bounds, and $\bar{C}_i$ satisfies 
\begin{equation*}
    \Phi\left(\bar{C}_i +\sqrt{N} \frac{\widehat{UB}_i - \widehat{LB}_i}{\max \left(\hat{\sigma}_{LB_i}, \hat{\sigma}_{UB_i}\right)}\right) - \Phi\left(-\bar{C}_i\right)=0.95.
\end{equation*}

We revisit the empirical study in \cite{bursztyn2020misperceived} with imposing the assumptions of random assignment and monotonicity. Using the estimation and inference methods in this section, we estimate the identified sets and construct uniform confidence intervals for the parameters $\delta_{1a}, \delta_{0n}$, and $\delta_{1c} + \theta_{0c}$. The results are presented in Table \ref{tab:inf_bounds_no_er}.
\begin{table}
    \centering
    \caption{Estimated identified sets and confidence sets under $A_3$}
      \begin{tabular}{lrrr}
      \toprule
        Parameter    & \multicolumn{1}{c}{Estimated Bounds}  & \multicolumn{1}{c}{Confidence Sets} \\
      \midrule
      $\delta_{1a}$ \quad \quad \quad \quad & [0.0557, 0.7155] \quad \quad \quad \quad &[-0.0834, 1.0184] \\
      $\delta_{0n}$ \quad \quad \quad \quad & [0.0829, 0.2505] \quad \quad \quad \quad & [-0.0135, 0.3445] \\
      $\delta_{1c}+\theta_{0c}$ \quad \quad \quad \quad& [-0.7140, 1.9242] \quad \quad \quad \quad & [-0.9048, 2.4797] \\
      \bottomrule
      \end{tabular}%
    \label{tab:inf_bounds_no_er}%
\end{table}%

\end{document}